\newcommand{\algprobm}[1]{\textsc{#1}\xspace}
\newcommand{\betacc}[1]{\ifthenelse{\equal{#1}{1}}{\exists^{\log n}}{\exists^{\log^{#1}n}}} 
\newcommand{\alphacc}[1]{\ifthenelse{\equal{#1}{1}}{\forall^{\log n}}{\forall^{\log^{#1}n}}} 
\newcommand{\Soc}{\text{Soc}}
\def\dis#1{\mathrm{Dis}(#1)}
\def\ldiv{\backslash}
\def\rdiv{/}
\renewcommand{\setminus}{\mysetminus}
\newcommand{\mysetminusD}{\raisebox{.8pt}{\hbox{\tikz{\draw[line width=0.6pt,line cap=round] (3.5pt,0pt) -- (0,5.2pt);}}}}
\newcommand{\mysetminusT}{\mysetminusD}
\newcommand{\mysetminusS}{\raisebox{.5pt}{\hbox{\tikz{\draw[line width=0.45pt,line cap=round] (2.2pt,0) -- (0,3.8pt);}}}}
\newcommand{\mysetminusSS}{\raisebox{.35pt}{\hbox{\tikz{\draw[line width=0.4pt,line cap=round] (1.5pt,0) -- (0,2.8pt);}}}}
\newcommand{\mysetminus}{\mathbin{\mathchoice{\mysetminusD}{\mysetminusT}{\mysetminusS}{\mysetminusSS}}}
\theoremstyle{plain}
\newtheorem{theorem}{Theorem}[section]
\newtheorem{proposition}[theorem]{Proposition}
\newtheorem{corollary}[theorem]{Corollary}
\newtheorem{lemma}[theorem]{Lemma}
\theoremstyle{definition}
\newtheorem{definition}[theorem]{Definition}
\newtheorem{remark}[theorem]{Remark}
\newtheorem{question}[theorem]{Question}
\newcommand{\wt}{\text{wt}}
\DeclareMathOperator{\Aut}{Aut}
\DeclareMathOperator{\poly}{poly}
\DeclareMathOperator{\rk}{rk}
\newcommand*{\ComplexityClass}[1]{\ensuremath{\mathsf{#1}}\xspace}
\newcommand{\DTISP}{\ComplexityClass{DTISP}}
\newcommand{\DTISPpll}{\ensuremath{\DTISP(\mathrm{polylog}(n),\mathrm{log}(n))}\xspace}
\newcommand*{\LogSpace}{\ComplexityClass{L}}
\newcommand{\AC}{\ComplexityClass{AC}}
\newcommand{\ACz}{\ComplexityClass{AC^0}}
\newcommand{\FOLL}{\ComplexityClass{FOLL}}
\title{On the Parallel Complexity of Identifying Groups and Quasigroups via Decompositions}
\author[1]{Dan Johnson}
\author[1]{Michael Levet}
\author[2]{Petr Vojtěchovsk\'y}
\author[1]{Brett Widholm}
\affil[1]{Department of Computer Science, College of Charleston}
\affil[2]{Department of Mathematics, University of Denver}
\begin{document}
\maketitle
\begin{abstract}
In this paper, we investigate the computational complexity of isomorphism testing for finite groups and quasigroups, given by their multiplication tables. We crucially take advantage of their various decompositions to show the following:
\begin{itemize}
\item We first consider the class of groups that admit direct product decompositions, where each indecompsable factor is $O(1)$-generated, and either perfect or centerless. We show any group in this class is identified by the $O(1)$-dimensional count-free Weisfeiler--Leman (WL) algorithm with $O(\log \log n)$ rounds, and the $O(1)$-dimensional counting WL algorithm with $O(1)$ rounds. Consequently, the isomorphism problem for this class is in $\textsf{L}$. This improves upon the  previous upper bound of $\textsf{TC}^{1}$, which was obtained using $O(\log n)$ rounds of the $O(1)$-dimensional counting WL (Grochow and Levet; FCT 2023, \textit{J. Comput. Syst. Sci.} 2026).

\item We next consider more generally, the class of groups where each indecomposable factor is $O(1)$-generated. We exhibit an $\textsf{AC}^{3}$ canonical labeling procedure for this class. Here, we accomplish this by showing that in the multiplication table model, the direct product decomposition can be computed in $\textsf{AC}^{3}$, parallelizing the work of Kayal and Nezhmetdinov (ICALP 2009).

\item Isomorphism testing between a central quasigroup $G$ and an arbitrary quasigroup $H$ is in $\textsf{NC}$. Here, we take advantage of the fact that central quasigroups admit an affine decomposition in terms of an underlying Abelian group. Only the trivial bound of $n^{\log(n)+O(1)}$-time was previously known for isomorphism testing of central quasigroups.
\end{itemize}
\end{abstract}

\thispagestyle{empty}

\newpage

\setcounter{page}{1}

\section{Introduction}

The \algprobm{Quasigroup Isomorphism} (\algprobm{QGpI}) problem takes as input two finite quasigroups $G, H$ and asks if there exists an isomorphism $\varphi : G \to H$. When the quasigroups are given by their multiplication (Cayley) tables, it is known that $\algprobm{QGpI}$ belongs to $\textsf{NP} \cap \textsf{coAM}$ (see e.g., \cite{GoldreichMicaliWidgerson}). The generator-enumeration strategy has time complexity $n^{\log(n) + O(1)}$, where $n$ is the order of the group. In the setting of groups, this strategy was independently discovered by Felsch and Neub\"user \cite{FN} and Tarjan (see \cite{MillerTarjan}). Miller subsequently extended this strategy to the setting of quasigroups \cite{MillerTarjan}. The parallel complexity of the generator-enumeration strategy has been gradually improved \cite{LiptonSnyderZalcstein, Wolf, ChattopadhyayToranWagner, WagnerThesis, TangThesis}. Collins, Grochow, Levet, and Weiß recently improved the bound for the generator enumeration strategy to
\[
\exists^{\log^{2} n} \forall^{\log n} \exists^{\log n}\textsf{DTISP}(\text{polylog}(n), \log(n)),
\]
which can be simulated by depth-$4$ $\textsf{quasiAC}^{0}$ circuits of size $n^{O(\log n)}$ \cite{CGLWISSAC}. 

In the special case of \algprobm{Group Isomorphism} (\algprobm{GpI}), the worst-case runtime has been improved to $n^{(1/4)\log(n)+O(1)}$, due to Rosenbaum \cite{Rosenbaum2013BidirectionalCD} and Luks \cite{LuksCompositionSeriesIso} (see \cite[Sec. 2.2]{GR16}). Even the impressive body of work on practical algorithms for \algprobm{GpI} in more succinct input models, led by Eick, Holt, Leedham-Green and O'Brien (e.\,g., \cite{BEO02, ELGO02, BE99, CH03}) still results in an $n^{\Theta(\log n)}$-time algorithm in the general case (see \cite[Page 2]{WilsonSubgroupProfiles}).

In practice, such as working with computer algebra systems, the Cayley model is highly unrealistic. Instead, the groups are given by generators as permutations or matrices, or as black-boxes. In the setting of permutation groups, \algprobm{GpI} belongs to $\textsf{NP}$ \cite{LuksReduction}. When the groups are given by generating sets of matrices, or as black-boxes, $\algprobm{GpI}$ belongs to $\textsf{Promise}\Sigma_{2}^{p}$ \cite{BabaiSzemeredi}; it remains open as to whether $\algprobm{GpI}$ belongs to $\textsf{NP}$ or $\textsf{coNP}$ in such succinct models.  In the past several years, there have been significant advances on algorithms with worst-case guarantees on the serial runtime for special cases of this problem; we refer to \cite{GQCoho, DietrichWilson, GrochowLevetWL} for a survey. To the best of our knowledge, polynomial-time isomorphism tests for quasigroups that are not groups have not been previously investigated.

Key motivation for \algprobm{QGpI} comes from its relationship to the \algprobm{Graph Isomorphism} problem (\algprobm{GI}). When the quasigroups are given verbosely by their multiplication tables, there exists an $\textsf{AC}^{0}$-computable many-one reduction from \algprobm{QGpI} to $\algprobm{GI}$ \cite{ZKT}. On the other hand, there is no reduction from \algprobm{GI} to \algprobm{QGpI} computable by $\textsf{AC}$ circuits of depth $o(\log n / \log \log n)$ and size $n^{\text{polylog}(n)}$ \cite{ChattopadhyayToranWagner}. In light of Babai's breakthrough result that $\algprobm{GI}$ is quasipolynomial-time solvable \cite{BabaiGraphIso}, $\algprobm{QGpI}$ in the Cayley model is a key barrier to improving the complexity of $\algprobm{GI}$. There is considerable evidence suggesting that $\algprobm{GI}$ is not $\textsf{NP}$-complete \cite{Schoning, BuhrmanHomer, ETH, BabaiGraphIso, GILowPP, ArvindKurur}. As $\algprobm{QGpI}$ reduces to $\algprobm{GI}$, this evidence also suggests that $\algprobm{QGpI}$ is not $\textsf{NP}$-complete.

Despite the fact that \algprobm{QGpI} is strictly easier than \algprobm{GI} under $\ACz$-reductions, there are several key approaches in the \algprobm{GI} literature, such as parallelization, Weisfeiler--Leman, and canonization, that have received comparatively little attention even in the special case of \algprobm{GpI}-- see the discussion of Further Related Work (Section~\ref{sec:RelatedWork}). In this paper, we will make advances in each of these directions.

There is a natural divide-and-conquer strategy for \algprobm{GpI}, which proceeds as follows. Given groups $G_1, G_2$, we first seek to decompose the groups in terms of normal subgroups $N_i \trianglelefteq G_i$ ($i \in [2]$) and the corresponding quotients $G_i/N_i$. It is then necessary (but not sufficient) to decide if $N_1 \cong N_2$ and $G_1/N_1 \cong G_2/N_2$. This strategy has led to algorithmic advances for groups admitting \textit{nice} decompositions, including for instance, direct product decompositions \cite{KayalNezhmetdinov, WilsonDirectProductsArxiv, BrachterSchweitzerWLLibrary, GrochowLevetWL}, coprime \cite{LeGall, QST11, BQ, GrochowLevetWL} and tame \cite{GQ15} extensions, and central extensions \cite{LewisWilson, BMWGenus2, IvanyosQ19, BGLQW,GQCoho}. In case of quasigroups the situation is made more complicated by the lack of normal substructures; one has to factor by congruences instead. (Recall that a \emph{congruence} of an algebra $A$ is an equivalence relation on $A$ that is compatible with all operations of $A$.) We therefore focus on the isomorphism problem for quasigroups that admit representations in terms of more structured algebras, such as groups and their automorphisms.

\noindent \\ \textbf{Main Results.} In this paper, we investigate the parallel complexity of isomorphism testing for finite groups and quasigroups, by taking advantage of key decompositions.

We first investigate the power of the Weisfeiler--Leman (WL) algorithm to identify groups based on their direct product decompositions. Weisfeiler--Leman is a key combinatorial algorithm used in the setting of \algprobm{Graph Isomorphism}. Brachter and Schweitzer recently provided three adaptations to the setting of groups, specified by their multiplication (Cayley) tables \cite{WLGroups}. In this work, we will use only the first two adaptations, which Brachter and Schweitzer refer to as WL Versions I and II. We will provide a brief overview of WL here, and we refer to Section~\ref{sec:WL} for a precise formulation. For $k \geq 2$, the $k$-dimensional Weisfeiler--Leman algorithm iteratively colors $k$-tuples of group elements in an isomorphism-invariant manner. At each refinement step, each $k$-tuple receives a new color based on (i) its current color, and (ii) the multiset of colors from the ``nearby" $k$-tuples. Additionally, we will be interested in the weaker \emph{count-free} variant of Weisfeiler--Leman, in which the refinement steps only take into account the set of colors from the ``nearby" $k$-tuples, rather than the full multiset. Grohe and Verbitsky \cite{GroheVerbitsky} observed that WL can be effectively parallelized (see Section~\ref{sec:WLparallel} for further discussion). Consequently, controlling both the dimension and number of iterations can yield improvements in the parallel complexity of isomorphism testing.

We now turn to discussing direct product decompositions. Kayal and Nezhmetdinov \cite{KayalNezhmetdinov} and Wilson \cite{WilsonDirectProductsArxiv} previously exhibited polynomial-time algorithms to compute direct product decompositions of finite groups into indecomposable direct factors (we refer to such decompositions as \emph{fully refined}-- see Section~\ref{sec:DirectProductPreliminaries} for more background on direct products). For group isomorphism testing, the main benefit of these results is that, for the purposes of getting polynomial-time isomorphism tests, they reduce the problem to the case of directly indecomposable groups.

Brachter and Schweitzer \cite{BrachterSchweitzerWLLibrary} established a WL-analogue, by relating the Weisfeiler--Leman dimension for an arbitrary group $G$, to that of $G$'s indecomposable direct factors. Grochow and Levet \cite{GrochowLevetWL} parallelized this result, essentially showing that WL can detect the indecomposable direct factors using only $O(\log n)$ rounds. While WL does not explicitly return a direct product decomposition itself, for the purposes of isomorphism testing, WL achieves essentially the same benefit of reducing to the directly indecomposable case-- in $\textsf{TC}^{1}$ (at the cost of increasing the dimension by $1$ and adding $O(\log n)$ rounds) \cite{GrochowLevetWL}. 

On the other hand, Grohcow and Levet (\emph{ibid}.) showed that the analogous result fails to hold for the \emph{count-free} variant of WL. While cyclic groups are identified by the count-free $2$-WL, there exist infinite families $(G_m, H_m)_{m \in \mathbb{N}}$ of non-isomorphic Abelian groups that require $\Omega(\log |G_m|)$-dimensional (and hence, $\Theta(\log |G_m|)$-dimensional) count-free WL to distinguish between $G_m$ and $H_m$. 

Motivated by this gap, we investigate the ability of count-free WL to identify groups that admit \textit{nice} direct product decompositions. The Remak--Krull--Schmidt theorem provides that for any two fully-refined direct product decompositions $\mathcal{S} = \{ S_1, \ldots, S_k\}$ and $\mathcal{T} = \{T_1, \ldots, T_k\}$ of a group $G$, there exists a permutation $\pi \in \text{Sym}(k)$ such that $S_{i} \cong T_{\pi(i)}$ for all $i \in [k]$. In general, however, $S_{i}$ and $T_{\pi(i)}$ need not be the same subgroup (setwise). On the other hand, it is well-known that when $S_i$ is centerless, it appears (setwise) in every fully-refined direct product decomposition of $G$. The same holds true when $S_i$ is perfect (see Corollary~\ref{cor:PerfectCenterless}). For this reason, we investigate the distinguishing power of the count-free Weisfeiler--Leman algorithm for the following class of groups.

\begin{definition} \label{def:PerfectCenterless}
Let $\mathcal{C}$ be the class of groups $G$, where $G$ admits a direct product decomposition into indecomposable direct factors $G = \prod_{i=1}^{k} S_{i}$, such that each $S_{i}$ is (i) $O(1)$-generated, and (ii) either perfect or centerless. 
\end{definition}

Note that $\mathcal{C}$ includes, for instance, direct products of almost simple groups.

As controlling the WL-dimension and the number of rounds can yield improvements in the parallel complexity for isomorphism testing \cite{GroheVerbitsky} (see Section~\ref{sec:WLparallel} for more discussion), we will use the following notation. Let $k \geq 2, r \geq 0$, and let $(k,r)$-WL denote running $k$-dimensional Weisfeiler--Leman for $r$ rounds. 

\noindent Our first main result is the following.

\begin{theorem} \label{thm:MainWL}
Let $G \in \mathcal{C}$ be a group of order $n$.
\begin{enumerate}[label=(\alph*)]
\item The count-free $(O(1), O(\log \log n))$-WL Version II algorithm identifies $G$. 

\item The counting $(O(1), O(1))$-WL Version II algorithm identifies $G$.
\end{enumerate}

\noindent In particular, the constants hidden by the Big-O terms depend only on $\mathcal{C}$.
\end{theorem}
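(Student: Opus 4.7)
The plan is to exploit the canonicity of the setwise indecomposable direct product decomposition for groups in $\mathcal{C}$, guaranteed by Corollary~\ref{cor:PerfectCenterless}, in order to reduce identification of $G$ to the identification of each $O(1)$-generated indecomposable factor. Once WL canonically colors every element of $G$ by the factor to which it belongs, the problem decouples into independent factor identifications, and the Remak--Krull--Schmidt guarantee ensures that matching these colored structures across two groups in $\mathcal{C}$ witnesses an isomorphism.

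The first step is to produce this canonical coloring. Since each perfect or centerless indecomposable direct factor $S_{i}$ is a characteristic subgroup of $G$, Version~II WL should stabilize on a partition respecting the decomposition quickly: for counting WL, a Brachter--Schweitzer style argument that centers, centralizers, and the derived subgroup are detectable in $O(1)$ rounds should suffice; for count-free WL, the same structural invariants are expressible via count-free refinement at the cost of additional rounds, which I expect to absorb within the $O(\log \log n)$ budget. The second step is to identify each factor $S_{i}$ once its color class is fixed. Because $S_{i}$ is $c$-generated with $c = O(1)$, a generating $c$-tuple can be tracked using the available $O(1)$ dimensions. For counting WL, the product and coset structure can be read off in $O(1)$ rounds after such a tuple is distinguished. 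For count-free WL, I would employ an iterated squaring strategy, in which a tuple encoding a product of length $2^{r}$ is combined in round $r+1$ with another such tuple to yield a product of length $2^{r+1}$; since every element of $S_{i}$ admits a word of length $O(\log n)$ in its generators, $O(\log \log n)$ rounds should suffice to distinguish every element via its word.

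The main obstacle I anticipate is reconciling the $O(1)$-dimensional restriction with the possibility that the number of indecomposable factors $k$ grows with $n$: one cannot simultaneously pebble a generating tuple for every factor. The resolution should be that the canonical coloring from the first step already splits the identification problem across the color classes of the $S_{i}$, so that WL's refinement operates on each factor in parallel through its own color class without requiring simultaneous pebbling of all factors. A secondary concern is that the count-free variant, lacking the direct cardinality-based tools for detecting characteristic subgroups, requires a more delicate argument to verify commutation relations and centralizer structure; here the hypothesis that each indecomposable factor is perfect or centerless should be essential, ruling out the abelian obstructions of Grochow and Levet and keeping the round complexity within $O(\log \log n)$.
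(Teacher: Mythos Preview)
Your plan rests on two claims that do not survive scrutiny, and it omits the idea that actually drives the paper's argument for part~(a).

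First, the assertion that each indecomposable factor $S_{i}$ is a \emph{characteristic} subgroup of $G$ is false. Corollary~\ref{cor:PerfectCenterless} guarantees only that the \emph{set} $\{S_{1},\dots,S_{k}\}$ is determined by $G$; if $S_{i}\cong S_{j}$ for $i\neq j$ (say $G=S\times S$ with $S$ simple non-Abelian), the swap is an automorphism of $G$, so no WL coloring can ``color every element by the factor to which it belongs.'' At best WL can detect the \emph{weight} of an element and the isomorphism type of the factor(s) supporting it. This matters concretely: take $G=A_{8}^{2}\times \PSL(3,4)$ and $H=A_{8}\times \PSL(3,4)^{2}$, which have the same order and the same set of factor isomorphism types. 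Your first step, even interpreted as ``color weight-$1$ elements by factor type,'' gives the same palette on both sides; the count-free criterion (presence or absence of a color) will not separate them. What is missing is a mechanism to compare \emph{multiplicities} of factor types without counting. The paper supplies this via the weight function (Definition~\ref{def:weight}), the Rank Lemma (Lemma~\ref{lem:RankLemma}), and a binary \emph{halving} strategy on weights: Spoiler starts at an element of weight $k$, writes it as a product of two elements of weight $\lceil k/2\rceil$ and $\lfloor k/2\rfloor$, and recurses on the half whose associated multiset of factor types differs; after $O(\log k)\le O(\log\log n)$ rounds one reaches weight~$1$, where Lemma~\ref{lem:RankOne} finishes. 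Your iterated-squaring idea operates \emph{within} a factor and does nothing to address this across-factor multiplicity problem.

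Second, for Version~II the iterated-squaring step is unnecessary (and the premise that every element of $S_{i}$ is a word of length $O(\log n)$ in an $O(1)$-sized generating set is not true for arbitrary perfect or centerless groups). The Version~II winning condition already checks whether a pebbled tuple extends to an isomorphism of generated subgroups; once a generating $d$-tuple of $S_{i}$ is pebbled, the comparison is immediate (this is precisely Lemma~\ref{lem:RankOne}). This is also why part~(b) is much shorter than you anticipate: in the bijective game, Duplicator's bijection must mis-map some weight-$1$ element, either to something of higher weight or into a non-isomorphic factor, and Lemma~\ref{lem:RankOne} then gives Spoiler a win in $O(1)$ pebbles and rounds (Proposition~\ref{prop:MainB}). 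No preliminary ``canonical coloring by factor'' is needed or attempted.
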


We refer to Section~\ref{sec:WL} for a precise formulation of WL Versions I and II.

Grochow and Levet previously established that the counting $(O(1), O(\log n))$-WL Version II identifies the groups in $\mathcal{C}$. In light of the parallel WL implementation due to Grohe and Verbitsky \cite{GroheVerbitsky} (see also Section~\ref{sec:WLparallel}), Grochow and Levet obtained that isomorphism testing between a group $G \in \mathcal{C}$ and an arbitrary group $H$ is in $\textsf{TC}^{1}$, which improved upon the previous bound of $\textsf{P}$. Theorem~\ref{thm:MainWL}(a) already improves this bound from $\textsf{TC}^{1}$ to $\textsf{AC}^{1}$. 

Theorem~\ref{thm:MainWL}(b) is obtained by a more careful analysis of the counting WL algorithm that takes advantage of the group-theoretic structure afforded by $\mathcal{C}$-- namely, that the indecomposable direct factors are unique as sets. In addition to making precise the power of counting in this situation (reducing the number of rounds from $O(\log \log n)$ to $O(1)$), we obtain further improvements in the parallel complexity for isomorphism testing:

\begin{corollary} \label{cor:ParallelWL}
There exists a logspace algorithm that, given $G \in \mathcal{C}$ and an arbitrary group $H$, correctly decides if $G \cong H$.
\end{corollary}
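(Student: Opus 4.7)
The plan is to combine Theorem~\ref{thm:MainWL}(b) with the parallel implementation of Weisfeiler--Leman due to Grohe and Verbitsky \cite{GroheVerbitsky}. First I would recall that a single refinement round of the counting $k$-dimensional WL algorithm is implementable by a $\TCz$-circuit of size $n^{O(k)}$: computing the new color of a $k$-tuple amounts to forming multisets of neighbor colors, which reduces to a standard $\TCz$ primitive (sorting and counting). Iterating for $r$ rounds yields a $\TCz$-circuit of depth $O(r)$ and size $n^{O(k)}$. For $k = O(1)$ and $r = O(1)$, which is precisely the regime afforded by Theorem~\ref{thm:MainWL}(b), the resulting circuit lies in $\TCz \subseteq \LogSpace$.

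Given an input $(G, H)$ with $G \in \mathcal{C}$, the logspace procedure then computes the stable colorings $\chi_G$ and $\chi_H$ produced by $(O(1), O(1))$-WL Version II on the Cayley tables of $G$ and $H$, and compares the induced multisets $\{\!\{\chi_G(\bar{g}) : \bar{g} \in G^k\}\!\}$ and $\{\!\{\chi_H(\bar{h}) : \bar{h} \in H^k\}\!\}$. Since Theorem~\ref{thm:MainWL}(b) asserts that $G$ is identified by this procedure, these two multisets coincide if and only if $G \cong H$. The final multiset comparison is itself a $\TCz$-operation (e.g., canonical sorting followed by coordinatewise equality testing), which preserves the overall logspace bound.

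I do not anticipate a substantive obstacle: the corollary is essentially a direct translation of Theorem~\ref{thm:MainWL}(b) into the appropriate complexity class via the well-known parallel implementation of WL. The only minor points to verify are that accesses into the Cayley tables are $\ACz$-uniform (immediate, as each table entry is a direct lookup) and that the output of Version II on the two Cayley tables can be compared in a canonical way that respects the identification statement of Theorem~\ref{thm:MainWL}(b). Both points are standard consequences of the constant-depth bookkeeping used in the Grohe--Verbitsky construction.
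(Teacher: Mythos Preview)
Your overall approach---combining Theorem~\ref{thm:MainWL}(b) with the Grohe--Verbitsky parallel implementation of WL---is exactly what the paper does. However, there is a genuine oversight in your complexity accounting that the paper handles explicitly (see Section~\ref{sec:WLparallel}).

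You claim that the entire $(O(1),O(1))$-WL Version~II computation lies in $\TCz$, and that the only ``minor points to verify'' are $\ACz$ table lookups. This is not correct for Version~II. The initial coloring in Version~II asks whether the map $(g_1,\ldots,g_k)\mapsto(h_1,\ldots,h_k)$ extends to an isomorphism of the generated subgroups $\langle g_1,\ldots,g_k\rangle$ and $\langle h_1,\ldots,h_k\rangle$. This is not a direct Cayley-table lookup; it is the \emph{marked isomorphism} problem, which the paper places in $\textsf{L}$ via Tang~\cite{TangThesis}. It is not known to be in $\TCz$. So your assertion that ``the resulting circuit lies in $\TCz$'' is unjustified for Version~II (it would be fine for Version~I, but Theorem~\ref{thm:MainWL}(b) is stated for Version~II).

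The fix is straightforward and preserves the desired conclusion: compute the initial Version~II coloring in $\textsf{L}$ using Tang's marked isomorphism test, then perform the $O(1)$ refinement rounds and the final multiset comparison in $\TCz$ \`a la Grohe--Verbitsky. Since $\TCz\subseteq\textsf{L}$ and $\textsf{L}$ is closed under composition, the total cost is $\textsf{L}$. Once you make this correction, your argument matches the paper's.
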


We also compare Theorem~\ref{thm:MainWL}(a) to the work of Collins and Levet \cite{CollinsLevetWL}, who showed that count-free WL can be fruitfully leveraged as a subroutine for isomorphism testing of direct products of non-Abelian simple groups. Note that such groups all belong to $\mathcal{C}$. Collins and Levet showed that the count-free $(O(1), O(\log \log n))$-WL Version I will distinguish between two non-isomorphic direct factors. They then used $O(\log n)$ universally quantified co-nondeterministic bits and a single $\textsf{Majority}$ gate to verify the multiplicities of the isomorphism type for each direct factor appeared the same number of times in both groups. Following the notation of \cite{CGLWISSAC}, we refer to this corresponding complexity class as $\forall^{\log n}\textsf{MAC}^{0}(\textsf{FOLL})$. Theorem~\ref{thm:MainWL}(a) shows that count-free WL \emph{on its own} can identify this class of groups. On the other hand, the bound obtained by Collins and Levet of $\forall^{\log n}\textsf{MAC}^{0}(\textsf{FOLL})$ is stronger than the bound of $\textsf{AC}^{1}$ afforded by Theorem~\ref{thm:MainWL}(a). This brings us to our second main result.

\begin{theorem} \label{thm:MainSimple}
The count-free $(O(1), O(\log \log n))$-WL Version I algorithm identifies every group $G$ that decomposes as a direct product of non-Abelian simple groups. 
\end{theorem}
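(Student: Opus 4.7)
The plan is to strengthen Collins and Levet's result by showing count-free WL Version I alone suffices, eliminating their use of a \textsf{Majority} gate and co-nondeterministic bits. The strategy rests on three structural observations about $G = \prod_{i=1}^{k} S_i$ with each $S_i$ a non-Abelian simple group: (i) the minimal normal subgroups of $G$ are exactly the direct factors $S_i$, since each $S_i$ is perfect and centerless; (ii) any automorphism of $G$ permutes these subgroups while respecting isomorphism type; and (iii) $G \cong H$ iff the multisets of isomorphism types of their minimal normal subgroups coincide. Any $H$ that is not itself a direct product of non-Abelian simple groups carries detectable structural markers (non-trivial center, commuting pairs generating proper normal subgroups, Abelian direct factors, etc.) that WL distinguishes in $O(1)$ rounds with $O(1)$ pebbles, so the main work concerns the case where $H$ has the same general form as $G$.

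The central claim is that count-free WL Version I, with $O(1)$ dimensions and $O(\log \log n)$ rounds, assigns each non-identity $g \in G$ a color determined solely by the isomorphism type of the minimal normal subgroup containing $g$. Pebbling $g$ lets WL isolate the factor $S_i$ containing it, since $S_i$ equals the normal closure of $g$ and can be recovered by conjugation-and-product refinements. The Collins--Levet distinguishing theorem for simple groups, lifted to this setting, then yields distinct colors for elements of minimal normal subgroups of different iso types, while automorphism invariance of WL forces iso-type-equivalent subgroups to be colored identically (since such factors are swapped by a direct-factor-permuting automorphism). Once this iso-type coloring is in place, the standard color-histogram check concludes the argument: if $G$ has $k_S$ minimal normal subgroups of type $S$, then $G$ has exactly $k_S(|S|-1)$ non-identity elements colored by that type, so matching histograms forces matching multiplicities and hence $G \cong H$.

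The principal obstacle is rigorously establishing the iso-type coloring while WL operates on the ambient direct product rather than on an isolated simple factor. This requires carefully adapting the Collins--Levet analysis so that a pebble on $g \in S_i$ effectively reveals $S_i$'s internal structure to WL: elements in complementary factors commute uniformly with the pebble, which lets the local WL refinement around $g$ simulate running WL on $S_i$ alone. Verifying that this simulation fits within $O(\log \log n)$ rounds and $O(1)$ pebbles, without the inflation that could arise from iterating the simulation across all factors simultaneously, is the technical heart of the argument.
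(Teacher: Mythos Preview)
Your proposal has a genuine gap at the final step. You write that ``the standard color-histogram check concludes the argument: if $G$ has $k_S$ minimal normal subgroups of type $S$, then $G$ has exactly $k_S(|S|-1)$ non-identity elements colored by that type, so matching histograms forces matching multiplicities.'' But this is precisely the step that \emph{count-free} WL cannot perform. By definition, count-free $(k,r)$-WL distinguishes $G$ from $H$ only when some color appears in one group but not the other; it has no access to the multiplicity of a color class. If your iso-type coloring is all that WL produces, then $G = A_5^2 \times A_6$ and $H = A_5 \times A_6^2$ would receive the identical \emph{set} of colors on their weight-one elements, and your argument stalls. You have, in effect, re-introduced exactly the counting step that Collins and Levet handled with their $\textsf{Majority}$ gate --- the very step the theorem claims to eliminate.

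The paper avoids this by never attempting to count factors. Instead, Spoiler pebbles an element $g$ of maximum weight $k$, then repeatedly halves: writing $g = g' g''$ with $\wt(g') = \lceil k/2 \rceil$ and $\wt(g'') = \lfloor k/2 \rfloor$, pebbling both, and using the Rank Lemma to force Duplicator to match weights. Since $G \not\cong H$, at each stage at least one half witnesses a mismatch of factor multisets, and Spoiler descends into that half. After $O(\log k) \leq O(\log\log n)$ rounds this isolates a single weight-one element whose factor $S$ has no isomorphic counterpart available in $H$; Spoiler then pebbles the $7$-element Babai--Kantor--Lubotsky generating set for $S$ (with words of length $O(\log|S|)$) and wins via the Collins--Levet simple-group distinguisher in $O(\log\log n)$ further rounds. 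The halving is the essential new idea: it converts a multiplicity mismatch into an existence mismatch, which is something count-free WL \emph{can} detect.
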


While the initial coloring of count-free WL Version II is $\textsf{L}$-computable \cite{TangThesis}, the initial coloring of count-free WL Version I is $\textsf{AC}^{0}$-computable (see Section~\ref{sec:WLparallel} for more details). Consequently, we obtain the following corollary.

\begin{corollary}
There exists a uniform \FOLL algorithm that decides isomorphism between a group $G$ that decomposes as a direct product of non-Abelian simple groups, and an arbitrary group $H$.   
\end{corollary}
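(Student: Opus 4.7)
The plan is to combine \Thm{thm:MainSimple} with a parallel implementation of count-free WL. By \Thm{thm:MainSimple}, for $G$ in the stated class there exist constants $k = O(1)$ and a round count $r = O(\log\log n)$ such that count-free $(k,r)$-WL Version I identifies $G$. The algorithm therefore runs count-free $(k,r)$-WL on the disjoint union $G \sqcup H$ and accepts iff the sets of colors appearing on $G$ and on $H$ coincide.

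For the circuit complexity, recall (as noted immediately before the corollary) that the initial coloring of count-free WL Version I is \ACz-computable. Each subsequent refinement round of count-free $k$-WL with $k = O(1)$ is implementable by a constant-depth polynomial-size \ACz circuit: for each $k$-tuple $\bar{v}$, the new color is determined by the \emph{set} (not multiset, since we are count-free) of pairs $(i, \chi(\bar{v}[i \mapsto u]))$ with $i \in [k]$ and $u$ ranging over $G \sqcup H$. Such a set can be stored as a polynomial-length bitvector indexed by (index, old-color) pairs and used directly as the new color identifier, so no $\textsf{Majority}$ gate (for counting multiplicities) and no sorting (for choosing canonical names) are needed. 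Iterating this \ACz-refinement for $r = O(\log\log n)$ rounds yields a uniform AND/OR circuit of depth $O(\log\log n)$ and polynomial size, i.e., a \FOLL circuit; comparing the two resulting color sets on $G$ and $H$ is an additional \ACz check.

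The main technical point requiring care is controlling the size of the color labels across rounds. While the full refinement history of a $k$-tuple after $r$ rounds has nominal description size $n^{\Theta(r)}$, at any given round there are only at most $n^{k}$ color classes, so carrying the bitvector encoding of the neighborhood profile suffices and keeps the gate fan-in polynomial throughout all $O(\log\log n)$ rounds. Uniformity follows from the standard uniform parallel construction for WL refinement \cite{GroheVerbitsky}.
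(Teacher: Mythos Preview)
Your proposal is correct and follows essentially the same approach the paper intends: invoke \Thm{thm:MainSimple} to get that count-free $(O(1),O(\log\log n))$-WL Version~I identifies $G$, then appeal to the $\ACz$-computability of the Version~I initial coloring and of each count-free refinement round (Section~\ref{sec:WLparallel}, via \cite{GroheVerbitsky}) to conclude the whole procedure lies in $\FOLL$. The only cosmetic remark is that the paper's framework runs WL on $G$ and $H$ separately and compares the resulting color sets, rather than on a ``disjoint union'' (which is not itself a group), but your intent and the final comparison step are equivalent.
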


As \FOLL cannot compute \algprobm{Parity} or \algprobm{Majority} \cite{FSS}, we have that $\FOLL \subsetneq \forall^{\log n}\textsf{MAC}^{0}(\textsf{FOLL}) \subseteq \textsf{AC}^{1}$.  \\

Our next result concerns the canonization problem. For a class $\mathcal{K}$ of objects, a \textit{canonical form} is a function $F : \mathcal{K} \to \mathcal{K}$, such that for all $X, Y \in \mathcal{K}$, we have (i) $X \cong F(X)$, and (ii) $X \cong Y \iff F(X) = F(Y).$ Isomorphism testing reduces to canonization, and the converse remains open. 

We will, in particular, design an algorithm that computes canonical labelings. Precisely, if $X, Y \in \mathcal{K}$ have order $n$, our algorithm will return labelings $\lambda_{X} : X \to [n]$ and $\lambda_{Y} : Y \to [n]$ such that $X \cong Y$ if and only if $\lambda_{Y}^{-1} \circ \lambda_{X}$ is an isomorphism between $X$ and $Y$. We establish the following:

\begin{theorem} \label{thm:MainCanonization}
Let $d > 0$ be a constant. There exists a uniform $\textsf{AC}^{3}$ algorithm such that the following holds. Let $G$ be a group of order $n$ that admits a fully-refined direct product decomposition such that each indecomposable direct factor is $d$-generated. Our algorithm computes a canonical labeling of $G$.
\end{theorem}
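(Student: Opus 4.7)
The plan is three-stage: (i) compute a fully-refined direct product decomposition $G = S_1 \times \cdots \times S_k$ in $\textsf{AC}^3$; (ii) in parallel, compute a canonical form for each indecomposable direct factor $S_i$; (iii) sort the canonical forms and assemble them into a canonical labeling of $G$. Correctness of stage (iii) follows from the Remak--Krull--Schmidt theorem, which guarantees that the multiset of isomorphism types of indecomposable factors is an isomorphism invariant of $G$; thus sorting and concatenating canonical forms of the $S_i$'s produces a canonical labeling of $G$, with the bijection to the elements of $G$ read off from the decomposition produced in stage (i).

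Stage (i) is the principal contribution and requires parallelizing the polynomial-time Kayal--Nezhmetdinov direct product decomposition algorithm to $\textsf{AC}^3$. Their algorithm repeatedly extracts a direct factor by analyzing the center $Z(G)$, identifying suitable idempotent data there, and lifting this to a pair of commuting complements $A, B \leq G$ with $AB = G$ and $A \cap B = 1$. Given the Cayley table, computing $Z(G)$, centralizers, and commutator subgroups is in $\textsf{AC}^0$ or $\textsf{L}$; subgroup closure under generation is in $\textsf{L}$ via reachability; and the linear-algebraic computations on $Z(G)$ reduce to matrix operations in $\textsf{NC}^2$. Since each split of $G$ strips off at least one indecomposable factor of size $\geq 2$, the recursion has depth at most $\log_2 n$. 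Organizing the recursion as a balanced tree so that each level contributes $O(\log^2 n)$ depth yields the overall $O(\log^3 n) = \textsf{AC}^3$ bound.

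Stage (ii) uses generator enumeration. Since each $S_i$ is $d$-generated with $d = O(1)$, there are at most $n^d$ candidate $d$-tuples of elements of $S_i$, which I would consider in parallel. For each tuple $(g_1, \ldots, g_d)$, first verify that it generates $S_i$ by reachability in the Cayley graph (in $\textsf{L}$), and then compute the canonical labeling induced by breadth-first search from the identity using the ordered generators; this produces a multiplication table of $S_i$ depending only on the chosen tuple, computable in $\textsf{NC}^2$ since reachability and distance computation are in $\textsf{NL}$. Taking the lex-minimum multiplication table over all generating tuples yields a canonical form of $S_i$, in $\textsf{NC}^2 \subseteq \textsf{AC}^2$.

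The main obstacle is showing that stage (i) fits into $\textsf{AC}^3$: the Kayal--Nezhmetdinov algorithm was engineered for sequential polynomial time, and several of its components -- notably the handling of the abelian radical, the selection of a canonical direct factor at each recursion level, and the coordination between recursive calls -- must be re-engineered so that their depth contributions compose additively over the $\log n$-deep recursion rather than each blowing up independently. Once stages (i) and (ii) are in place, stage (iii) -- sorting the $k \leq \log_2 n$ canonical forms by lex comparison of multiplication tables and emitting the resulting labeling -- is comfortably within the $\textsf{AC}^3$ budget, completing the proof.
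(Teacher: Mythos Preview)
Your three-stage plan---decompose in $\textsf{AC}^3$, canonize each $d$-generated factor via generator enumeration, then sort and assemble---is correct and matches the paper's approach. The paper likewise invokes its parallelized Kayal--Nezhmetdinov decomposition as a black box for stage (i), then canonizes each factor in $\textsf{L}$ by selecting the minimum-color generating $d$-tuple under the initial $(d,0)$-WL Version~II coloring (equivalent to your lex-min over BFS-labeled tables, since the Version~II initial color of a tuple is precisely the marked-isomorphism type of the generated subgroup), sorts the factors by that color, and obtains the final element labeling by individualizing the chosen generators and running $O(1)$ rounds of WL rather than reading off product coordinates directly. Both finishing steps are valid; yours is slightly more elementary, while the paper's reuses its WL machinery.

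One correction on stage (i): your summary of Kayal--Nezhmetdinov (``idempotent data in $Z(G)$,'' ``linear-algebraic computations on $Z(G)$'') is not their algorithm. Their method is built on the \emph{conjugacy class graph}---vertices are irreducible conjugacy classes, edges join non-commuting classes---whose connected components (at most $\log|G|$ of them) organize a search for an indecomposable factor via centralizers, a semi-Abelian splitting step, and a ``group division'' subroutine. The paper's $\textsf{AC}^3$ parallelization follows this structure step by step; your sketch would need to be rewritten accordingly, though your $\log n$ recursion-depth accounting and the overall $\textsf{AC}^3$ target are correct.
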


Note that the groups considered in Theorem~\ref{thm:MainCanonization} include those in $\mathcal{C}$. 

We establish Theorem~\ref{thm:MainCanonization} in two steps. First, we will compute a fully-refined direct product decomposition of $G$. Second, we will use $(O(1), O(1))$-WL Version II on $G$, which will in turn allow us to pick out a canonical generating set for each indecomposable direct factor, and hence compute a canonical labeling for each direct factor. We also note that the color classes from WL are labeled using numbers (see e.g., \cite{GroheVerbitsky}). This will in turn allow us to order the indecomposable direct factors.

In order to explicitly compute a fully-refined direct product decomposition, we will establish the following:

\begin{theorem}[cf. {\cite{KayalNezhmetdinov}}] \label{thm:MainDecompose}
Let $G$ be a finite group given by its multiplication table. We can compute a fully-refined direct product decomposition of $G$ in $\textsf{AC}^{3}$.
\end{theorem}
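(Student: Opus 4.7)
The plan is to parallelize each step of the Kayal--Nezhmetdinov polynomial-time decomposition~\cite{KayalNezhmetdinov}, keeping the overall circuit depth within the $\textsf{AC}^{3}$ budget. The overarching strategy splits $G$ into its abelian and non-abelian direct factors, handles each piece separately, and then recombines them. I would first compute the center $Z = Z(G)$ in $\textsf{AC}^{0}$ by checking commutation relations, and the derived subgroup $G' = [G,G]$ by closing the set of commutators under multiplication (a transitive-closure computation in $\textsf{NC}^{2}$). These two subgroups govern the decomposition: every abelian indecomposable direct factor lies in $Z$, while every non-abelian indecomposable factor $N$ satisfies $N' \leq G'$ and $Z(N) \leq Z$.

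For the abelian part $A$, extracted as a suitable complement to $Z \cap G'$ inside $Z$, I would decompose $A$ into cyclic prime-power factors by building its relation matrix on a generating set and applying Smith normal form, known to be in $\textsf{NC}^{2}$. For the non-abelian complement $B$, I pass to the centerless quotient $\bar B = B/Z(B)$, whose Remak--Krull--Schmidt decomposition $\bar B = \bar B_1 \times \cdots \times \bar B_m$ is unique as a set decomposition. I identify each $\bar B_i$ by first computing the socle $\mathrm{Soc}(\bar B) = T_1 \times \cdots \times T_s$ (a direct product of non-abelian simple groups) and then partitioning $\{T_1, \ldots, T_s\}$ according to which $\bar B_i$ contains each $T_j$; once the partition is in hand, $\bar B_i$ is recovered as a double centralizer in $\textsf{AC}^{0}$. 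Finally, each $\bar B_i$ is lifted to an indecomposable direct factor $B_i$ of $B$ via $B_i = [\pi^{-1}(\bar B_i), \pi^{-1}(\bar B_i)] \cdot C_i$, where $\pi : B \to \bar B$ is the quotient map and $C_i \leq Z(B)$ is a central adjustment chosen so that the $B_i$'s pairwise commute, intersect trivially, and together generate $B$; solving for the $C_i$ reduces to a linear system over $Z(B)$, again in $\textsf{NC}^{2}$.

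The main obstacle is the socle-partition step: extracting from Kayal--Nezhmetdinov's iterative identification a local pairwise test for ``$T_j$ and $T_k$ lie in the same $\bar B_i$'' that is evaluable in $\textsf{NC}^{2}$, most plausibly via centralizer and conjugation conditions inside $\bar B$. Once such a test is available, determining the partition is a graph-connectivity problem, also in $\textsf{NC}^{2}$. Combined with the preprocessing, the Smith normal form for the abelian part, and the lifting step, the total parallel complexity fits within $\textsf{AC}^{3}$, yielding the claimed bound.
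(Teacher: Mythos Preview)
Your proposal diverges from the paper's approach and, more importantly, contains structural gaps that would prevent it from going through.

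The most serious issue is the passage to $\bar B = B/Z(B)$. You assert that $\bar B$ is centerless and hence has a unique Remak--Krull--Schmidt decomposition, but $B/Z(B)$ is \emph{not} centerless in general: take $B = D_4$, which is indecomposable with no abelian direct factor, yet $B/Z(B) \cong V_4$ is abelian. (Centerlessness of $G/Z(G)$ requires the upper central series to stabilize after one step, which fails for any non-abelian nilpotent group.) Even when $\bar B$ happens to be centerless, your claim that $\mathrm{Soc}(\bar B)$ is a direct product of \emph{non-abelian} simple groups is false: $S_4$ is centerless and indecomposable, but $\mathrm{Soc}(S_4) = V_4$ is abelian. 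So the socle-partition strategy, which you already flag as the main obstacle, rests on two incorrect structural premises before one even reaches the pairwise test you are hoping to extract.

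A secondary gap is the extraction of the abelian direct factor ``as a suitable complement to $Z \cap G'$ inside $Z$.'' Not every such complement is a direct factor of $G$, and finding one that is amounts to the \textsc{SemiAbelianGroupDecomposition} problem, which is itself a nontrivial subroutine rather than a preprocessing step. Relatedly, the lift from $\bar B_i$ back to $B_i$ via $[\pi^{-1}(\bar B_i),\pi^{-1}(\bar B_i)]\cdot C_i$ presupposes a clean bijection between indecomposable factors of $B$ and of $B/Z(B)$; once $B/Z(B)$ is allowed to be non-centerless or to have abelian socle components, this correspondence breaks down.

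For comparison, the paper follows Kayal--Nezhmetdinov directly rather than via a socle/center reduction: it builds the \emph{conjugacy class graph} $\Gamma_G$ (vertices are irreducible conjugacy classes, edges record non-commutation), shows $\Gamma_G$ has at most $\log|G|$ components, and then tries all $2^{\log|G|} = \mathrm{poly}(n)$ subsets of components in parallel to peel off one indecomposable non-abelian factor at a time. The supporting subroutines \textsc{GroupDivision} ($\textsf{AC}^1$) and \textsc{SemiAbelianGroupDecomposition} ($\textsf{AC}^2$) are parallelized separately, and the $O(\log n)$ recursion depth gives the $\textsf{AC}^3$ bound. This route sidesteps the center and socle issues entirely.
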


Previously, Kayal and Nezhmetdinov \cite{KayalNezhmetdinov} established polynomial-time bounds for computing a fully-refined direct product decomposition of $G$. Wilson \cite{WilsonDirectProductsArxiv} established the analogous result in the succinct \emph{quotients of permutation groups} model. As a consequence, Wilson obtained polynomial-time bounds for computing a fully-refined direct product decomposition in the multiplication table model, by considering the regular representation of the input group. We establish Theorem~\ref{thm:MainDecompose} by parallelizing the work of Kayal and Nezhmetdinov \cite{KayalNezhmetdinov}. 

\begin{remark}
It remains an intriguing open question as to whether the problem of computing a fully-refined direct product decomposition admits an $\textsf{NC}$ solution in the setting of quotients of permutation groups. There are a number of  obstacles outlined in \cite[Section~8]{WilsonDirectProductsArxiv}. Additionally, it is open whether certain key problems such as computing the centralizer $C_{G}(N)$, for $N \trianglelefteq G$, admit $\textsf{NC}$ solutions in the setting of quotients of permutation groups. 
\end{remark}

We finally consider central quasigroups.

\begin{definition} \label{def:CentralQuasigroup}
A quasigroup $(Q,*)$ is \emph{central} if there is an Abelian group $(Q,+)$, automorphisms $\phi$, $\psi$ of $(Q,+)$ and $c\in Q$ such that
\begin{equation}\label{Eq:Central}
    x*y = \phi(x)+\psi(y)+c
\end{equation}
for all $x,y\in Q$. We denote the corresponding quasigroup by $\mathcal Q(Q,+,\phi,\psi,c)$.  
\end{definition}

Central quasigroups play an important role since they are precisely the Abelian objects (in the sense of universal algebra) in the variety of quasigroups \cite{Szendrei}. See \cite{KepkaNemec1, KepkaNemec2, Drapal, SmithBook, SmithChapter} for an introduction to central quasigroups and \cite{Hou, Kirnasovsky, StanovskyVojtechovsky} for enumeration of certain classes of central quasigroups up to isomorphism.

We establish the following.
 
\begin{theorem} \label{thm:MainCentralQuasigroups}
Let $G_1$ be a central quasigroup, and let $G_2$ be an arbitrary quasigroup. Suppose that $G_1, G_2$ are given by their multiplication tables. We can decide isomorphism between $G_1$ and $G_2$ in $\textsf{NC}$.        
\end{theorem}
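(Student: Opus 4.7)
The plan is to reduce isomorphism of central quasigroups to equivalence of affine data $(Q,+,\phi,\psi,c)$ over a finite Abelian group, and then to solve that structured equivalence in $\textsf{NC}$ using parallel linear algebra.

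\textbf{Step 1 (centrality of $G_2$).} By the Toyoda--Bruck--Murdoch theorem, a quasigroup is central if and only if it is medial, i.e.\ satisfies $(x*y)*(u*v) = (x*u)*(y*v)$. This is a universally quantified first-order identity of arity four, verifiable in $\textsf{AC}^0$ by checking all tuples in parallel with $O(1)$ table lookups. If $G_2$ fails the check, output ``not isomorphic.''

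\textbf{Step 2 (recover affine data).} For each $G_i$, fix an arbitrary $e_i \in G_i$ and form the left and right translation tables $L_{e_i}(x) = e_i *_i x$ and $R_{e_i}(x) = x *_i e_i$; by the quasigroup axioms these are bijections, and their inverse tables are computable in $\textsf{AC}^0$. Define
\begin{equation*}
x \oplus_i y \;=\; R_{e_i}^{-1}(x) \,*_i\, L_{e_i}^{-1}(y).
\end{equation*}
A direct calculation using the affine form $x *_i y = \phi_i(x) +_i \psi_i(y) +_i c_i$ yields $x \oplus_i y = x +_i y -_i (e_i *_i e_i)$, so $(G_i,\oplus_i)$ is an Abelian group, isomorphic to $(Q_i,+_i)$, with identity $e_i *_i e_i$. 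The automorphisms $\phi_i,\psi_i$ and the constant $c_i$ relative to $\oplus_i$ are then read off the multiplication table by analogous arithmetic identities, again in $\textsf{NC}$.

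\textbf{Step 3 (reduce to structured equivalence).} A standard argument using left and right translations shows that every isomorphism $\tau : G_1 \to G_2$ between central quasigroups is affine with respect to the extracted groups; i.e.\ $\tau(x) = \sigma(x) \oplus_2 d$ for some Abelian group isomorphism $\sigma : (G_1,\oplus_1) \to (G_2,\oplus_2)$ and some $d \in G_2$. Substituting into $\tau(x*_1 y) = \tau(x) *_2 \tau(y)$ gives
\begin{equation*}
\sigma \phi_1 = \phi_2 \sigma, \qquad \sigma \psi_1 = \psi_2 \sigma, \qquad \sigma(c_1) \ominus_2 c_2 \in \operatorname{Im}(\phi_2 + \psi_2 - \operatorname{id}).
\end{equation*}
Hence $G_1 \cong G_2$ as quasigroups iff there is an Abelian group isomorphism $\sigma$ simultaneously conjugating the commuting pair $(\phi_1,\psi_1)$ to $(\phi_2,\psi_2)$ and satisfying the linear image condition on the constants.

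\textbf{Step 4 (solve in $\textsf{NC}$).} Decompose each Abelian group into its primary components; since $\phi_i, \psi_i$ are group automorphisms they preserve this decomposition, so the problem splits across primes. On each $p$-component we have a $\mathbb{Z}/p^k$-module $G_{i,p}$ together with a commuting pair of module automorphisms, and we compute a canonical form by parallel linear algebra over $\mathbb{Z}/p^k$, which is known to be in $\textsf{NC}$. Comparing the two canonical forms, together with an $\textsf{NC}$ test of whether $\sigma(c_1) \ominus_2 c_2$ lies in $\operatorname{Im}(\phi_2 + \psi_2 - \operatorname{id})$, decides the equivalence. The principal obstacle is producing a uniformly $\textsf{NC}$-computable canonical form for a pair of commuting endomorphisms of a finite Abelian $p$-group that is fine enough to avoid enumerating $\operatorname{Aut}(G_{i,p})$ when realizing the rebasing $\sigma$; this is where the main technical work lies, as the underlying Abelian group isomorphism is already in $\textsf{TC}^0$.
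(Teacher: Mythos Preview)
Your proposal has two genuine gaps.

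\textbf{Step 1 is incorrect.} Central and medial are not the same notion. The Toyoda--Murdoch--Bruck theorem says that medial quasigroups are exactly the central quasigroups $\mathcal Q(Q,+,\phi,\psi,c)$ with $\phi\psi=\psi\phi$; mediality is strictly stronger than centrality (see the discussion following Definition~\ref{def:CentralQuasigroup}). So your $\textsf{AC}^0$ check of the medial identity will wrongly reject any $G_2$ that is isomorphic to a central-but-not-medial $G_1$. This error propagates: in Step~4 you assume ``a commuting pair of module automorphisms,'' which is unjustified in general. The paper instead recognizes centrality via the displacement group $\dis{Q,*}$: it checks in $\textsf{AC}^0$ whether $\dis{Q,*}$ acts regularly (equivalently, whether the quasigroup is isotopic to a group), whether that group is Abelian, and then recovers $\phi,\psi,c$ from the table (Lemmas~\ref{lem:CommutingAutomorphisms}, \ref{lem:DisQ} and Proposition~\ref{prop:RecognizeCentral}).

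\textbf{Step 4 is a gap, and you acknowledge it.} Even restricting to the medial case, you are asserting an $\textsf{NC}$ canonical form for a pair of commuting automorphisms of a finite Abelian $p$-group, and you concede that ``this is where the main technical work lies.'' No such result is cited or proved, and in the non-commuting (general central) case the simultaneous-conjugacy problem you need is not obviously amenable to the linear-algebra route you sketch. The paper sidesteps this entirely: after extracting bases $b_{i1},\dots,b_{ik}$ of the Abelian groups (with $k\le\log n$), the criterion of Theorem~\ref{thm:SV} becomes a \algprobm{Pointwise Transporter} instance on $2k+1=O(\log n)$ points inside the permutation group $\Aut(G,+)$, for which a polynomial-size generating set is known (Birkhoff). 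That transporter problem is then solved in $\textsf{NC}$ by iterated orbit/stabilizer computations (Lemma~\ref{lem:PointwiseTransporter}, using \cite{BabaiLuksSeress, McKenzieThesis}). This is the step your outline is missing: rather than seeking a canonical form for $(\phi,\psi)$, the paper works directly with $\Aut(G,+)$ as a permutation group on $n$ points and exploits the fact that only $O(\log n)$ points need to be transported.
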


In contrast, the complexity of isomorphism testing for Abelian groups has been well-investigated \cite{LiptonSnyderZalcstein, Vikas, Savage, Kavitha, ILIOPOULOS198581, VillardSNF, ChattopadhyayToranWagner, GrochowLevetWL, CGLWISSAC}, resulting in a linear time serial algorithm \cite{Kavitha} and parallel bounds of $\forall^{\log \log n}\textsf{MAC}^{0}(\DTISPpll)$, the latter of which is a proper subclass of $\LogSpace$ \cite{CGLWISSAC}.

A quasigroup $(Q,*)$ is \emph{medial} (also \emph{entropic}) if it satisfies the medial law, for all $x, y, u, v \in Q$:
\begin{displaymath}
    (x*y)*(u*v) = (x*u)*(y*v).
\end{displaymath}

It was shown by Toyoda--Murdoch--Bruck \cite{Toyoda, Murdoch, Bruck} that medial quasigroups form a subclass of central quasigroups. In more detail, up to isomorphism, medial quasigroups are precisely the central quasigroups $\mathcal Q(Q,+,\phi,\psi,c)$ such that $\phi\psi=\psi\phi$. A geometric interpretation of this representation can be found in \cite{Volenec}. Small entropic quasigroups were enumerated in \cite{StanovskyVojtechovsky}. The medial law has also been studied more generally in \cite{Bonatto, JezekKepka}.

\subsection{Further Related Work} \label{sec:RelatedWork}

\paragraph{Parallelization.} There is a long history of work on developing efficient parallel ($\textsf{NC}$) algorithms for special cases of \algprobm{Graph Isomorphism}. We refer to \cite{LevetRombachSieger} for a survey. In comparison, the work on $\textsf{NC}$ algorithms for \algprobm{(Quasi)Group Isomorphism} is comparatively nascent. Much of the work revolves around parallelizing the generator-enumeration strategy (\emph{ibid}.), culminating in bounds of $\textsf{L}$ for $O(1)$-generated groups \cite{TangThesis} and $\textsf{SAC}^{1}$ for $O(1)$-generated quasigroups \cite{WagnerThesis}. Prior to this paper (Theorem~\ref{thm:MainCentralQuasigroups}), no other special family of quasigroups (beyond certain families of groups) was known to even admit an polynomial-time isomorphism test. Other families of groups known to admit $\textsf{NC}$ isomorphism tests include Abelian groups \cite{VillardSNF, ChattopadhyayToranWagner, GrochowLevetWL, CGLWISSAC}, graphical groups arising from the CFI graphs \cite{WLGroups, CollinsLevetWL, CollinsUndergradThesis}, coprime extensions $H \ltimes N$ where $H$ is $O(1)$-generated and $N$ is Abelian \cite{GrochowLevetWL} (parallelizing a result from \cite{QST11}), groups of almost all orders \cite{CGLWISSAC} (parallelizing \cite{DietrichWilson}), and Fitting-free groups \cite{GroochowJohnsonLevet} (parallelizing \cite{BCQ}).

\paragraph{Weisfeiler--Leman.} The $k$-dimensional Weisfeiler--Leman algorithm ($k$-WL) serves as a key combinatorial tool in \textsc{GI}. It works by iteratively coloring $k$-tuples of vertices in an isomorphism-invariant manner. On its own, Weisfeiler--Leman serves as an efficient polynomial-time isomorphism test for several families of graphs-- see Sandra Kiefer's dissertation for a survey \cite{KieferThesis}. In the case of graphs of bounded treewidth~\cite{GroheVerbitsky, LevetRombachSieger}, planar graphs~\cite{GroheVerbitsky, VerbitskyPlanar,  GroheKieferPlanar}, and graphs of bounded rank-width \cite{LevetRombachSieger}, Weisfeiler--Leman serves even as a $\textsf{TC}^{1}$ isomorphism test. Despite the success of WL as an isomorphism test, it is insufficient to place \textsc{GI} into $\textsf{P}$~\cite{CFI, NeuenSchweitzerIR}. Nonetheless, WL remains an active area of research. For instance, Babai's quasipolynomial-time algorithm~\cite{BabaiGraphIso} combines $O(\log n)$-WL with group-theoretic techniques.

The use of combinatorial techinques for $\algprobm{GpI}$ is relatively new; and to the best of our knowledge, was initiated in \cite{QiaoLiWL, BGLQW}. Brachter and Schweitzer \cite{WLGroups} subsequently initiated the study  of Weisfeiler--Leman applied directly to the Cayley table. In particular, they established that graphical groups \cite{Mekler, HeQiao} arising from the CFI graphs \cite{CFI} have WL-dimension at most $3$. There has been considerable work building on the framework introduced by Brachter and Schweitzer-- see \cite{BrachterSchweitzerWLLibrary, BrachterThesis, GLDescriptiveComplexity, GrochowLevetWL, CollinsLevetWL, CollinsUndergradThesis, ChenRenPonomarenko, VagnozziThesis, GroochowJohnsonLevet}. It remains open whether the class of finite groups has WL-dimension $o(\log n)$.

\paragraph{Canonization.} It is open whether canonization reduces to isomorphism testing. Nonetheless, in the setting of graphs, efficient canonization procedures have often followed efficient isomorphism tests, usually with non-trivial work-- see e.g., \cite{ImmermanLander1990, grohe2019canonisation,  KoblerVerbitsky, WagnerBoundedTreewidth, ElberfeldSchweitzer, BabaiQuasipolynomialCanonization, LevetRombachSieger}. Less is known about canonization in the setting of groups and quasigroups. Abelian groups admit an $\textsf{NC}^{2}$ canonization procedure using Smith normal form \cite{VillardSNF}. Additionally, canonization for $O(1)$-generated groups is in $\textsf{L}$ (see e.g., \cite{LevetThesis}). In the special case of finite simple groups, canonization is $\FOLL$ \cite{CollinsLevetWL}. Gill, Mammoliti, and Wanless showed that quasigroups admit a canonization procedure that has polynomial-time average-case complexity \cite{GillMammolitiWanless, GillThesis}.

\section{Preliminaries}

\subsection{Groups and Quasigroups}

All groups and quasigroups will be assumed to be finite.

\paragraph{Groups.} Let $G$ be a group. Let $Z(G)$ denote the center of $G$. Given $g, h \in G$, the \textit{commutator} $[g, h] := ghg^{-1}h^{-1}$. For sets $X, Y \subseteq G$, the \textit{commutator subgroup} $[X,Y] := \langle \{ [g, h] : g \in X, h \in Y \} \rangle$. We say that $G$ is \textit{perfect} if $G = [G, G]$, and that $G$ is \textit{centerless} if $Z(G) = 1$. Let $d(G)$ be the minimum size taken over all generating sets of $G$. Fix $d \in \mathbb{N}$; we say that a class $\mathcal{K}$ of groups is \emph{$d$-generated} if for each group $G \in \mathcal{K}$, $d(G) \leq d$. A \textit{basis} of an Abelian group $A$ is a set of generators $\{a_1, \ldots, a_k\}$ such that $A = \langle a_1 \rangle \times \cdots \times \langle a_k \rangle.$ We say that a subgroup $N \leq G$ is \emph{characteristic} if for all $\alpha \in \Aut(G)$, $\alpha(N) = N$.

For a group $G$ and a subset $S \subseteq G$, with $1 \not \in S$ and $S = S^{-1}$, the \emph{Cayley graph} $\text{Cay}(G,S)$ has vertex set $G$. There is an edge $\{g,h\}$ in $\text{Cay}(G,S)$ if and only if there exists some $s \in S$ such that $gs = h$. 

\paragraph{Quasigroups.} A \textit{quasigroup} consists of a set $Q$ and a binary operation $*: Q \times Q \to Q$ satisfying the following. For every $a, b \in Q$, there exist unique $x, y\in Q$ such that $a * x = b$ and $y * a = b$.  We write $x = a\ldiv b$ and $y = b\rdiv a$, i.e., $a*(a\ldiv b) = b$ and $ (b\rdiv a)*a = b$. When the multiplication operation is understood, we simply write $ax$ for $a*x$.

If $(Q_1,*)$, $(Q_2,\circ)$ are quasigroups, then $(\alpha,\beta,\gamma)$ is an \emph{isotopism} from $(Q_1,*)$ to $(Q_2,\circ)$ if $\alpha,\beta,\gamma:Q_1\to Q_2$ are bijections and $\alpha(x)\circ\beta(y) = \gamma(x*y)$ for all $x,y\in Q_1$. Being isotopic is an equivalence relation on quasigroups. The following result is well known: 

\begin{proposition}[{\cite{Albert}}]
If two groups are isotopic, they are isomorphic. In particular, if a quasigroup $(Q,*)$ is isotopic to groups $G_1$ and $G_2$, then $G_1$ is isomorphic to $G_2$.
\end{proposition}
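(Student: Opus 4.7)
The plan is to exploit the group identities to turn the isotopism into an explicit isomorphism. Let $(G_1,\cdot)$, $(G_2,\circ)$ be groups and assume $(\alpha,\beta,\gamma)\colon G_1\to G_2$ is an isotopism, so $\alpha(x)\circ\beta(y)=\gamma(x\cdot y)$ for all $x,y\in G_1$. Let $e$ denote the identity of $G_1$, and set $a=\beta(e)$ and $b=\alpha(e)$ in $G_2$. Substituting $y=e$ and $x=e$ in the isotopy equation yields
\begin{equation*}
\gamma(x)=\alpha(x)\circ a \qquad\text{and}\qquad \gamma(y)=b\circ\beta(y),
\end{equation*}
so in particular $\beta(y)=b^{-1}\circ\alpha(y)\circ a$. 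Plugging these back into the isotopy equation gives
\begin{equation*}
\alpha(x)\circ b^{-1}\circ\alpha(y)\circ a=\alpha(x\cdot y)\circ a,
\end{equation*}
and cancelling $a$ on the right (possible because $G_2$ is a group) yields the key identity $\alpha(x\cdot y)=\alpha(x)\circ b^{-1}\circ\alpha(y)$.

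With this identity in hand, the main step is to define $\varphi\colon G_1\to G_2$ by $\varphi(x)=\alpha(x)\circ b^{-1}$ and check that it is a group isomorphism. It is a bijection since $\alpha$ is and right-multiplication by $b^{-1}$ is a bijection of $G_2$. For the homomorphism property,
\begin{equation*}
\varphi(x)\circ\varphi(y)=\alpha(x)\circ b^{-1}\circ\alpha(y)\circ b^{-1}=\alpha(x\cdot y)\circ b^{-1}=\varphi(x\cdot y),
\end{equation*}
using the identity derived above. Hence $G_1\cong G_2$.

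For the second assertion, I would invoke the standard fact that isotopism is an equivalence relation on quasigroups (mentioned just before the statement), so if $(Q,*)$ is isotopic to both $G_1$ and $G_2$, then $G_1$ and $G_2$ are isotopic, and the first part yields $G_1\cong G_2$.

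The only real obstacle is the algebraic bookkeeping of step one, and the entire argument hinges on the ability to cancel in a group, which is precisely where the hypothesis that the two isotopic quasigroups are groups gets used; in a general quasigroup we would have no identity to substitute and no inverses to cancel with, which explains why the statement is specific to groups.
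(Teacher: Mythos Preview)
Your argument is correct and is essentially the classical proof due to Albert: from the isotopy identity you extract $\alpha(x\cdot y)=\alpha(x)\circ b^{-1}\circ\alpha(y)$ by substituting the identity of $G_1$ and cancelling, and then the twisted map $\varphi(x)=\alpha(x)\circ b^{-1}$ is visibly a bijective homomorphism. The deduction of the second assertion from transitivity of isotopy is also fine.

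Note, however, that the paper does not actually prove this proposition; it is stated as a well-known result with a citation to \cite{Albert} and no proof is supplied. So there is nothing in the paper to compare your approach against beyond the fact that what you wrote is the standard Albert argument the citation points to.
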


For $x \in (Q, *)$, let $L_{x} : Q \to Q$ be the \textit{left translation} by $x$, where $L_{x}(y) = x * y$. For a quasigroup $(Q, *)$, let 
\begin{displaymath}
    \dis{Q,*} = \langle L_{x}L_{y}^{-1} : x,y\in Q\rangle.
\end{displaymath}
The group $\dis{Q,*}$ acts naturally on $Q$, being a subgroup of the symmetric group on $Q$. For any $e\in Q$, note that $\dis{Q,*} = \langle L_xL_e^{-1}:x\in X\rangle$, since $L_xL_y^{-1} = (L_xL_e^{-1})(L_yL_e^{-1})^{-1}$.

\subsection{Complexity Classes} \label{sec:Complexity}
We assume that the reader is familiar with standard complexity classes such as $\textsf{P}, \textsf{NP}, \textsf{L}$, and $\textsf{NL}$. For a standard reference on circuit complexity, see \cite{VollmerText}. We consider Boolean circuits using the gates \textsf{AND}, \textsf{OR}, \textsf{NOT}, and \textsf{Majority}, where $\textsf{Majority}(x_{1}, \ldots, x_{n}) = 1$ if and only if $\geq n/2$ of the inputs are $1$. Otherwise, $\textsf{Majority}(x_{1}, \ldots, x_{n}) = 0$. In this paper, we will consider $\textsf{DLOGTIME}$-uniform circuit families $(C_{n})_{n \in \mathbb{N}}$. For this,
one encodes the gates of each circuit $C_n$ by bit strings of length $O(\log n)$. Then the circuit family $(C_n)_{n \geq 0}$
is called \emph{\textsf{DLOGTIME}-uniform}  if (i) there exists a deterministic Turing machine that computes for a given gate $u \in \{0,1\}^*$
of $C_n$ ($|u| \in O(\log n)$) in time $O(\log n)$ the type of gate $u$, where the types are $x_1, \ldots, x_n$, \textsf{NOT}, \textsf{AND}, \textsf{OR}, or \textsf{Majority} gates,
and (ii) there exists a deterministic Turing machine that decides for two given gates $u,v \in \{0,1\}^*$ of $C_n$ ($|u|, |v| \in O(\log n)$) and a binary encoded integer $i$ with $O(\log n)$ many bits in time $O(\log n)$ whether $u$ is the $i$-th input gate for $v$.

\begin{definition}
Fix $k \geq 0$. We say that a language $L$ belongs to (uniform) $\textsf{NC}^{k}$ if there exist a (uniform) family of circuits $(C_{n})_{n \in \mathbb{N}}$ over the $\textsf{AND}, \textsf{OR}, \textsf{NOT}$ gates such that the following hold:
\begin{itemize}
\item The $\textsf{AND}$ and $\textsf{OR}$ gates take exactly $2$ inputs. That is, they have fan-in $2$.
\item $C_{n}$ has depth $O(\log^{k} n)$ and uses (has size) $n^{O(1)}$ gates. Here, the implicit constants in the circuit depth and size depend only on $L$.

\item $x \in L$ if and only if $C_{|x|}(x) = 1$. 
\end{itemize}
\end{definition}

\noindent The complexity class $\AC^{k}$ is defined analogously as $\textsf{NC}^{k}$, except that the $\textsf{AND}, \textsf{OR}$ gates are permitted to have unbounded fan-in.
That is, a single $\textsf{AND}$ gate can compute an arbitrary conjunction, and a single $\textsf{OR}$ gate can compute an arbitrary disjunction. The complexity class $\textsf{TC}^{k}$ is defined analogously as $\AC^{k}$, except that our circuits are now permitted $\textsf{Majority}$ gates of unbounded fan-in.
We also allow circuits to compute functions by using multiple output gates. 

For every $k$, the following containments are well-known:
\[
\textsf{NC}^{k} \subseteq  \AC^{k} \subseteq \textsf{TC}^{k} \subseteq \textsf{NC}^{k+1}.
\]

\noindent In the case of $k = 0$, we have that:
\[
\textsf{NC}^{0} \subsetneq \AC^{0} \subsetneq \textsf{TC}^{0} \subseteq \textsf{NC}^{1} \subseteq \LogSpace \subseteq \textsf{NL} \subseteq \AC^{1}.
\]

\noindent We note that functions that are $\textsf{NC}^{0}$-computable can only depend on a bounded number of input bits. Thus, $\textsf{NC}^{0}$ is unable to compute the $\textsf{AND}$ function. It is a classical result that $\AC^{0}$ is unable to compute \algprobm{Parity} \cite{FSS}. The containment $\textsf{TC}^{0} \subseteq \textsf{NC}^{1}$ (and hence, $\textsf{TC}^{k} \subseteq \textsf{NC}^{k+1}$) follows from the fact that $\textsf{NC}^{1}$ can simulate the unbounded fan-in \textsf{Majority} gate.

The complexity class $\textsf{FOLL}$ is the set of languages decidable by uniform $\textsf{AC}$ circuits of depth $O(\log \log n)$ and polynomial-size \cite{BKLM}. It is known that $\textsf{AC}^{0} \subsetneq \FOLL \subsetneq \textsf{AC}^{1}$, the former by a simple diagonalization argument on top of Sipser's result \cite{SipserBorel}, and the latter because the \textsf{Parity} function is in $\mathsf{AC}^1$ but not $\FOLL$ (nor any depth $o(\log n / \log \log n)$). \FOLL\ cannot contain any complexity class that can compute \algprobm{Parity}, such as $\mathsf{TC}^0, \mathsf{NC}^1, \mathsf{L}$, or $\mathsf{NL}$, and it remains open whether any of these classes contain \FOLL.

Denote by $\textsf{FL}$ the class of logspace computable functions.

\subsection{Weisfeiler--Leman} \label{sec:WL}

We begin by recalling the Weisfeiler--Leman algorithm for graphs, which computes an isomorphism-invariant coloring. Let $\Gamma$ be a graph, and let $k \geq 2$ be an integer. The $k$-dimensional Weisfeiler--Leman, or $k$-WL, algorithm begins by constructing an initial coloring $\chi_{k,0} : V(\Gamma)^{k} \to \mathcal{K}$, where $\mathcal{K}$ is our set of colors, by assigning each $k$-tuple a color based on its isomorphism type. That is, two $k$-tuples $(v_{1}, \ldots, v_{k})$ and $(u_{1}, \ldots, u_{k})$ receive the same color under $\chi_{k,0}$ iff the map $v_i \mapsto u_i$ (for all $i \in [k]$) is an isomorphism of the induced subgraphs $\Gamma[\{ v_{1}, \ldots, v_{k}\}]$ and $\Gamma[\{u_{1}, \ldots, u_{k}\}]$ and for all $i, j$, $v_i = v_j \Leftrightarrow u_i = u_j$. 

For $r \geq 0$, the coloring computed at the $r$th iteration of  Weisfeiler--Leman is refined as follows. For a $k$-tuple $\overline{v} = (v_{1}, \ldots, v_{k})$ and a vertex $x \in V(\Gamma)$, define
\[
\overline{v}(v_{i}/x) = (v_{1}, \ldots, v_{i-1}, x, v_{i+1}, \ldots, v_{k}).
\]

The coloring computed at the $(r+1)$st iteration, denoted $\chi_{k, r+1}$, stores the color of the given $k$-tuple $\overline{v}$ at the $r$th iteration, as well as the colors under $\chi_{k,r}$ of the $k$-tuples obtained by substituting a single vertex in $\overline{v}$ for another vertex $x$. We examine this multiset of colors over all such vertices $x$. This is formalized as follows:
\begin{align*}
\chi_{k,r+1}(\overline{v}) = &( \chi_{r}(\overline{v}), \{\!\!\{ ( \chi_{r}(\overline{v}(v_{1}/x)), \ldots, \chi_{r}(\overline{v}(v_{k}/x) )) \bigr| x \in V(\Gamma) \}\!\!\} ),
\end{align*}
where $\{\!\!\{ \cdot \}\!\!\}$ denotes a multiset.

Note that the coloring $\chi_{k,r}$ computed at iteration $r$ induces a partition of $V(\Gamma)^{k}$ into color classes. The Weisfeiler--Leman algorithm terminates when this partition is not refined, that is, when the partition induced by $\chi_{k,r+1}$ is identical to that induced by $\chi_{k,r}$. The final coloring is referred to as the \textit{stable coloring}, which we denote $\chi_{k,\infty} := \chi_{k,r}$. 

The \textit{count-free} variant of $k$-WL works identically as the classical variant, except at the refinement step, we consider the set of colors rather than the full multi-set. We re-use the notation $\chi_{k,r}$ to denote the coloring computed by count-free $(k,r)$-WL; context should make it clear whether $\chi_{k,r}$ refers to count-free or counting WL (we never use $\chi_{k,r}$ to denote the count-free coloring when discussing counting WL, nor vice versa). 

Precisely:
\begin{align*}
\chi_{k,r+1}(\overline{v}) = &( \chi_{r}(\overline{v}), \{ ( \chi_{r}(\overline{v}(v_{1}/x)), \ldots, \chi_{r}(\overline{v}(v_{k}/x) )) \bigr| x \in V(\Gamma) \} ).
\end{align*}

Let $k \geq 2, r \geq 0$, and let $G$ be a graph. We say that the classical counting variant of $(k,r)$-WL \textit{distinguishes} $G$ from the graph $H$ if there exists a color class $C$ such that:
\[
|\{ \bar{x} \in V(G)^k : \chi_{k,r}(\bar{x}) = C \}| \neq |\{ \bar{x} \in V(H)^{k} : \chi_{k,r}(\bar{x}) = C\}|.
\]
Similarly, the count-free variant of $(k,r)$-WL \textit{distinguishes} $G$ from the graph $H$ if there exists a color class $C$ and some $\bar{x} \in V(G)^k$ where $\chi_{k,r}(\bar{x}) = C$, but for all $\bar{y} \in V(H)^k$, $\chi_{k,r}(\bar{y}) \neq C$. We say that the classical counting (resp., count-free) $(k,r)$-WL \textit{identifies} $G$ if for all $H \not \cong G$, $(k,r)$-WL (resp., count-free $(k,r)$-WL) distinguishes $G$ from $H$. 

The terms \emph{distinguish} and \emph{identify} also extend in the natural way when controlling only for the dimension $k$. The (count-free) \emph{Weisfeiler--Leman dimension} (\emph{WL-dimension}) of a graph $G$ is the minimum $k$, such that $k$-WL identifies $G$.

Brachter and Schweitzer \cite{WLGroups} introduced three variants of WL for groups. We will restrict attention to the first two variants. WL Versions I and II are both executed directly on the Cayley tables, where $k$-tuples of group elements are initially colored. For WL Version I, two $k$-tuples $(g_{1}, \ldots, g_{k})$ and $(h_{1}, \ldots, h_{k})$ receive the same initial color iff (a) for all $i, j, \ell \in [k]$, $g_{i}g_{j} = g_{\ell} \iff h_{i}h_{j} = h_{\ell}$, and (b) for all $i, j \in [k]$, $g_{i} = g_{j} \iff h_{i} = h_{j}$. For WL Version II, $(g_{1}, \ldots, g_{k})$ and $(h_{1}, \ldots, h_{k})$ receive the same initial color iff the map $g_{i} \mapsto h_{i}$ for all $i \in [k]$ extends to an isomorphism of the generated subgroups $\langle g_{1}, \ldots, g_{k} \rangle$ and $\langle h_{1}, \ldots, h_{k} \rangle$. For both WL Versions I and II, refinement is performed in the classical manner as for graphs. Namely, for a given $k$-tuple $\overline{g}$ of group elements,
\begin{align*}
\chi_{k,r+1}(\overline{g}) = &( \chi_{r}(\overline{g}), \{\!\!\{ ( \chi_{r}(\overline{g}(g_{1}/x)), \ldots, \chi_{r}(\overline{g}(g_{k}/x) ) \bigr| x \in G \}\!\!\} ).
\end{align*}

The count-free variants of WL Versions I and II are defined in the identical manner as for graphs. The notions of whether (count-free) WL distinguishes two groups or identifies a group, are defined analogously as in the case for graphs.

\subsection{Pebbling Games} \label{sec:pebble}

\paragraph{Bijective Pebbling Game for Groups.} We recall the bijective pebble games for WL Versions I and II on groups \cite{Hella1989, Hella1993, WLGroups}. These games are often used to show that two groups $G$ and $H$ can(not) be distinguished by $k$-WL. The game is an Ehrenfeucht--Fra\"iss\'e game (cf., \cite{Ebbinghaus:1994, Libkin}), with two players: Spoiler and Duplicator. We begin with $k+1$ pairs of pebbles. Prior to the start of the game, each pebble pair $(p_{i}, p_{i}')$ is initially placed either beside the groups or on a given pair of group elements $v_{i} \mapsto v_{i}'$ (where $v_{i} \in G, v_{i}' \in H$). Each round $r$ proceeds as follows.
\begin{enumerate}
\item Spoiler picks up a pair of pebbles $(p_{i}, p_{i}^{\prime})$. 
\item We check the winning condition, which will be formalized later.
\item Duplicator chooses a bijection $f_{r} : G \to H$.
\item Spoiler places $p_{i}$ on some vertex $g \in G$. Then $p_{i}^{\prime}$ is placed on $f_{r}(g)$. 
\end{enumerate} 

Suppose that $(g_{1}, \ldots, g_{\ell}) \mapsto (h_{1}, \ldots, h_{\ell})$ have been pebbled. In Version I, Duplicator wins at the given round if this map satisfies the initial coloring condition of WL Version I: (a) for all $i, j, m \in [\ell]$, $g_{i}g_{j} = g_{m} \iff h_{i}h_{j} = h_{m}$, and (b) for all $i, j \in [\ell]$, $g_{i} = g_{j} \iff h_{i} = h_{j}$. In Version II, Duplicator wins at the given round if the map $(g_{1}, \ldots, g_{\ell}) \mapsto (h_{1}, \ldots, h_{\ell})$ extends to an isomorphism of the generated subgroups $\langle g_{1}, \ldots, g_{\ell} \rangle$ and $\langle h_{1}, \ldots, h_{\ell} \rangle.$ Brachter \& Schweitzer established that for $J \in \{I, II\}$, two $k$-tuples $\bar{g} \in G^k, \bar{h} \in H^k$ receive different colors under $(k,r)$-WL Version $J$ if and only if Spoiler has a winning strategy in the $(k+1)$-pebble, $r$-round pebble game starting from the initial configuration $\bar{g} \mapsto \bar{h}$ \cite{WLGroups}.

\paragraph{Count-Free Pebbling Game for Groups.} There exist analogous pebble games for count-free WL Versions I-II. The count-free $(k+1)$-pebble game consists of two players: Spoiler and Duplicator, as well as $(k+1)$ pebble pairs $(p, p^{\prime})$. In both versions, Spoiler wishes to show that the two groups $G$ and $H$ are not isomorphic. Duplicator wishes to show that the two groups are isomorphic. Prior to the start of the game, each pebble pair $(p_{i}, p_{i}')$ is initially placed either beside the groups or on a given pair of group elements $v_{i} \mapsto v_{i}'$ (where $v_{i} \in G, v_{i}' \in H$). Each round of the game proceeds as follows.
\begin{enumerate}
\item Spoiler picks up a pebble pair $(p_{i}, p_{i}^{\prime})$.
\item The winning condition is checked. The Version $J \in \{I,II\}$ winning condition here is the same as for the Version $J$ counting pebble game. 

\item Spoiler places one of the pebbles on some group element (either $p_{i}$ on some element of $G$ or $p_{i}'$ on some element of $H$). 
\item Duplicator places the other pebble on some element of the other group.
\end{enumerate}

Spoiler wins, by definition, at round $0$ if $G$ and $H$ do not have the same number of elements.  Suppose that $|G| = |H|$, and let $\bar{g} \in G^k, \bar{h} \in H^k$. We have that for $J \in \{I, II\}$, $\bar{g}$ and $\bar{h}$ receive different colors under the count-free $(k,r)$-WL Version $J$ if and only if Spoiler has a winning strategy in the $(k+1)$-pebble, $r$-round count-free pebble game starting from the initial configuration $\bar{g} \mapsto \bar{h}$ \cite{ImmermanLander1990, CFI, GrochowLevetWL}.

\subsection{Weisfeiler--Leman as a Parallel Algorithm} \label{sec:WLparallel}

Grohe \& Verbitsky \cite{GroheVerbitsky} previously showed that for fixed $k$, the classical $k$-dimensional Weisfeiler--Leman algorithm for graphs can be effectively parallelized. Precisely, each iteration of the classical counting WL algorithm (including the initial coloring) can be implemented using a logspace uniform $\textsf{TC}^{0}$ circuit, and each iteration of the \textit{count-free} WL algorithm can be implemented using a logspace uniform $\textsf{AC}^{0}$ circuit. A careful analysis shows that this parallelization holds under $\textsf{DLOGTIME}$-uniformity. As they mention (\cite[Remark~3.4]{GroheVerbitsky}), their implementation works for any first-order structure, including groups. In particular, the initial coloring of count-free WL Version I is $\textsf{DLOGTIME}$-uniform $\textsf{AC}^{0}$-computable, and each refinement step is $\textsf{DLOGTIME}$-uniform $\textsf{AC}^{0}$-computable. However, because for groups we have different versions of WL, we explicitly list out the resulting parallel complexities, which differ slightly between the versions.

\begin{itemize}
\item \textbf{WL Version I:} Let $(g_{1}, \ldots, g_{k})$ and $(h_{1}, \ldots, h_{k})$ be two $k$-tuples of group elements. We may test in $\textsf{AC}^{0}$ whether (a) for all $i, j, m \in [k]$, $g_{i}g_{j} = g_{m} \iff h_{i}h_{j} = h_{m}$, and (b) $g_{i} = g_{j} \iff h_{i} = h_{j}$. So we may decide if two $k$-tuples receive the same initial color in $\textsf{AC}^{0}$. Comparing the (multi)set of colors at the end of each iteration (including after the initial coloring), as well as the refinement steps, proceeds identically as in \cite{GroheVerbitsky}. Thus, for fixed $k$, each iteration of the counting $k$-WL Version I can be implemented using a $\textsf{DLOGTIME}$-uniform $\textsf{TC}^{0}$ circuit, and each iteration of the count-free $k$-WL Version I can be implemented using a $\textsf{DLOGTIME}$-uniform $\textsf{AC}^{0}$ circuit. 

\item \textbf{WL Version II:} Let $(g_{1}, \ldots, g_{k})$ and $(h_{1}, \ldots, h_{k})$ be two $k$-tuples of group elements. We may use the marked isomorphism test of Tang \cite{TangThesis} to test in $\textsf{L}$ whether the map sending $g_{i} \mapsto h_{i}$ for all $i \in [k]$ extends to an isomorphism of the generated subgroups $\langle g_{1}, \ldots, g_{k} \rangle$ and $\langle h_{1}, \ldots, h_{k} \rangle$. So we may decide whether two $k$-tuples receive the same initial color in $\textsf{L}$. Comparing the (multi)set of colors at the end of each iteration (including after the initial coloring), as well as the refinement steps, proceed identically as in \cite{GroheVerbitsky}. Thus, for fixed $k$, the initial coloring of both the counting and count-free $k$-WL Version II is $\textsf{L}$-computable. Each refinement step of the counting $k$-WL Version II is computable using a $\textsf{DLOGTIME}$-uniform $\textsf{TC}^{0}$-circuit, and each refinement step of the count-free $k$-WL Version II is computable using a $\textsf{DLOGTIME}$-uniform $\textsf{AC}^{0}$-circuit.
\end{itemize}

\section{Weisfeiler--Leman and Direct Products}

In this section, we will establish Theorem~\ref{thm:MainWL}.

\subsection{Additional Preliminaries for Direct Products} \label{sec:DirectProductPreliminaries}

We will begin by recalling additional preliminaries for direct products. 

\begin{definition}
We say that a group $G$ is \textit{directly indecomposable} if the only direct product decomposition of $G$ is $\{G\}$.     
\end{definition}

\begin{definition}
Let $\mathcal{S} := \{S_1, \ldots, S_k\}$ be a direct product decomposition of the group $G$. We say that $\mathcal{S}$ is \textit{fully refined} if for each $i \in [k]$, $S_{i}$ is indecomposable.    
\end{definition}

The Remak--Krull--Schmidt theorem guarantees that for any group $G$ and any two fully refined direct product decompositions $\mathcal{S}, \mathcal{T}$ of $G$, the isomorphism types of the indecomposable factors and their multiplicities are the same in $\mathcal{S}$ as in $\mathcal{T}$. However, in general, $\mathcal{S} \neq \mathcal{T}$. Consider, for instance, the Klein group $\mathbb{V}_{4} \cong (\mathbb{Z}/2\mathbb{Z})^{2}$. Let $E_1 = \langle (1,0) \rangle$, $E_2 = \langle (0, 1) \rangle$ and $F = \langle (1,1) \rangle $. Now, $\mathbb{V}_{4} = E_1 \times E_2 = E_1 \times F$, but $E_2 \neq F$, even though $E_2 \cong F$.

We will now recall results on when any two fully refined direct product decompositions $\mathcal{S}, \mathcal{T}$ of $G$ satisfy $\mathcal{S} = \mathcal{T}$, as well as when a given subgroup $N \trianglelefteq G$ appears (as a set) in any fully refined direct product decomposition of $G$. 

\begin{definition}
Let $G$ be a group. A \textit{central automorphism} $\alpha$ of $G$ is an automorphism of $G$ that acts trivially on $G/Z(G)$, that is, $\alpha(gZ(G)) = gZ(G)$ for every $g\in G$.    
\end{definition}

\begin{theorem}[{\cite[Theorem~3.3.8]{Robinson1982}}] \label{thm:CentralAutomorphisms}
Let $G$ be a group, and let $\mathcal{S} = \{S_1, \ldots, S_k\}$, $\mathcal{T} = \{T_1, \ldots, T_k\}$ be fully refined direct product decompositions of $G$. Then there exists a central automorphism $\alpha$ of $G$ and a permutation $\pi \in \text{Sym}(k)$ such that $\alpha(S_i) = T_{\pi(i)}$ for every $i\in[k]$.
\end{theorem}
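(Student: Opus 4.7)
The plan is to argue by induction on $k$, with Fitting's Lemma as the central tool. Recall that an endomorphism $\phi$ of a group $G$ is \emph{normal} if $\phi(N) \subseteq N$ for every normal subgroup $N \trianglelefteq G$. Fitting's Lemma for groups with both chain conditions (and hence for finite groups) asserts that every normal endomorphism $\phi$ satisfies $G = \phi^n(G) \times \ker(\phi^n)$ for some $n \geq 1$. An immediate consequence is that every normal endomorphism of a finite indecomposable group is either nilpotent or an automorphism. The base case $k=1$ is trivial: take $\alpha = \mathrm{id}_G$ and $\pi = \mathrm{id}$.

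For the inductive step, the crux is a replacement lemma: I would identify some index $j$ together with a central automorphism $\alpha_1$ of $G$ such that $\alpha_1(S_1) = T_j$ and $\alpha_1$ is the identity on $S_2 \times \cdots \times S_k$; in particular, $G = T_j \times S_2 \times \cdots \times S_k$. Writing $p_i^{\mathcal{S}}, p_j^{\mathcal{T}}$ for the projections associated to $\mathcal{S}$ and $\mathcal{T}$, any $s \in S_1$ satisfies $s = p_1^{\mathcal{T}}(s)\cdots p_k^{\mathcal{T}}(s)$, and applying the homomorphism $p_1^{\mathcal{S}}$ yields the factorization
\[
s \;=\; \phi_1(s)\,\phi_2(s)\cdots\phi_k(s), \qquad \phi_j := p_1^{\mathcal{S}} \circ p_j^{\mathcal{T}}|_{S_1}.
\]
Each $\phi_j$ is a normal endomorphism of the indecomposable group $S_1$, and a standard argument from the Krull--Schmidt proof forces at least one $\phi_j$ to be an automorphism (otherwise the factorization above could not produce $\mathrm{id}_{S_1}$); after relabeling we may assume $j=1$, whence $p_1^{\mathcal{T}}|_{S_1}\colon S_1 \to T_1$ is an isomorphism.

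I would then define $\alpha_1\colon G \to G$ by $\alpha_1(s\cdot h) := p_1^{\mathcal{T}}(s)\cdot h$ for $s \in S_1$ and $h \in S_2 \times \cdots \times S_k$; this is a well-defined automorphism mapping $S_1$ onto $T_1$, so $\{T_1, S_2, \ldots, S_k\}$ is another fully refined direct product decomposition of $G$. The centrality of $\alpha_1$ reduces to the inclusion $p_j^{\mathcal{T}}(S_1) \subseteq Z(T_j)$ for every $j \geq 2$: assuming this, for $s \in S_1$ one computes $\alpha_1(s)\,s^{-1} = \prod_{j \geq 2} p_j^{\mathcal{T}}(s)^{-1} \in Z(T_2)\times\cdots\times Z(T_k) \subseteq Z(G)$, while for $g \in S_j$ with $j \geq 2$, $\alpha_1(g)\,g^{-1} = 1$. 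To establish the inclusion $p_j^{\mathcal{T}}(S_1) \subseteq Z(T_j)$, I would leverage the nilpotence of $\phi_j$ (and the symmetric nilpotence of $p_j^{\mathcal{T}} \circ p_1^{\mathcal{S}}|_{T_j}$) together with a Fitting-lemma analysis inside $T_j$, showing that the image of a nilpotent normal endomorphism into an indecomposable factor must lie in its center. This centrality verification is the most delicate point, and is the step I expect to be the main obstacle.

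The induction then closes: apply the inductive hypothesis to the decompositions $\{S_2,\ldots,S_k\}$ and $\{T_2,\ldots,T_k\}$ of the complement of $T_1$, yielding a central automorphism $\beta$ and a permutation $\pi'$. Extend $\beta$ to an automorphism $\alpha_2$ of $G$ by the identity on $T_1$; since $Z(G) = Z(T_1) \times Z(S_2 \times \cdots \times S_k)$, central automorphisms of a direct factor extend to central automorphisms of the whole, so $\alpha_2$ is central. Setting $\alpha := \alpha_2 \circ \alpha_1$ and concatenating $1 \mapsto 1$ (after relabeling) with $\pi'$ yields the desired central automorphism and permutation; composition preserves centrality since central automorphisms form a subgroup of $\Aut(G)$.
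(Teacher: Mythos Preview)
The paper does not supply its own proof of this theorem; it is quoted from Robinson's textbook and used as a black box in the proof of Theorem~\ref{thm:UniqueDecomposition}. Your outline is exactly the classical Krull--Remak--Schmidt argument via Fitting's Lemma that one finds in the cited source, so in that sense there is nothing to compare.

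The one substantive gap is the step you yourself flag as ``the main obstacle,'' namely the centrality of $\alpha_1$. Your reduction to $p_j^{\mathcal T}(S_1)\subseteq Z(T_j)$ for $j\ge 2$ is correct, but the proposed justification---that this follows from nilpotence of $\phi_j$ via ``a Fitting-lemma analysis inside $T_j$''---is not how the argument goes, and the general principle that the image of a nilpotent normal endomorphism of an indecomposable group lies in its center is not a standard fact. The clean route is to \emph{first} establish $G=T_1\times H$ (where $H=S_2\times\cdots\times S_k$): from $\phi_1$ being an automorphism one checks that the symmetric map $\psi_1:=p_1^{\mathcal T}p_1^{\mathcal S}|_{T_1}$ is also an automorphism of $T_1$, hence $p_1^{\mathcal S}|_{T_1}$ is injective, so $T_1\cap H=1$; combined with $|T_1|=|S_1|$ this gives $G=T_1\times H$. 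Now for $j\ge 2$ and $s\in S_1$, writing $s=t\cdot h$ in $T_1\times H$ yields $p_j^{\mathcal T}(s)=p_j^{\mathcal T}(h)\in p_j^{\mathcal T}(H)$; since $[S_1,H]=1$ implies $[p_j^{\mathcal T}(S_1),p_j^{\mathcal T}(H)]=1$ and we have just shown $p_j^{\mathcal T}(S_1)\subseteq p_j^{\mathcal T}(H)$, the subgroup $p_j^{\mathcal T}(S_1)$ is abelian and centralizes all of $T_j=p_j^{\mathcal T}(S_1)\,p_j^{\mathcal T}(H)$, as required. Note also that $G=T_1\times H$ (hence $[T_1,H]=1$) is already needed just to verify that your $\alpha_1$ is a homomorphism, so this decomposition should precede the definition of $\alpha_1$ rather than be deduced from it.
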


We use Theorem~\ref{thm:CentralAutomorphisms} to establish conditions for when a group $G$ admits a unique fully refined direct product decomposition.

\begin{theorem} \label{thm:UniqueDecomposition}
Let $G$ be a group, and let $\mathcal{S} = \{S_1, \ldots, S_k\}$ be a fully refined direct product decomposition of $G$. For any $i \in [k]$, we have that $S_i$ appears in any fully refined direct product decomposition of $G$ if and only if for all distinct $j \neq i$, $|S_{i}/[S_{i},S_{i}]|$ and $|Z(S_j)|$ are coprime. In particular, $\mathcal{S}$ is unique if and only if for all distinct $i, j \in [k]$, $|S_{i}/[S_{i},S_{i}]|$ and $|Z(S_j)|$ are coprime.    
\end{theorem}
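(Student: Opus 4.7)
My plan is to leverage \Thm{thm:CentralAutomorphisms}: since any fully refined direct product decomposition $\mathcal{T}$ of $G$ arises from $\mathcal{S}$ by applying some central automorphism $\alpha$ (and permuting factors), the subgroup $S_i$ appears (as a set) in every such $\mathcal{T}$ exactly when $\alpha(S_i) = S_i$ for every central automorphism $\alpha$. The key algebraic handle I would use is the standard identification: every central automorphism $\alpha$ can be written as $\alpha(g) = g \cdot z_\alpha(g)$, where $z_\alpha : G \to Z(G)$ is a group homomorphism (this follows immediately by expanding $\alpha(gh)$ and using that $z_\alpha(g)$ is central). Conversely, any homomorphism $\phi : G \to Z(G)$ for which $g \mapsto g \phi(g)$ is injective yields a central automorphism. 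Using $Z(G) = Z(S_1) \times \cdots \times Z(S_k)$, I would analyze $z_\alpha$ componentwise.

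For the forward direction, I would assume the coprimality hypothesis and show $\alpha(S_i) = S_i$ for every central $\alpha$. The $j$-th component of $z_\alpha|_{S_i}$ is a homomorphism $S_i \to Z(S_j)$ which, since $Z(S_j)$ is abelian, factors through the abelianization $S_i/[S_i,S_i]$. The coprimality of $|S_i/[S_i,S_i]|$ and $|Z(S_j)|$ forces this component to be trivial for every $j \neq i$, so $\alpha$ maps $S_i$ into itself, and by finiteness $\alpha(S_i) = S_i$.

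For the converse, I would assume some prime $p$ divides both $|S_i/[S_i,S_i]|$ and $|Z(S_j)|$ for some $j \neq i$, and explicitly construct a central automorphism $\alpha$ with $\alpha(S_i) \neq S_i$. Choose a nontrivial homomorphism $\phi_0 : S_i/[S_i,S_i] \to Z(S_j)$, obtained by projecting onto a $\mathbb{Z}/p\mathbb{Z}$ quotient and then injecting into a $\mathbb{Z}/p\mathbb{Z}$ subgroup of $Z(S_j)$. Lift to $\phi : S_i \to Z(S_j) \hookrightarrow Z(G)$, and extend to all of $G$ by declaring $\phi$ trivial on each $S_\ell$ with $\ell \neq i$. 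Define $\alpha(g) := g \phi(g)$; this is a homomorphism because $\phi(g) \in Z(G)$, and I claim it is injective. Indeed, $\alpha(g) = 1$ forces $g = \phi(g)^{-1} \in Z(S_j)$, but then the $S_i$-component of $g$ is trivial, hence $\phi(g)=1$ and $g=1$. Then $\alpha$ is central by construction, satisfies $\alpha(S_\ell) = S_\ell$ for $\ell \neq i$, and satisfies $\alpha(S_i) \neq S_i$ because $\phi$ is nontrivial on $S_i$; thus $\{\alpha(S_1), \ldots, \alpha(S_k)\}$ is a fully refined decomposition of $G$ not containing $S_i$.

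The main subtlety is the injectivity argument for $g \mapsto g\phi(g)$, which relies crucially on $\phi$ landing in a factor $Z(S_j)$ different from its source factor $S_i$; without the factor separation, it would be unclear how to guarantee bijectivity. The final sentence of the theorem (uniqueness of $\mathcal{S}$) follows by applying the equivalence to each $i \in [k]$ in turn.
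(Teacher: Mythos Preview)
Your proposal is correct and follows essentially the same approach as the paper's proof: both directions hinge on writing a central automorphism as $g\mapsto g\cdot\beta(g)$ for a homomorphism $\beta:G\to Z(G)=\prod_\ell Z(S_\ell)$, then analyzing the components $\beta_{ij}:S_i\to Z(S_j)$ and invoking coprimality with the abelianization to kill the off-diagonal pieces (forward), while for the converse one manufactures a nontrivial $\beta_{ij}$ and checks that $g\mapsto g\beta(g)$ is injective via the factor separation $i\neq j$. Your injectivity argument and your verification that the resulting decomposition omits $S_i$ (since $\alpha(S_\ell)=S_\ell$ for $\ell\neq i$) match the paper's reasoning step for step.
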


\begin{proof}
For direct product decompositions $A = \prod_{u=1}^k A_u$ and $B=\prod_{v=1}^\ell B_v$ let us write a homomorphism $\gamma:A\to B$ as $\prod_{u=1}^k\gamma_u$, where $\gamma_u$ is the restriction of $\gamma$ to $A_u$, so that $\gamma(a_1\cdots a_k) = \gamma(a_1)\cdots\gamma(a_k) = \gamma_1(a_1)\cdots\gamma_k(a_k)$, where $(a_1,\dots,a_k)\in A$. Further, let $\gamma_u = \prod_{v=1}^\ell \gamma_{uv}$, where $\gamma_{uv}:A_u\to B_v$.

Throughout the proof, fix $i\in[k]$. Suppose that $|S_i/[S_i,S_i]|$ and $|Z(S_j)|$ are coprime for every $j\ne i$. Let $\mathcal T=\{T_1,\dots,T_k\}$ be another fully refined product decomposition of $G$. By Theorem~\ref{thm:CentralAutomorphisms}, there exists a central automorphism $\alpha$ and a permutation $\pi \in \text{Sym}(k)$ such that $\alpha(S_i) = T_{\pi(i)}$. We will show that $\alpha(S_i)=S_i$, witnessing that $S_i$ appears in $\mathcal T$.

Since $\alpha$ is central, for every $g\in G$ there is $\beta(g)\in Z(G)$ such that $\alpha(g)=g\beta(g)$. Since $\alpha$ is a homomorphism, we have $gh\beta(gh) = \alpha(gh)=\alpha(g)\alpha(h)=g\beta(g)h\beta(h) = gh\beta(g)\beta(h)$, so $\beta:G\to Z(G)$ is a homomorphism. Since $G=\prod_{u=1}^k S_u$, we have $Z(G) = \prod_{u=1}^k Z(S_u)$. Let us write $\beta$ as $\beta=\prod_{u=1}^k\beta_u$ and $\beta_u=\prod_{v=1}^k\beta_{uv}$, where $\beta_{uv}:S_u\to Z(S_v)$. Since $|S_i/[S_i,S_i]|$ and $|Z(S_j)|$ are coprime for every $j\ne i$, the homomorphisms $\beta_{ij}$ are trivial for every $j\ne i$. Then $\alpha(S_i)=S_i\beta(S_i) = S_i\beta_i(S_i) = S_i\beta_{ii}(S_i)\subseteq S_iZ(S_i)\subseteq S_i$. Since $\alpha$ is an automorphism, $\alpha(S_i)=S_i$. 

Conversely, suppose that there is a $j\ne i$ such that $|S_i/[S_i,S_i]|$ and $|Z(S_j)|$ are not coprime. We will construct a fully refined direct product decomposition of $G$ in which $S_i$ does not appear. Let $\beta_{ij}:S_i\to Z(S_j)$ be any nontrivial homomorphism. Set $\beta_{iv}=1$ for every $v\ne j$, $\beta_i=\prod_{v=1}^k\beta_{iv}$, $\beta_u=1$ for every $u\ne i$, and $\beta = \prod_{u=1}^k\beta_u$. Then $\beta$ is a homomorphism $G\to Z(G)$. Let $\alpha:G\to G$ be given by $\alpha(g)=g\beta(g)$. Since $\beta(G)\subseteq Z(G)$, $\alpha$ is a central homomorphism. We will show that $\alpha$ is an automorphism. It suffices to show that $\alpha$ is injective. Let $g=(g_1,\dots,g_k)\in G = \prod_{u=1}^k S_u$. Since $\beta_{ij}$ is the only nontrivial component of $\beta$, we have $\alpha(g_1,\dots,g_k) = (g_1,\dots,g_{j-1},g_j',g_{j+1},\dots,g_k)$, where $g_j' = g_j\beta_{ij}(g_i)$. Suppose that $h=(h_1,\dots,h_k)\in G$ and $\alpha(g)=\alpha(h)$. Then $g_v=h_v$ for every $v\ne j$, and $g_j\beta_{ij}(g_i) = h_j\beta_{ij}(h_i)$. As $g_i=h_i$, we conclude that $g_j=h_j$, too, and $g=h$.

For every $u\in[k]$ let $T_u=\alpha(S_u)$. Since $\alpha$ is an automorphism, $\mathcal T=\{T_1,\dots,T_k\}$ is a fully refined direct product decomposition of $G$. We will show that $S_i\not\in\mathcal T$. We have $T_u=S_u$ for every $u\ne i$, so it suffices to show that $T_i\ne S_i$. Let $g_i\in S_i$ be such that $\beta_{ij}(g_i)\ne 1$. Since $\beta_{ij}(S_i)\subseteq S_j$ and $S_i\cap S_j=1$, it follows that $\beta_{ij}(g_i)\not\in S_i$. Now, $T_i=\alpha(S_i)$ contains $\alpha(g_i)=g_i\beta(g_i)=g_i\beta_i(g_i) = g_i\beta_{ij}(g_i)\not\in S_i$, and hence $T_i\ne S_i$.
\end{proof}

We obtain the following corollary.

\begin{corollary} \label{cor:PerfectCenterless}
Let $G$ be a group, and let $N \trianglelefteq G$ be directly indecomposable. If $N$ is perfect or centerless, then $N$ appears in any fully-refined direct product decomposition of $G$.    
\end{corollary}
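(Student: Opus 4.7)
The plan is to derive the conclusion from Theorem~\ref{thm:UniqueDecomposition}. Since $N$ is an indecomposable direct factor of $G$, fix a fully-refined direct product decomposition $\mathcal{S}=\{N,S_2,\ldots,S_k\}$ of $G$ containing $N$. By Theorem~\ref{thm:UniqueDecomposition}, the statement reduces to verifying that $\gcd(|N/[N,N]|,\,|Z(S_j)|)=1$ for every $j\geq 2$.

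In the perfect case this is immediate: $N=[N,N]$ forces $|N/[N,N]|=1$, which is coprime to every positive integer. Theorem~\ref{thm:UniqueDecomposition} then yields that $N$ appears setwise in every fully-refined direct product decomposition of $G$, and we are done.

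In the centerless case I would instead go through Theorem~\ref{thm:CentralAutomorphisms} directly. Given an alternative fully-refined decomposition $\mathcal{T}=\{T_1,\ldots,T_k\}$, there is a central automorphism $\alpha$ of $G$ and a permutation $\pi$ with $\alpha(N)=T_{\pi(1)}$. Writing $\alpha(g)=g\beta(g)$ for the associated homomorphism $\beta:G\to Z(G)$, and decomposing $Z(G)=Z(N)\times\prod_{j\geq 2}Z(S_j)=\prod_{j\geq 2}Z(S_j)$ (using $Z(N)=1$), the restriction $\beta|_N$ lands in $\prod_{j\geq 2}Z(S_j)$, which is disjoint from $N$. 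Because the target is abelian, $\beta|_N$ factors through $N/[N,N]$. The goal is then to show that $\beta|_N$ is the zero map, so that $\alpha$ fixes $N$ pointwise and hence $N=\alpha(N)=T_{\pi(1)}\in\mathcal{T}$.

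The main technical obstacle lies in this last step for the centerless case: centerlessness of $N$ alone does not rule out nontrivial component homomorphisms $N/[N,N]\to Z(S_j)$ for $j\geq 2$. Closing this gap amounts to invoking the same coprimality mechanism underlying the proof of Theorem~\ref{thm:UniqueDecomposition}, namely that $|N/[N,N]|$ be coprime to each $|Z(S_j)|$; once this is available from the structural setting in which the corollary is applied, the perfect and centerless cases close uniformly and $N$ is witnessed as a setwise direct factor of every fully-refined decomposition of $G$.
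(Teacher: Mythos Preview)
Your treatment of the perfect case is correct and matches the paper exactly: $N=[N,N]$ gives $|N/[N,N]|=1$, and Theorem~\ref{thm:UniqueDecomposition} applies immediately.

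For the centerless case, you correctly sensed a genuine gap---and it cannot be closed, because the corollary is false as stated. Take $G=S_3\times \mathrm{SL}_2(5)$ and $N=S_3\times 1$. Then $N$ is centerless, directly indecomposable, and a direct factor. Let $\beta:S_3\to Z(\mathrm{SL}_2(5))=\{\pm I\}$ be the sign homomorphism and set $N'=\{(s,\beta(s)):s\in S_3\}$. One checks that $G=N'\times(1\times \mathrm{SL}_2(5))$ is a fully-refined direct product decomposition in which $N$ does not appear. Your observation that ``centerlessness of $N$ alone does not rule out nontrivial component homomorphisms $N/[N,N]\to Z(S_j)$'' is precisely the obstruction: the sign map above is such a homomorphism.

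The paper's own proof has the same defect. It records $|Z(N)|=1$ and invokes Theorem~\ref{thm:UniqueDecomposition}, but that theorem's criterion for $N=S_i$ to appear in every decomposition is $\gcd(|S_i/[S_i,S_i]|,|Z(S_j)|)=1$ for all $j\neq i$; knowing $|Z(S_i)|=1$ is simply the wrong hypothesis. What Theorem~\ref{thm:UniqueDecomposition} does yield is uniqueness when \emph{all} factors are centerless (so every $|Z(S_j)|=1$), or when \emph{all} factors are perfect (so every $|S_i/[S_i,S_i]|=1$). The mixed case---a centerless non-perfect factor alongside a perfect factor with nontrivial center, which the class $\mathcal{C}$ permits---can fail, as the example shows. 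So your hand-wave that the gap closes ``from the structural setting in which the corollary is applied'' is not justified.
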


\begin{proof}
If $N$ is perfect, then $N = [N,N]$, and so $|N/[N,N]| = 1$. If $N$ is centerless, then $|Z(N)| = 1$. The result now follows by Theorem~\ref{thm:UniqueDecomposition}.    
\end{proof}

\subsection{Weisfeiler--Leman and Direct Products}

Let $G \in \mathcal{C}$ (see Definition~\ref{def:PerfectCenterless}) be a group of order $n$. Our first step will be to show that if $H \not \in \mathcal{C}$, then count-free WL Version II will distinguish $H$ from in $O(\log \log n)$ rounds. We will accomplish this using the Rank Lemma due to Grochow and Levet \cite[Lemma~4.3]{GrochowLevetWL}. We will first introduce a notion of weight.

\begin{definition}[cf. {\cite[Definition~4.9]{GLDescriptiveComplexity}}] \label{def:weight}
Let $H \leq G$ be a characteristic subgroup with a unique fully refined direct product decomposition $\mathcal{S} = \{ S_1, \ldots, S_k\}$. Let $g \in G$. If $g \in H$, write $g = \prod_{i=1}^{k} s_{i}$, where $s_i \in S_i$. In this case, define the \textit{weight} of $g$, denoted $\wt(g) := |\{ i : s_i \neq 1 \}|$. Otherwise, we have that $g \not \in H$, in which case we define $\wt(g) := \infty$.
\end{definition}

Note that as $\mathcal{S}$ is the unique fully refined direct product decomposition of $H$, the decomposition of $g = \prod_{i=1}^{k} s_{i}$ is unique up to the ordering of the factors.

\begin{remark}
Definition~\ref{def:weight} generalizes \cite[Definition~4.9]{GLDescriptiveComplexity}, where Grochow and Levet considered $G$ to have no Abelian normal subgroups and $H$ to be the socle of $G$, denoted $\Soc(G)$. (Note that $\Soc(G)$ is the direct product of the minimal normal subgroups of $G$.) In this setting, the $S_i$ in the unique refined direct product decomposition of $\Soc(G)$ were the non-Abelian simple direct factors of $\Soc(G)$.    
\end{remark}

As in \cite{GLDescriptiveComplexity}, our notion of weight is a special case of \textit{rank}, as introduced in \cite{GrochowLevetWL}.

\begin{definition}[{\cite[Definition~4.1]{GrochowLevetWL}}]
Let $C \subseteq G$ be a subset of a group $G$ that is closed under taking inverses. We define the $C$-\textit{rank} of $g \in G$, denoted $\rk_{C}(g)$, as the minimum $m$ such that $g$ can be written as a word of length $m$ in the elements of $C$. If $g$ cannot be so written, we define $\rk_{C}(g) = \infty$.
\end{definition}

\begin{lemma}[{\cite[Rank Lemma~4.3]{GrochowLevetWL}}] \label{lem:RankLemma}
Let $C \subseteq G$ be a subset of $G$ that is closed under taking inverses. Let $J \in \{I,II\}$. Suppose that if $x \in C$ and $y \not \in C$, then Spoiler can win in the count-free Version $J$ pebble game from the initial configuration $x \mapsto y$, using $k$ pebbles and $r$ rounds. If $\rk_{C}(g) \neq \rk_{C}(h)$, then Spoiler can win in the count-free Version $J$ game with $k+1$ pebbles and $r + \log(d) + O(1)$ rounds, where $d = \text{diam}(\text{Cay}(\langle C \rangle, C)) \leq |\langle C \rangle| \leq |G|$.
\end{lemma}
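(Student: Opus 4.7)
The plan is a halving strategy for Spoiler: at each step, one extra pebble is used to cut the $C$-rank on the $G$-side of the ``distinguished'' pair roughly in half, until the rank reaches $1$ and the base-case hypothesis applies. Without loss of generality, let $m := \rk_C(g) < \rk_C(h)$ (possibly $\rk_C(h) = \infty$). The two base cases are immediate. If $m = 1$, then $g \in C$ and $h \notin C$, and Spoiler wins in $r$ rounds with $k$ pebbles by hypothesis. If $m = 0$, then $g = 1 \neq h$, and Spoiler wins in $O(1)$ rounds by placing a second pebble pair on $g$ and forcing either the equality clause ($g_i = g_j$) or a product clause ($g_i g_j = g_k$) of the Version~$J$ initial-coloring condition to fail on the $H$-side.

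For $m \geq 2$, fix a minimum-length expression $g = c_1 \cdots c_m$ with each $c_i \in C$, and set $a := c_1 \cdots c_{\lceil m/2 \rceil}$, so that $\rk_C(a)$ and $\rk_C(a^{-1}g)$ are both at most $\lceil m/2 \rceil$. Spoiler places an unused pebble on $a \in G$, and Duplicator responds with some $\alpha \in H$. By the triangle inequality for distance in $\text{Cay}(\langle C \rangle, C)$, we have $\rk_C(\alpha) + \rk_C(\alpha^{-1}h) \geq \rk_C(h) > m \geq \rk_C(a) + \rk_C(a^{-1}g)$, so at least one of $\rk_C(a) < \rk_C(\alpha)$ or $\rk_C(a^{-1}g) < \rk_C(\alpha^{-1}h)$ must hold. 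In the first case, Spoiler picks up the now-redundant $(g,h)$ pair and recurses on $(a,\alpha)$. In the second case, Spoiler must force Duplicator to commit $a^{-1}g \mapsto \alpha^{-1}h$: in Version~II this is automatic, since the winning condition demands that $(g,a) \mapsto (h,\alpha)$ extend to an isomorphism of $\langle g, a \rangle$ with $\langle h, \alpha \rangle$, uniquely sending $a^{-1}g$ to $\alpha^{-1}h$; in Version~I, Spoiler instead pebbles $a^{-1}g$ explicitly with the $(k{+}1)$-st pebble and invokes the product clause $a \cdot (a^{-1}g) = g$ to pin down Duplicator's response to $\alpha^{-1}h$.

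Iterating halves the $G$-side rank at each step, so the binary search terminates after at most $\lceil \log m \rceil \leq \lceil \log d \rceil$ iterations, giving $r + \log d + O(1)$ total rounds. The pebble count stays at $k+1$ by careful bookkeeping: at any moment we need only the current distinguished pair plus (in Version~I) at most two auxiliary pebbles for the split, and pebbles that have become redundant are ``discarded'' by picking them up and re-placing them on top of an already-pebbled element, which by the equality clause of the initial coloring forces Duplicator to double them up. The main technical obstacle is precisely this pebble accounting for Case~B in Version~I: since Version~I lacks the subgroup-isomorphism condition, Spoiler must momentarily hold three pebbles active in order to encode the relation $a \cdot (a^{-1}g) = g$, and must then shed two of them (via the doubling-up trick) before the next iteration so as not to exceed the $k+1$-pebble budget. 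Version~II is technically lighter since the subgroup-iso condition propagates product relations ``for free,'' and the same skeleton of the argument covers both cases.
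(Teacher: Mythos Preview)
The paper does not prove this lemma; it is quoted from \cite{GrochowLevetWL} and used as a black box, so there is no in-paper proof to compare your attempt against.

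That said, your halving argument is the standard proof of the Rank Lemma and is correct in outline: split a minimum-length $C$-word for $g$ in half, pebble the prefix $a$, use the triangle inequality $\rk_{C}(\alpha)+\rk_{C}(\alpha^{-1}h)\ge\rk_{C}(h)>m\ge\rk_{C}(a)+\rk_{C}(a^{-1}g)$ to force a strict rank mismatch on one of the two resulting pairs, and iterate until the $G$-side rank drops to at most $1$, where the hypothesis applies. The round count $r+\log d+O(1)$ follows since $m=\rk_C(g)\le d$.

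One minor bookkeeping quibble: in Version~I, your Case~B momentarily holds three pebbles active (on $g$, $a$, and $a^{-1}g$) in order to enforce the product relation, so the halving phase alone requires $3$ pebbles irrespective of $k$. Combined with the $k$ pebbles for the base case, the true bound is $\max(3,k)$, which equals $k+1$ only when $k\ge 2$. Your blanket claim that ``the pebble count stays at $k+1$'' is therefore slightly off for $k=1$ in Version~I, though this is harmless in every application in the paper (where $k$ is always a moderate constant).
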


We will begin with the following lemma, which shows that count-free WL can detect the weight one elements.

\begin{lemma} \label{lem:RankOne}
Let $G$ be a group of order $n$, and let $S \trianglelefteq G$ be $O(1)$-generated. Let $H$ be an arbitrary group. Let $g \in S$, $h \in H$. Suppose that there does not exist $T \trianglelefteq H$ such that $h \in T$ and $T \cong S$. Then in the Version II pebble game, Spoiler can win with $O(1)$ pebbles and $O(1)$  rounds, from the initial configuration $g \mapsto h$.  
\end{lemma}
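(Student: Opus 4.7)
The plan is to give Spoiler an explicit winning strategy in the count-free Version II pebble game from the initial configuration $g \mapsto h$, using $O(1)$ pebble pairs and $O(1)$ rounds. Since $S$ is $d$-generated for some $d = O(1)$, Spoiler can commit Duplicator to a candidate isomorphic copy $T$ of $S$ inside $H$ in $d$ rounds, and then turn the hypothesized non-normality of such a $T$ into an irresolvable structural obstruction in one more round.

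\textbf{Phase 1 (pebble a generating set).} Fix generators $s_1, \ldots, s_d$ of $S$. In round $i = 1, \ldots, d$, Spoiler places a fresh $G$-side pebble on $s_i$; Duplicator responds with some $h_i \in H$. If at any point the pebbled map fails to extend to an isomorphism of the generated subgroups on the two sides, Spoiler has already won. Otherwise, at the end of Phase 1 the map
\[
(g, s_1, \ldots, s_d) \;\longmapsto\; (h, h_1, \ldots, h_d)
\]
extends to an isomorphism $\phi : S = \langle g, s_1, \ldots, s_d \rangle \to T := \langle h, h_1, \ldots, h_d \rangle$. In particular $T \cong S$ and $h = \phi(g) \in T$.

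\textbf{Phase 2 (exploit non-normality).} By the hypothesis of the lemma, $T$ cannot be normal in $H$, so $N_H(T) \subsetneq H$. Spoiler now places a fresh $H$-side pebble on any $x \in H \setminus N_H(T)$; Duplicator must respond with some $y \in G$. Whichever $y$ is chosen, the resulting map cannot extend to an isomorphism $\psi : \langle y, S \rangle \to \langle x, T \rangle$: such a $\psi$ would restrict to $\phi$ on $S$, giving $\psi(S) = T$, and satisfy $\psi(y) = x$; applying $\psi$ to the identity $y S y^{-1} = S$ (valid because $S \trianglelefteq G$) would then yield $x T x^{-1} = T$, contradicting $x \notin N_H(T)$. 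Hence Spoiler wins, with a total of $d + 2 = O(1)$ pebble pairs used over $O(1)$ rounds.

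\textbf{Main obstacle.} The crux is Phase 2: one must check that Duplicator has \emph{no} winning choice of $y$. This is where the dependence on Version II's winning condition (``map extends to iso of generated subgroups'') enters cleanly, since the contradiction comes directly from the global algebraic identity $y S y^{-1} = S$ in $G$ versus $x T x^{-1} \ne T$ in $H$. Translating the argument to Version I would be more delicate because that version only inspects local product relations, so witnessing non-normality would require pebbling additional specific conjugation witnesses; however, this lemma is used downstream only in conjunction with Version II arguments (in the spirit of Lemma~\ref{lem:RankLemma}), so the Version II proof suffices.
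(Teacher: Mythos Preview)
Your proof is correct and follows essentially the same approach as the paper. The paper splits into two explicit cases (no subgroup $T\leq H$ with $h\in T$ and $T\cong S$ exists; or such $T$ exist but none are normal), whereas you fold both into Phase~1 by letting Duplicator's failure to extend handle the first case and proceeding to Phase~2 for the second---but the underlying strategy (pebble a generating set for $S$, then pebble a conjugator witnessing non-normality of $T$ in $H$) is identical, with the same pebble and round counts.
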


\begin{proof}
Let $k := d(S)$. We consider the following cases.
\begin{itemize}
\item \textbf{Case 1:} Suppose first that there does not exist some $T \leq H$ such that $h \in T$ and $T \cong S$. Spoiler uses $k$ pebbles and $k$ rounds to pebble generators $(g_1, \ldots, g_k)$ for $S$. Let $(h_1, \ldots, h_k)$ be Duplicator's response. As there does not exist a copy of $S$ in $H$ that also contains $h$, it follows that $(g, g_1, \ldots, g_k) \mapsto (h, h_1, \ldots, h_k)$ does not extend to an isomorphism of $S = \langle g, g_1, \ldots, g_k \rangle$ and $\langle h, h_1, \ldots, h_k \rangle$. So Spoiler wins with $k$ pebbles and $k$ rounds.

\item \textbf{Case 2:} Now suppose instead that there exists some $T \leq H$ such that $h \in T$ and $T \cong S$. Suppose however, that for all such $T$, we have that $T \not \trianglelefteq H$. Spoiler begins by pebbling generators $(g_1, \ldots, g_k)$ for $S$. Let $(h_1, \ldots, h_k)$ be Duplicator's response. We may assume that the map $(g, g_1, \ldots, g_k) \mapsto (h, h_1, \ldots, h_k)$ extends to an isomorphism of $S = \langle g_1, \ldots, g_k \rangle$ and $T := \langle h_1, \ldots, h_k \rangle$ (and in particular, that $h \in T$). Otherwise, Spoiler wins. As $T \not \trianglelefteq H$, there exists some $x \in H$ such that $xTx^{-1} \neq T$. Spoiler pebbles such an $x \in H$, and Duplicator responds by pebbling some $y \in G$. As $S$ is normal in $G$, it follows that $ySy^{-1} = S$. Thus, the map $(g, g_1, \ldots, g_k, y) \mapsto (h, h_1, \ldots, h_k, x)$ does not extend to an isomorphism of $\langle g, g_1, \ldots, g_k, y \rangle$ and $\langle h, h_1, \ldots, h_k, x \rangle$. So Spoiler wins with $k+1$ pebbles and in $k+1$ rounds.  \qedhere

\end{itemize}

\end{proof}

\begin{lemma} \label{lem:CountFreeDecompose}
Let $G \in \mathcal{C}$ and $H \not \in \mathcal{C}$ be groups of order $n$. The count-free $(O(1), O(\log \log n))$-WL Version II will distinguish $G$ from $H$.    
\end{lemma}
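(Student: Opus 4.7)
The plan is to use the uniqueness of $G$'s fully-refined decomposition to set up a weight-based coloring, show via Lemma~\ref{lem:RankOne} and the Rank Lemma (Lemma~\ref{lem:RankLemma}) that count-free WL refines to this coloring in $O(\log\log n)$ rounds, and then exploit $H\not\in\mathcal{C}$ to exhibit a rank attained on one side but not the other. Write $G = S_1\times\cdots\times S_k$ with each $S_i$ indecomposable, $O(1)$-generated, and perfect or centerless. By Corollary~\ref{cor:PerfectCenterless}, each $S_i$ appears in every fully-refined decomposition of $G$, and since Remak--Krull--Schmidt forces $k$ factors in any such decomposition, this decomposition is unique. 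Hence every $g\in G$ has a well-defined weight $\wt(g)\in\{0,1,\ldots,k\}$, coinciding with $\rk_{C_G}(g)$ for $C_G := \bigcup_i S_i$; note $k\le\log_2 n$. On the $H$-side, let $C_H\subseteq H$ denote the union of all normal subgroups $T\trianglelefteq H$ with $T\cong S_i$ for some $i$; each such $T$ is automatically $O(1)$-generated and perfect or centerless.

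I would then apply Lemma~\ref{lem:RankLemma} to $C := C_G\cup C_H$ (viewed as a subset of both groups in the cross-group pebble game). The base-case hypothesis requires Spoiler to win in $O(1)$ pebbles and rounds from any pair $(x,y)$ where one side lies in its respective part of $C$ and the other does not. If $g\in S_i\subseteq C_G$ and $h\in H$ with $h\not\in C_H$, then no $T\trianglelefteq H$ with $T\cong S_i$ contains $h$, so Lemma~\ref{lem:RankOne} gives Spoiler the win immediately. The symmetric direction invokes Lemma~\ref{lem:RankOne} with the groups swapped, using the rigidity fact that every normal subgroup of $G$ isomorphic to some $S_i$ must coincide with one of the $S_j$'s---a standard consequence of indecomposability and the perfect-or-centerless hypothesis, generalizing the observation that normal subgroups of a direct product of non-abelian simples are products of factors. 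Lemma~\ref{lem:RankLemma} then upgrades the base cases to the following: whenever $\rk_{C_G}(g)\ne\rk_{C_H}(h)$, Spoiler wins in $O(1)$ pebbles and $O(1) + \log d$ rounds, where $d$ is the diameter of the Cayley graph on $\langle C\rangle$. Since all finite ranks are bounded by the number of factors, itself at most $\log_2 n$, we have $\log d = O(\log\log n)$.

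To conclude, I would show that the set of ranks realized in $G$ differs from the set realized in $H$, producing a color attained in one group but not the other. If $\langle C_H\rangle\ne H$, some $h\in H$ has $\rk_{C_H}(h)=\infty$; this color has no counterpart in $G$, where all ranks are finite. If instead $\langle C_H\rangle = H$, select an inclusion-minimal family $\{N_1,\ldots,N_\ell\}$ of normal subgroups contributing to $C_H$ whose product is $H$. A comparison of orders forces $|N_1\cdots N_\ell| = \prod_j |N_j| = |H|$, making the product internal and direct, so $H = N_1\times\cdots\times N_\ell$ with each $N_j\cong S_{\pi(j)}$ for some $\pi(j)$. If $\ell = k$ and the multiplicities of isomorphism types among the $N_j$ match those among the $S_i$, then $H\in\mathcal{C}$, contradicting the hypothesis. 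Hence the multiplicities or total counts differ, yielding a rank realized in exactly one group.

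\textbf{Main obstacle.} The delicate step is this final group-theoretic conversion of $H\not\in\mathcal{C}$ into a concrete rank-profile discrepancy, together with the rigidity claim in the symmetric base case. Both rest on the tight constraints imposed on pairwise intersections of perfect-or-centerless indecomposable normal subgroups, combined with Remak--Krull--Schmidt uniqueness. A secondary technical point is extending the Rank Lemma, as stated, from a single-group setting to the cross-group pebble game on $(G, H)$; this is routine but must be spelled out so that $C_G$ and $C_H$ legitimately combine into a single $C$ governing both sides.
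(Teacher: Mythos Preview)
Your overall strategy matches the paper's: verify the Rank Lemma base case via Lemma~\ref{lem:RankOne}, then exhibit a rank attained on one side but not the other. The paper streamlines this by defining the rank-one set \emph{intrinsically and symmetrically} in each group---namely, the set of elements lying in some $O(1)$-generated perfect-or-centerless normal subgroup, with no reference to the specific factors $S_i$ of $G$. With that definition both base-case directions follow directly from Lemma~\ref{lem:RankOne} (just swap the roles of the two groups), so your rigidity claim is never needed. The paper then simply asserts that $H\notin\mathcal{C}$ forces an element of infinite rank in $H$ and applies the Rank Lemma once.

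Your final paragraph has a genuine gap. The assertion that an inclusion-minimal family $\{N_1,\dots,N_\ell\}$ with $N_1\cdots N_\ell=H$ must satisfy $\prod_j|N_j|=|H|$ (and hence be an internal direct product) is false: in the central product $K=\mathrm{SL}_2(5)\circ \mathrm{SL}_2(5)$, the two natural normal copies of $\mathrm{SL}_2(5)$ form a minimal generating family, yet they intersect in the shared central involution. Nothing in your setup rules out this kind of overlap among the $N_j$. Separately, the multiplicity bookkeeping is unnecessary: if you \emph{could} write $H=N_1\times\cdots\times N_\ell$ with each $N_j\cong S_{\pi(j)}$, then every indecomposable direct factor of $H$ would already be $O(1)$-generated and perfect-or-centerless, so $H\in\mathcal{C}$---the contradiction with $H\notin\mathcal{C}$ is immediate, and no comparison of isomorphism-type multiplicities with $G$ is required.
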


\begin{proof}
Let $S \subseteq G$ consist of the set of elements $g$, such that for each such $g$, there exists some $O(1)$-generated $N_{g} \trianglelefteq G$, where (i) $N_{g}$ is either perfect or centerless and (ii) $g \in N_{g}$. We have by Lemma~\ref{lem:RankOne} that if $g \in S$ and $g' \not \in S$, then Spoiler can win with $O(1)$ pebbles and $O(1)$ rounds. Thus, the hypotheses of the Rank~Lemma~\ref{lem:RankLemma} are satisfied.

As $H \not \in \mathcal{C}$, there exists some $h \in H$ such that $\wt(h) = \infty$. Spoiler begins by pebbling such an $h$. Let $g \in G$ be Duplicator's response. As $G \in \mathcal{C}$, $\wt(g) < \infty$. By Lemma~\ref{lem:RankLemma} and as $G$ has at most $O(\log n)$ direct factors, Spoiler now wins with $O(1)$ pebbles and $O(\log \log n)$ additional rounds. The result now follows.
\end{proof}

\begin{proposition} \label{prop:MainA}
Let $G \in \mathcal{C}$ be a group of order $n$. We have that the count-free $(O(1), O(\log \log n))$-WL Version II identifies $G$.
\end{proposition}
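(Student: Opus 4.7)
The plan is to extend Lemma~\ref{lem:CountFreeDecompose} to cover the case $H \in \mathcal{C}$ with $H \not\cong G$; the case $H \not\in \mathcal{C}$ is already handled by that lemma. First, I will invoke Corollary~\ref{cor:PerfectCenterless} to conclude that both $G$ and $H$ admit unique fully-refined direct product decompositions $G = \prod_{i=1}^{k} S_i$ and $H = \prod_{j=1}^{\ell} T_j$. Since $G \not\cong H$, the Remak--Krull--Schmidt theorem implies the multisets of isomorphism types of these indecomposable factors must differ.

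I will exhibit a winning strategy for Spoiler in the count-free Version~II pebble game using $O(1)$ pebbles and $O(\log \log n)$ rounds. The first step, common to what follows, is to apply the Rank Lemma~\ref{lem:RankLemma} with Lemma~\ref{lem:RankOne} as the base case (parallel to the proof of Lemma~\ref{lem:CountFreeDecompose}, but now invoked symmetrically in both $G$ and $H$, both of which lie in $\mathcal{C}$). This forces Duplicator's responses to preserve weight-$1$ status and the isomorphism type of the containing factor: if Spoiler pebbles weight-$1$ element $g \in S_i$ of $G$, Duplicator must respond with a weight-$1$ element $h \in H$ contained in some $T_j \cong S_i$, else Spoiler wins within $O(\log \log n)$ rounds.

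Next I split into two subcases based on the nature of the mismatch in iso-type multisets. In the first, some iso type $T$ appears in one decomposition but not the other; without loss of generality $S_i \cong T$ for some $i$ but no $T_j \cong T$. I will adapt Case~2 of Lemma~\ref{lem:RankOne}: Spoiler pebbles $g \in S_i$ and generators of $S_i$, so Duplicator's response generates some $K \leq H$ with $K \cong T$ (else Version~II's initial coloring wins immediately). Since $T$ is indecomposable and perfect or centerless, Corollary~\ref{cor:PerfectCenterless} applied to $H$ would force any normal such $K$ to appear in $H$'s unique decomposition, contradicting $T \notin \{T_j\}$. Hence $K$ is not normal in $H$; Spoiler then pebbles $x \in H$ with $xKx^{-1} \neq K$, while Duplicator's response $y \in G$ must normalize $S_i \trianglelefteq G$, yielding a contradiction in $O(1)$ additional rounds. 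In the second subcase, the iso-type sets coincide but multiplicities differ; without loss of generality $m_G(T) > m_H(T) \geq 1$ for some $T$. I will have Spoiler pebble generators of two distinct factors $S_i, S_{i'} \cong T$ of $G$, so the generated subgroup is isomorphic to $T \times T$; Duplicator must match this with an isomorphic subgroup of $H$, and an argument analogous to Case~2 of Lemma~\ref{lem:RankOne}, again using the normality restriction from Corollary~\ref{cor:PerfectCenterless}, forces a contradiction.

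The hardest part will be the multiplicity-difference subcase: count-free WL cannot count factor multiplicities directly, so the distinguishing argument must proceed entirely through Version~II's generated-subgroup isomorphism check together with the normality constraints afforded by Corollary~\ref{cor:PerfectCenterless}. In particular, I will need to argue carefully that $O(1)$ pebbles suffice to realize a pair of distinct $T$-factors of $G$ in a way that Duplicator cannot replicate in $H$ without violating either the generated-subgroup iso check or the normality of the constituent copies of $T$.
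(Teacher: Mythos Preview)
Your Case A (an isomorphism type appearing among the $S_i$ but not among the $T_j$, or vice versa) is fine: Corollary~\ref{cor:PerfectCenterless} guarantees that any normal copy of such a $T$ in $H$ would have to be one of the $T_j$, so Lemma~\ref{lem:RankOne} applies directly and Spoiler wins in $O(1)$ pebbles and rounds.

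The genuine gap is in Case B. Your plan is to pebble generators of two distinct factors $S_i, S_{i'} \cong T$ in $G$, forcing Duplicator to produce a subgroup $K = K_1 \times K_2 \cong T \times T$ in $H$, and then to argue via normality that $K_1, K_2$ must each be one of the $T_j$. That argument is correct as far as it goes, but it only yields a contradiction when $m_H(T) \leq 1$. If $m_H(T) \geq 2$, Duplicator simply maps your two copies onto two genuine $T$-factors of $H$; the generated subgroup is isomorphic, both components are normal, and nothing fails. To force a contradiction with your approach you would need to pebble generators of $m_H(T)+1$ distinct $T$-factors of $G$, costing $(m_H(T)+1)\cdot d$ pebbles. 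Since $m_H(T)$ can be $\Theta(\log n)$, this is not $O(1)$ pebbles, and count-free WL gives you no way to ``remember'' which factors have already been matched so as to iterate with a bounded pebble budget.

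The paper's proof avoids this entirely by a different mechanism: instead of trying to pin down multiplicities via normality, Spoiler pebbles a single element $g$ of weight $k$ (hitting every factor of $G$), then repeatedly halves it as $g = g' \cdot g''$ with $\wt(g') = \lceil k/2\rceil$. The Rank Lemma forces Duplicator's responses $h', h''$ to have matching weights, and since $h = h'h''$ with $\wt(h) = k$, their supports partition the factors of $H$. Because $G \not\cong H$, at least one half has no isomorphism-type-preserving bijection between the corresponding factor sets, and Spoiler recurses on that half. After $O(\log k) \leq O(\log\log n)$ rounds this reaches a weight-$1$ mismatch, at which point Lemma~\ref{lem:RankOne} finishes. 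Only a constant number of pebbles are live at any time. This halving idea is what you are missing; your normality-based case analysis cannot substitute for it when multiplicities are large.
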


\begin{proof}
Let $H$ be a group of order $n$ such that $G \not \cong H$. By Lemma~\ref{lem:CountFreeDecompose}, we may assume that $H \in \mathcal{C}$; otherwise, Spoiler wins with $O(1)$ pebbles and $O(\log \log n)$ rounds.

Consider the fully refined direct product decompositions $G = \prod_{i=1}^{k} S_{i}$ and $H = \prod_{i=1}^{k'} T_{i}$. As $G, H \in \mathcal{C}$, we have for each $i \in [k]$ and each $j \in [k']$ that $S_{i}, T_{j}$ are $O(1)$-generated, and either perfect or centerless. 

We first claim that $k = k'$; or Spoiler wins with $O(1)$ pebbles and $O(\log k) \leq O(\log \log n)$ rounds. Without loss of generality, suppose that $k > k'$. Spoiler begins by pebbling some element $g \in G$ such that $\wt(g) = k$. Let $h \in H$ be Duplicator's response such that. As $k' < k$, we have that $\wt(h) < k$. Thus, by the Rank~Lemma~\ref{lem:RankLemma}, Spoiler can win with $O(1)$ additional pebbles and $O(\log k) \leq O(\log \log n)$ additional rounds.

We may now assume that $k = k'$. Spoiler begins by pebbling some $g \in G$ of weight $k$. Let $h \in H$ be Duplicator's response. We may again assume that $\wt(g) = \wt(h)$; otherwise by the Rank~Lemma~\ref{lem:RankLemma}, Spoiler wins with $O(1)$ pebbles and in $O(\log k) \leq O(\log \log n)$ rounds.

By the Remak--Krull--Schmidt Theorem, as $G \not \cong H$, we have for any permutation $\pi \in \text{Sym}(k)$, that there exists some $i \in [k]$ such that $S_i \not \cong T_{\pi(i)}$. Now we may write $g, h$ uniquely (up to reordering of the factors) as $g := \prod_{i=1}^{k} s_{i}$ and $h := \prod_{i=1}^{k} t_i$, where (without loss of generality) for each $i \in [k]$, $s_i \in S_i$ and $t_i \in T_i$.

Let:
\begin{align*}
&g' := \prod_{i=1}^{\lceil k/2 \rceil} s_i, \text{ and } \,
g'' := \prod_{i=\lceil k/2 \rceil + 1}^{k} s_i.    
\end{align*} 
Spoiler begins by pebbling $g', g''$. Let $h', h''$ be Duplicator's response. As $g = g' \cdot g''$, we have that $h = h' \cdot h''$, or Spoiler immediately wins. By the Rank Lemma~\ref{lem:RankLemma}, we have that $\wt(g') = \wt(h')$ and $\wt(g'') = \wt(h'')$, or Spoiler wins with $O(1)$ additional pebbles and $O(\log k) \leq O(\log \log n)$ additional rounds.

Let $\mathcal{S}' := \{ S_1, \ldots, S_{\lceil k/2 \rceil} \}$, and let $\mathcal{S}'' := \{ S_{\lceil k/2\rceil+1}, \ldots, S_{k}\}$. Similarly, $\mathcal{T}' := \{ T_{1}, \ldots, T_{\lceil k/2 \rceil}\}$ and $\mathcal{T}'' := \{ T_{\lceil k/2 \rceil+1}, \ldots, T_{k} \}$. As $G \not \cong H$, one of the following conditions necessarily holds:
\begin{itemize}
 \item There does not exist a bijection $\varphi : \mathcal{S}' \to \mathcal{T}'$ such that $S \cong \varphi(S)$ for all $S \in \mathcal{S}'$, or
 
 \item There does not exist a bijection $\varphi : \mathcal{S}'' \to \mathcal{T}''$ such that $S \cong \varphi(S)$ for all $S \in \mathcal{S}''$.
\end{itemize}

Without loss of generality, suppose that there does not exist a bijection $\varphi : \mathcal{S}' \to \mathcal{T}'$ such that $S \cong \varphi(S)$ for all $S \in \mathcal{S}'$.

Thus, Spoiler now reuses the pebbles on $g, g''$, and we iterate on this strategy starting from $g' \mapsto h'$. Thus, after $O(\log k) \leq O(\log \log n)$ rounds, we have reached the case where $\wt(g') = \wt(h') = 1$. By  Lemma~\ref{lem:RankOne}, Spoiler now wins with $O(1)$ additional pebbles and $O(1)$ additional rounds. In total, Spoiler used $O(1)$ pebbles and $O(\log \log n)$ rounds.
\end{proof}

\begin{proposition} \label{prop:MainB}
Let $G \in \mathcal{C}$. The counting $(O(1), O(1))$-WL Version II identifies $G$.
\end{proposition}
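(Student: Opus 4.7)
The plan is to strengthen Proposition~\ref{prop:MainA} by replacing its $O(\log k)$-round binary search with a single counting step, thereby cutting the round count from $O(\log \log n)$ to $O(1)$. Let $d = O(1)$ bound the minimum number of generators of any indecomposable direct factor in $\mathcal{C}$, write the fully refined decomposition $G = \prod_{i=1}^{k} S_i$, and set $k_0 = d + 2$. By Corollary~\ref{cor:PerfectCenterless}, the collection $\{S_1, \ldots, S_k\}$ is canonically determined as a family of subgroups of $G$ (independent of the chosen decomposition), so every weight-$1$ element $g \in S_i \setminus \{1\}$ carries a well-defined type $\tau(g) := [S_i]$. I plan to play the bijective counting pebble game for $k_0$-WL Version~II against an arbitrary group $H$ with $|G| = |H|$ and $G \not\cong H$, and show that Spoiler wins in $O(1)$ rounds using $O(1)$ pebbles.

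In round~$1$, Duplicator produces a bijection $f_1 : G \to H$. The key counting step is that there must exist a weight-$1$ element $g \in G$ of some type $[S]$ such that $h := f_1(g)$ does not lie in any $d$-generated perfect-or-centerless normal subgroup $T \trianglelefteq H$ with $T \cong S$. If $H \in \mathcal{C}$ with decomposition $H = \prod_j T_j$, then Remak--Krull--Schmidt together with $G \not\cong H$ force some isomorphism class $[S]$ to appear with strictly greater multiplicity among the $S_i$ than among the $T_j$, and so $G$ contains strictly more weight-$1$ type-$[S]$ elements than $H$; if instead $H \not\in \mathcal{C}$, then some element of $H$ lies in no $d$-generated perfect-or-centerless normal subgroup of $H$ at all, while every non-identity element of $G$ lies in such a subgroup, and $|G| = |H|$ again forces some type to be over-represented in $G$. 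Pigeonhole applied to $f_1$ produces the desired $g$, which Spoiler pebbles.

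Spoiler finishes by invoking the strategy of Lemma~\ref{lem:RankOne}: over the next $d + O(1)$ rounds, Spoiler pebbles a generating tuple $x_1, \ldots, x_d$ of the factor $S_i$ containing $g$ (and, if needed, one additional element as in Case~2 to witness non-normality of any subgroup of $H$ isomorphic to $S$ that contains $h$), forcing the pebbled map $(g, x_1, \ldots, x_d) \mapsto (h, y_1, \ldots, y_d)$ to fail to extend to an isomorphism of $\langle g, x_1, \ldots, x_d\rangle = S_i$ onto $\langle h, y_1, \ldots, y_d\rangle$. In total, $O(1)$ pebbles and $O(1)$ rounds suffice. The main obstacle is verifying the round-$1$ pigeonhole rigorously in the $H \not\in \mathcal{C}$ subcase, and checking that Case~2 of Lemma~\ref{lem:RankOne} transfers to the bijective counting game (where Spoiler may pebble only on the $G$ side, so the non-normality witness must be forced through a count-based selection rather than direct placement in $H$); both issues reduce to the normality of each $S_i$ in $G$ combined with the set-wise canonicity provided by Corollary~\ref{cor:PerfectCenterless}.
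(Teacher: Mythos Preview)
Your proposal takes essentially the same approach as the paper's proof: exploit Duplicator's round-$1$ bijection via a pigeonhole/count to locate a weight-$1$ element $g$ whose image $h$ fails the hypothesis of Lemma~\ref{lem:RankOne}, and then invoke that lemma to finish in $O(1)$ further rounds. Your flagged concern about Case~2 of Lemma~\ref{lem:RankOne} in the bijective game is not a genuine obstacle: because Duplicator must commit to a bijection $f$ before Spoiler pebbles, Spoiler may place $f^{-1}(x)\in G$ to force the paired pebble onto any desired $x\in H$, so the non-normality witness can be selected exactly as in the count-free argument.
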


\begin{proof}
Let $H$ be a group of order $n:= |G|$ such that $H \not \cong G$. Duplicator chooses a bijection $f : G \to H$. As $G \not \cong H$, there exists some $g \in G$ such that $\wt(g) = 1$, and either: (i) $\wt(f(g)) > 1$, or (ii) $\wt(g) = 1$ and the (unique) direct factor $S \trianglelefteq G$ containing $g$ is not isomorphic to the (unique) direct factor $T \trianglelefteq H$ containing $f(g)$. Both of these cases are handled by  Lemma~\ref{lem:RankOne}, which provides that Spoiler wins with $O(1)$ pebbles and $O(1)$ rounds. The result now follows.
\end{proof}

We now prove Theorem~\ref{thm:MainWL}.

\begin{proof}[Proof of Theorem~\ref{thm:MainWL}]
Part (a) was established in Proposition~\ref{prop:MainA}, and part (b) was established in Proposition~\ref{prop:MainB}.
\end{proof}

We conclude by proving Theorem~\ref{thm:MainSimple}. Our underlying strategy is essentially the same as in proving Theorem~\ref{thm:MainWL}(a). Once we determine that both groups decompose as a direct product of non-Abelian simple groups, we begin by pebbling an element of maximum weight. We then use the same repeated-halving strategy, until we have $g \mapsto h$ pebbled, where $g \in G, h \in H$, and $\wt(g) = \wt(h) = 1$. If $g$ and $h$ belong to non-isomorphic simple groups, we may pebble generators and win. In order to obtain $O(\log \log n)$ rounds with count-free WL Version I, we will also take advantage of the following result due to Babai, Kantor, and Lubotsky. 

\begin{theorem}[{\cite{BabaiKantorLubotsky}}] \label{thm:BabaiKantorLubotsky}
There exists an absolute constant $C$, such that for every finite simple group $S$, there exists a generating set $g_1, \ldots, g_7 \in S$ such that every element of $S$ can be realized as a word over $\{g_1, \ldots, g_7\}$ of length at most $C \log |S|$.
\end{theorem}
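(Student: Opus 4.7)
The plan is to invoke the Classification of Finite Simple Groups (CFSG) and then handle each family of non-Abelian finite simple groups by producing explicit generators together with an efficient scheme for writing arbitrary elements as short words. The 26 sporadic groups are absorbed into the constant $C$, since each has bounded order. The two substantive cases are the alternating groups $A_n$ and the groups of Lie type $G(q)$ (classical and exceptional).

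For a group of Lie type $G(q)$ I would exploit the Bruhat decomposition $G(q) = \bigcup_{w \in W} U\, w\, T\, U$, where $T$ is a maximal torus, $U$ a maximal unipotent subgroup, and $W$ the Weyl group. A generating set of bounded size can be assembled from (i) a constant-size set generating $W$ (whose rank is bounded once the Lie type is fixed, and whose order is absolutely bounded in exceptional types), (ii) a multiplicative generator of $\F_q^*$ together with a Frobenius-like element, which allows every element of the torus $T \cong (\F_q^*)^r$ to be reached in $O(\log q)$ steps by repeated squaring, and (iii) a ``master'' root-subgroup generator that, through Weyl conjugation and the Steinberg-type commutator and Chevalley relations, reaches every other root subgroup $U_\alpha \cong (\F_q, +)$. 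Tallying these ingredients along the Bruhat decomposition writes an arbitrary element of $G(q)$ as a word of length $O(\log q \cdot \mathrm{rank}(G)) = O(\log |G|)$ in the chosen generators, and the total number of generators can then be pared down to at most seven by combining generators from the three groups above using a small amount of extra ``encoding'' freedom in $U$.

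For the alternating groups $A_n$, the obvious generating sets (say a 3-cycle together with an $n$-cycle) generate $A_n$ but can force expressions of length $\Theta(n^2)$, whereas $\log|A_n| = \Theta(n \log n)$. The plan is to pick a bounded-size generating set that encodes a logarithmic-depth sorting/permutation network on $[n]$, so that an arbitrary permutation decomposes into $O(n \log n)$ generator applications, one per comparator of the network; a final compression step caps the generating set at seven. The main obstacle is precisely this $A_n$ case: one must exhibit a constant-size generating set of $A_n$ whose Cayley graph has diameter $O(n \log n) = O(\log |A_n|)$, and no naive or generic choice suffices. Producing such a generating set, together with an algorithmic proof that every even permutation is indeed a short word in it, is the technical heart of the argument and where I would expect to spend most of the proof's effort.
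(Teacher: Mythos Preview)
The paper does not prove this theorem at all: it is quoted verbatim from Babai--Kantor--Lubotsky and used as a black box in the proof of Theorem~\ref{thm:MainSimple}. So there is no ``paper's own proof'' to compare against; the relevant comparison is with the original source.

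That said, your sketch is a faithful outline of the actual Babai--Kantor--Lubotsky argument. The structure---CFSG case split, Bruhat decomposition for groups of Lie type with a bounded Weyl group and $O(\log q)$ cost per torus/root-subgroup factor, and a separate treatment of $A_n$---is exactly how the original proof proceeds. Your identification of $A_n$ as the delicate case is also correct: Babai--Kantor--Lubotsky handle it by constructing a specific $7$-element generating set (built from a long cycle and carefully chosen short cycles) and showing directly that any even permutation can be sorted in $O(n\log n)$ moves, essentially implementing a merge-sort-style network inside $A_n$. Your ``logarithmic-depth sorting network'' intuition points in the right direction, but the actual construction is concrete rather than an appeal to generic sorting networks, and verifying the $O(n\log n)$ bound with only seven generators is where the real work lies. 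The ``paring down to seven'' step you mention for the Lie-type case is also nontrivial in the original and requires some care with the rank dependence.
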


The Classification of Finite Simple Groups provides that every finite simple group is generated by at most two elements $\{a,b\}$. As the diameter of the Cayley graph $\text{Cay}(S, \{a,b\})$ is at most $|S|$, it follows that every element in $S$ can be realized as a word of length at most $|S|$ over $\{a,b\}$. We are not aware of better bounds for such $2$-element generating sets.

\begin{proof}[Proof of Theorem~\ref{thm:MainSimple}]
Let $G$ be a group of order $n$, such that $G$ decomposes as a direct product of non-Abelian simple groups. In particular, we write $G = \prod_{i=1}^{k} S_i$, where $S_i$ is non-Abelian simple ($i \in [k]$). Let $H$ be a group of order $n$ such that $H \not \cong G$. 

We first note that $G$ has no Abelian normal subgroups. If $H$ has an Abelian normal subgroup, we have by \cite[Lemma~6.6]{CollinsLevetWL} that Spoiler can win with $O(1)$ pebbles and $O(1)$ rounds. Now suppose that $H$ has no Abelian normal subgroups, but that $H$ does not decompose as a direct product of non-Abelian simple groups. By \cite[Lemma~6.12]{CollinsLevetWL}, Spoiler can win with $O(1)$ pebbles and $O(\log \log n)$ rounds.

So now we have that $H = \prod_{i=1}^{k'} T_{i}$, where $T_{i}$ is non-Abelian simple for all $i \in [k']$. We claim that if $k \neq k'$, then Spoiler can win with $O(1)$ additional pebbles and $O(\log \log n)$ additional rounds. Without loss of generality, suppose that $k > k'$. Spoiler pebbles an element $g \in G$ such that $\wt(g) = k$. Let $h$ be Duplicator's response. Necessarily, $\wt(h) < k'$. So by the Rank Lemma~\ref{lem:RankLemma}, Spoiler wins with $O(1)$ additional pebbles and $O(\log k) \leq O(\log \log n)$ additional rounds. 

Thus, we may now assume $k = k'$. Spoiler begins by pebbling some $g \in G$ such that $\wt(g) = k$. Let $h \in H$ be Duplicator's response. We may again assume that $\wt(g) = \wt(h)$. Otherwise, by the Rank Lemma~\ref{lem:RankLemma}, Spoiler wins with $O(1)$ additional pebbles and $O(\log k) \leq O(\log \log n)$ additional rounds. 

By the Remak--Krull--Schmidt theorem, as $G \not \cong H$, we have for any permutation $\pi \in \text{Sym}(k)$, that there exists some $i \in [k]$ such that $S_{i} \not \cong T_{\pi(i)}$. Now we may write $g, h$ uniquely (up to reordering of the factors) as $g = \prod_{i=1}^{k} s_{i}$ and $h = \prod_{i=1}^{k} t_{i}$ where (without loss of generality), for each $i \in [k]$, $s_i \in S_i$ and $t_i \in T_i$. Let:
\begin{align*}
&g' := \prod_{i=1}^{\lceil k/2 \rceil} s_i, \text{ and } \, g'' := \prod_{i=\lceil k/2 \rceil + 1}^{k} s_i.    
\end{align*} 
Spoiler begins by pebbling $g', g''$. Let $h', h''$ be Duplicator's response. As $g = g' \cdot g''$, we have that $h = h' \cdot h''$, or Spoiler immediately wins. By the Rank Lemma~\ref{lem:RankLemma}, we have that $\wt(g') = \wt(h')$ and $\wt(g'') = \wt(h'')$, or Spoiler wins with $O(1)$ additional pebbles and $O(\log k) \leq O(\log \log n)$ additional rounds.

Let $\mathcal{S}' := \{ S_1, \ldots, S_{\lceil k/2 \rceil} \}$, and let $\mathcal{S}'' := \{ S_{\lceil k/2\rceil+1}, \ldots, S_{k}\}$. Similarly, $\mathcal{T}' := \{ T_{1}, \ldots, T_{\lceil k/2 \rceil}\}$ and $\mathcal{T}'' := \{ T_{\lceil k/2 \rceil+1}, \ldots, T_{k} \}$. As $G \not \cong H$, one of the following conditions necessarily holds:
\begin{itemize}
 \item There does not exist a bijection $\varphi : \mathcal{S}' \to \mathcal{T}'$ such that $S \cong \varphi(S)$ for all $S \in \mathcal{S}'$, or
 
 \item There does not exist a bijection $\varphi : \mathcal{S}'' \to \mathcal{T}''$ such that $S \cong \varphi(S)$ for all $S \in \mathcal{S}''$.
\end{itemize}

Without loss of generality, suppose that there does not exist a bijection $\varphi : \mathcal{S}' \to \mathcal{T}'$ such that $S \cong \varphi(S)$ for all $S \in \mathcal{S}'$.

Thus, Spoiler now reuses the pebbles on $g, g''$, and we iterate on this strategy starting from $g' \mapsto h'$. Thus, after $O(\log k) \leq O(\log \log n)$ rounds, we have reached the case where $\wt(g') = \wt(h') = 1$. Let $S \trianglelefteq G$ be the non-Abelian simple direct factor of $G$ containing $g'$, and let $T \trianglelefteq H$ be the non-Abelian simple direct factor of $H$ containing $h'$. Necessarily, $S \not \cong T$. 

By Theorem~\ref{thm:BabaiKantorLubotsky}, there exists a $7$-element generating set $\{g_1, \ldots, g_7\}$ of $S$ such that every element in $S$ can be realized as a word over $\{ g_1, \ldots, g_7\}$ of length at most $|C| \log |S|$. Spoiler pebbles $g_1, \ldots, g_7$. By \cite[Section~3]{CollinsLevetWL}, Spoiler now wins with $O(1)$ additional pebbles and $O(\log \log n)$ additional rounds. The result now follows.
\end{proof}

\section{Direct Product Decompositions in Parallel}

In this section, we will prove Theorem~\ref{thm:MainDecompose}. We accomplish this by parallelizing the work of Kayal and Nezhmetdinov \cite{KayalNezhmetdinov}.

\subsection{Group Division in Parallel}

In this section, we consider the \algprobm{Group Division} problem, which takes as input a group $G$ given by its multiplication table and $A \trianglelefteq G$ as a set of elements, and asks for a $B \trianglelefteq G$ such that $G = A \times B$, if such a $B$ exists. Following the strategy of Kayal and Nezhmetdinov \cite[Section~4]{KayalNezhmetdinov}, we will break up the problem into the following two cases: when $G/A$ is Abelian, and when $G/A$ is non-Abelian.

We will consider first the case when $G/A$ is Abelian. We will first some key preliminaries.

\begin{definition}
Let $G$ be a group. We say that $x \in G$ \textit{splits} from $G$ if there exists $H \leq G$ such that $G = H \times \langle x \rangle$.    
\end{definition}

The following lemma characterizes when an element splits from an Abelian $p$-group.

\begin{lemma}[{\cite[Lemma~6.6]{BrachterSchweitzerWLLibrary}}] \label{lem:Split}
Let $G$ be an Abelian $p$-group. An element $x \in G$ splits from $G$ if and only if there does not exist $y \in G$ such that $|xy^p| < |x|$.
\end{lemma}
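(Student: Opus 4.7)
The plan is to translate everything into additive notation (valid since $G$ is Abelian) and recognize the criterion as a statement about pure subgroups in a finite Abelian $p$-group. Write $|x| = p^k$. Then $|xy^p| < |x|$ rewrites as $p^{k-1}(x + py) = 0$, which rearranges to $p^{k-1} x = -p^k y \in p^k G$. So the hypothesis that no such $y$ exists is equivalent to the single height statement $p^{k-1} x \notin p^k G$.

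For the forward direction, I would fix a decomposition $G = H \oplus \langle x \rangle$ and write an arbitrary $y \in G$ as $h + cx$ with $h \in H$ and $c \in \mathbb{Z}$. A one-line computation gives $x + py = ph + (1+pc)x$, and since $1 + pc$ is coprime to $p$, it is a unit modulo $p^k$, so the $\langle x \rangle$-component $(1+pc)x$ still has order $p^k$. Because $H \cap \langle x \rangle = 0$, the order of $x + py$ in the direct sum equals $\mathrm{lcm}(|ph|, p^k) \geq p^k = |x|$, ruling out any $y$ with $|xy^p| < |x|$.

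For the reverse direction, I would first upgrade the single condition $p^{k-1} x \notin p^k G$ to the entire ladder $p^\ell x \notin p^{\ell+1} G$ for $0 \leq \ell < k$: if $p^\ell x = p^{\ell+1} w$ for some $w$ and some $\ell < k-1$, multiplying by $p^{k-1-\ell}$ yields $p^{k-1} x = p^k w \in p^k G$, contradicting the hypothesis. This ladder is exactly the purity condition $\langle x \rangle \cap p^j G = p^j \langle x \rangle$ for all $j$ applied to $x$ and its multiples. I would then invoke the classical fact from finite Abelian group theory that any pure subgroup of a finite Abelian $p$-group is a direct summand to conclude $G = H \oplus \langle x \rangle$ for some $H \leq G$.

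The main obstacle is the purity step in the reverse direction: correctly showing that the single top-level height condition $p^{k-1} x \notin p^k G$ propagates down to the full ladder, and verifying that this ladder coincides with purity of the cyclic subgroup $\langle x \rangle$. Once purity is established, the direct-summand conclusion follows from standard structure theory, and the forward direction is a direct calculation inside the given decomposition.
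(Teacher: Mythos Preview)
The paper does not prove this lemma; it is quoted without proof from \cite{BrachterSchweitzerWLLibrary}, so there is no in-paper argument to compare against. Your proposal stands on its own and is correct. The reformulation of the hypothesis as the single height condition $p^{k-1}x \notin p^k G$ is accurate, the forward direction is a clean direct-sum computation, and the downward propagation to $p^\ell x \notin p^{\ell+1}G$ for all $\ell < k$ is valid.

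One small point worth tightening: your claim that ``this ladder is exactly the purity condition'' is true but deserves a sentence. Purity requires $\langle x\rangle \cap p^j G \subseteq p^j\langle x\rangle$ for every $j$, not only $j = \ell+1$. This follows because any nonzero $p^a u x$ (with $u$ a $p$-adic unit, $a < k$) lying in $p^j G$ for some $j > a$ would in particular lie in $p^{a+1}G$, contradicting the ladder at $\ell = a$. Once purity is established, the invocation of the classical fact that pure subgroups of finite Abelian $p$-groups are direct summands is standard.
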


\begin{lemma}[{\cite[Proposition~3.1]{BKLM}}] \label{lem:Order}
Let $G$ be a group, and let $x \in G$. We can compute $|x|$, as well as $\langle x \rangle$, in $\textsf{FL} \cap \textsf{FOLL}$.
\end{lemma}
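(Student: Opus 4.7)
For the $\textsf{L}$ bound, I would run the obvious iterative algorithm: initialize $y \gets x$ and a counter $k \gets 1$; then repeatedly set $y \gets y \cdot x$ (one Cayley-table lookup) and $k \gets k+1$ until $y = 1$, at which point output $k$. At every step the state consists of a single element of $G$ and a counter bounded by $n$, each encoded in $O(\log n)$ bits, so the procedure runs in logarithmic space. The loop terminates after exactly $|x| \le n$ iterations because the sequence $x, x^2, x^3, \ldots$ hits the identity for the first time precisely at the $|x|$-th step.

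For the $\textsf{FOLL}$ bound the sequential loop above is too deep, so I would instead exploit Lagrange's theorem: $|x|$ divides $n = |G|$, so $|x|$ lies among the divisors of $n$. In the first stage I would enumerate in $\textsf{AC}^0$ the set $D$ of all $d \in \{1, \ldots, n\}$ with $d \mid n$; this is easy in $\textsf{AC}^0$ because $n$ has only $O(\log n)$ bits (so arithmetic on divisor candidates is constant-depth), and the number of divisors is $n^{o(1)}$. In the second stage, for every $d \in D$ in parallel I would test whether $x^d = 1_G$, and finally output $\min\{d \in D : x^d = 1_G\}$, which is an $\textsf{AC}^0$ computation on the resulting yes/no vector.

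The crucial subroutine is therefore the parallel evaluation of $x^d$ for a given exponent $d \le n$ in a group presented by its Cayley table. The natural approach---repeated squaring, computing $x^{2^0}, x^{2^1}, \ldots$ by iterating $y \mapsto y \cdot y$---gives a chain of $O(\log n)$ table lookups and lands in $\textsf{NC}^1$ rather than $\textsf{FOLL}$. To push the depth down to $O(\log \log n)$, I would appeal to the iterated-product machinery for Cayley-table groups: form the cyclic subgroup $\langle x \rangle$ via a constant-depth closure step $S \mapsto S \cdot S$ (repeated only $O(\log \log n)$ times after appropriate batching) and read off $x^d$ inside $\langle x \rangle$ using that its restricted Cayley table is that of $\mathbb{Z}/|x|\mathbb{Z}$. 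The main obstacle is exactly this depth improvement over naive squaring, which is the content of BKLM's Proposition~3.1; I would import it as a black box for the power-computation primitive and assemble the overall $\textsf{FOLL}$ circuit as described.
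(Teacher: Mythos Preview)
The paper does not supply a proof of this lemma; it is stated purely as a citation to \cite[Proposition~3.1]{BKLM}, followed by a remark pointing to a sharper bound in \cite{CGLWISSAC}. So there is no argument in the paper to compare against beyond the bare reference.

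Your $\textsf{L}$ argument is correct and self-contained. For the $\textsf{FOLL}$ bound, your reduction of order-finding to power evaluation is fine (and the restriction to divisors of $n$ is an unnecessary complication: just test $x^d = 1$ for every $d \le n$ in parallel and output the least such $d$, sidestepping any divisibility subroutine). But at the crucial step---evaluating $x^d$ in depth $O(\log\log n)$---you explicitly import BKLM's Proposition~3.1 as a black box. Since that proposition \emph{is} the result the lemma cites, you have not given an independent proof of the $\textsf{FOLL}$ bound; you have unpacked the statement into ``powering in $\textsf{FOLL}$ $\Rightarrow$ order-finding in $\textsf{FOLL}$'' and then re-cited BKLM for the premise. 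This is not a gap relative to the paper, which does the same thing more tersely, but you should be aware that no self-contained $\textsf{FOLL}$ argument has been given.

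One concrete issue with your sketch of the BKLM machinery: the closure step $S \mapsto S\cdot S$ starting from $S=\{1,x\}$ only doubles the exponent range each round (from $\{x^0,\dots,x^k\}$ to $\{x^0,\dots,x^{2k}\}$), so it needs $\Theta(\log n)$ iterations, not $O(\log\log n)$, to exhaust $\langle x\rangle$. The phrase ``after appropriate batching'' is hiding the real content of BKLM's argument, which is more delicate than iterated set-squaring; if you intend to sketch that technique, you should spell out what the batching actually is.
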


\begin{remark}
The complexity of order finding was recently improved to $\DTISPpll \subseteq \textsf{FL} \cap \FOLL$ \cite{CGLWISSAC}.
\end{remark}

\begin{proposition} \label{prop:DecomposeAbelianGroup}
Let $G$ be an Abelian group. We can compute a basis, and hence a fully-refined direct product decomposition, for $G$ in $\textsf{AC}^{1}$.    
\end{proposition}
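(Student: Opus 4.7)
The plan is to reduce to Sylow $p$-subgroups and build a basis of each in parallel via iterated application of \Lem{lem:Split}, with $O(\log n)$ stages of constant depth giving the overall $\AC^{1}$ bound.

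Using \Lem{lem:Order}, I compute $|g|$ for every $g \in G$ in $\LogSpace \subseteq \AC^{1}$. The prime factorization of $n = |G|$ is computable cheaply (each candidate prime has $O(\log n)$ bits, and trial division is straightforward in parallel); for each prime $p \mid n$, the Sylow $p$-subgroup $G_p = \{g \in G : |g| \text{ is a power of } p\}$ is isolated by inspecting the precomputed orders. Since $G$ is Abelian, $G = \prod_p G_p$, so all $G_p$ are handled in parallel, and it suffices to produce a basis of a single Sylow $p$-subgroup $G_p$ in $\AC^{1}$.

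Fix such a $G_p$ of type $(e_1 \geq \cdots \geq e_r)$, so that $G_p \cong \prod_i \Z/p^{e_i}\Z$. The values $|G_p[p^k]| = p^{\sum_i \min(k, e_i)}$, where $G_p[p^k] = \{g \in G_p : g^{p^k} = 1\}$, are computable in $\TCz$ by counting elements of each order, and from them the tuple $(e_1,\ldots,e_r)$ is recovered. I then build a basis $x_1,\ldots,x_r$ inductively. At stage~$1$, pick any $x_1$ of order $p^{e_1}$; such an $x_1$ splits automatically, since any purported witness $y$ in \Lem{lem:Split} would force $x_1^{p^{e_1-1}} y^{p^{e_1}} = 1$ with $y^{p^{e_1}} = 1$, contradicting $|x_1| = p^{e_1}$. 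At stage $j \geq 2$, given $x_1,\ldots,x_{j-1}$ with $|x_i| = p^{e_i}$ such that $K_{j-1} := \langle x_1,\ldots,x_{j-1}\rangle = \prod_{i<j} \langle x_i\rangle$ is maintained as an explicit list of elements, I search in parallel over all $x_j \in G_p$ of order $p^{e_j}$ for one satisfying $x_j^a \notin K_{j-1}$ for every $1 \leq a < p^{e_j}$. Such an $x_j$ exists by applying \Lem{lem:Split} inside the quotient $G_p / K_{j-1}$ (which has type $(e_j,\ldots,e_r)$) and lifting; the standard adjustment-of-the-lift argument from Abelian $p$-group theory shows the lift can be chosen to have order exactly $p^{e_j}$. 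Once $x_j$ is fixed, the updated list $K_j = \{k \cdot x_j^a : k \in K_{j-1},\ 0 \leq a < p^{e_j}\}$ is a union of $p^{e_j}$ translates of $K_{j-1}$ and is assembled in $\ACz$ from $K_{j-1}$ and $x_j$.

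The main obstacle is the serial dependence across the $r \leq \log_p |G_p| \leq \log n$ stages. Each individual stage adds only $O(1)$ levels of unbounded fan-in: membership in $K_{j-1}$ is tested by a lookup against an explicit list, candidate $x_j$'s are screened in parallel, and $K_j$ is assembled in one round of $\ACz$ operations. The total depth across all stages is therefore $O(\log n)$, placing the basis computation for each Sylow subgroup, and hence for $G$, in $\AC^{1}$. Concatenating the bases produced for the various $G_p$ yields a basis of $G$; its cyclic direct factors $\langle x\rangle$ are the indecomposable direct summands and constitute the desired fully-refined direct product decomposition.
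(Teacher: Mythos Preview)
Your proof follows the same skeleton as the paper's: reduce to Sylow $p$-subgroups, then build a basis of each in $O(\log n)$ sequential $\ACz$ stages. The difference lies in how the next basis element is selected. The paper precomputes, via \Lem{lem:Split}, the set of all elements that split from $A_p$ and at each stage picks any splitting element outside the current span $S_k$; you instead precompute the invariant factors $(e_1,\ldots,e_r)$ and at stage $j$ search for an element of the prescribed order $p^{e_j}$ whose cyclic subgroup meets $K_{j-1}$ trivially. Both selection rules cost $\ACz$ per stage after $\LogSpace$/$\TCz$ preprocessing, and both yield a basis, so the complexity analysis is the same.

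One point you leave implicit deserves a sentence rather than an appeal to a ``standard adjustment-of-the-lift argument'': your existence claim for $x_j$ assumes $G_p/K_{j-1}$ has type $(e_j,\ldots,e_r)$, which is exactly the statement that $K_{j-1}$ is a direct summand. This does hold inductively---writing $G_p = K_{j-1}\times L$ and $x_j = k+\ell$ with $k\in K_{j-1}$, $\ell\in L$, your condition $\langle x_j\rangle\cap K_{j-1}=1$ forces $|\ell|=p^{e_j}$, the maximal order in $L$, so $\langle\ell\rangle$ splits off $L$ and $K_j = K_{j-1}\times\langle\ell\rangle$ is again a summand---but that inductive step is the actual content, and it is not supplied by \Lem{lem:Split} applied in the quotient.
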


It is possible to compute a basis for an Abelian group in $\textsf{NC}^{2}$ using Smith normal form \cite{VillardSNF}. Proposition~\ref{prop:DecomposeAbelianGroup} improves this bound to $\textsf{AC}^{1}$, when the underlying group is given by its multiplication table.

\begin{proof}
We first decompose $G$ into a direct product of its Sylow subgroups. As $G$ is Abelian, we have that for each prime $p \mid |G|$, there exists exactly one Sylow $p$-subgroup $A_{p}$ of $G$. We may, in $\textsf{FL} \cap \textsf{FOLL}$, compute the Sylow subgroups of $G$ \cite[Lemma~3.1]{BKLM}. It remains to directly decompose each Sylow subgroup of $G$, which we will do in parallel.

Fix a prime $p$, and consider the Sylow $p$-subgroup $A_{p}$ of $G$. We proceed as follows. First, for each $g \in A_{p}$, we test whether $g$ splits from $A_{p}$. For some fixed $g$, we may use Lemma~\ref{lem:Order} to decide in $\textsf{FL} \cap \textsf{FOLL}$ whether there exists some $h \in A_{p}$ such that $|gh^{p}| < |g|$. By Lemma~\ref{lem:Split}, we have that $g$ splits from $A_{p}$ precisely if no such $h$ exists. Thus, in $\textsf{FL} \cap \textsf{FOLL}$, we can identify the elements of $A_{p}$ that split from $A_{p}$. Next, using Lemma~\ref{lem:Order}, we may for each $g \in A_{p}$, write down the elements of $\langle g \rangle$ in $\textsf{FL} \cap \textsf{FOLL}$.

We now turn to constructing a fully-refined direct product decomposition of $A_{p}$. We iteratively construct a generating set in the following manner. Let $g_{1} \in A_{p}$ such that $g_{1}$ splits from $A_{p}$. As we have already identified the elements that split from $A_{p}$ in the preceding paragraph, we may select such a $g_{1}$ in $\textsf{AC}^{0}$. Let $S_{1} = \langle g_1 \rangle$. As we have already written down $\langle g_1 \rangle$ in the preceding paragraph, we may identify the elements of $S_1$ in $\textsf{AC}^{0}$.

Now fix $k \geq 1$. Let $H_{k} = \{ g_1, \ldots, g_k \}$, and let $S_{k} = \langle H_{k} \rangle$. Suppose that we have written down all of the elements in $S_{k}$. Furthermore, suppose that for all $i \in [k]$, $g_{i}$ splits from $A_{p}$ and $g_{i} \not \in \langle g_1, \ldots, g_{i-1} \rangle$. Finally, suppose that we have constructed $S_{k}$ using an $\textsf{AC}$ circuit of depth $O(k)$ and size $\poly(n)$. We now have the following cases. We may first test whether $S_{k} = A_{p}$ in $\textsf{AC}^{0}$. In this case, $\{ \langle g_1 \rangle, \ldots, \langle g_k \rangle \}$ forms a basis of $A_p$.

Suppose instead that $S_{k} \neq A_{p}$. Then there exists some $g_{k+1} \in A_{p} \setminus S_{k}$. As $A_{p}$ is Abelian, $A_{p}$ admits a decomposition into a direct product of cyclic groups. In particular, at least one such $g_{k+1} \in A_{p} \setminus S_{k}$ splits from $A_{p}$. Again, as we have previously identified the elements that split from $A_{p}$ as well as the elements of $S_{k}$, we may, in $\textsf{AC}^{0}$, select some $g_{k+1} \in A_{p} \setminus S_{k}$ that splits from $A_{p}$. Let $H_{k+1} = H_{k} \cup \{g_{k+1}\}$. We will now show how to write down $S_{k+1} := \langle H_{k+1} \rangle = S_{k+1} \cdot \langle g_{k+1} \rangle$ in $\textsf{AC}^{0}$. By the inductive hypothesis, we have already explicitly written down the elements of $S_{k}$. Furthermore, in the second paragraph, we have already written down the elements of $\langle g_{k+1} \rangle$. Thus, we may compute $S_{k+1} \cdot \langle g_{k+1} \rangle$ in $\textsf{AC}^{0}$, as desired.

Note that only $\log |A_{p}| \leq \log |G|$ many iterations are required to compute a fully-refined direct product decomposition of $A_{p}$. Our pre-processing steps in the first two paragraphs are $\textsf{FL}$-computable, and the remaining work is $\textsf{AC}^{1}$-computable. Thus, we may compute a fully-refined direct product decomposition of $A_{p}$ in $\textsf{AC}^{1}$. The result now follows.
\end{proof}

\begin{proposition}[cf. {\cite[Section~4.2]{KayalNezhmetdinov}}] \label{prop:GroupDivision}
We can solve \algprobm{Group Division} in $\textsf{AC}^{1}$.
\end{proposition}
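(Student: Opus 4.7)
The plan is to parallelize the two-case algorithm of Kayal and Nezhmetdinov \cite{KayalNezhmetdinov}. The starting structural observation, used throughout, is that if $G = A \times B$ then $[A,B] = 1$, so $B \le C_G(A)$, and in fact $C_G(A) = Z(A) \times B$. Hence the task reduces to finding a subgroup $B \le C_G(A)$ that is a direct complement of $Z(A)$ and is additionally normal in $G$ with $A \cap B = 1$. Both $C_G(A) = \{g \in G : ga = ag \text{ for all } a \in A\}$ and $Z(A)$ are computable in $\textsf{AC}^{0}$ from the multiplication table, and whether $G/A$ is Abelian, equivalently $[G,G] \subseteq A$, is also $\textsf{AC}^{0}$-testable.

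In the Abelian case, $B \cong G/A$ is Abelian, hence $C_G(A) = Z(A) \times B$ is Abelian. I would apply Proposition~\ref{prop:DecomposeAbelianGroup} in $\textsf{AC}^{1}$ to compute a basis of $C_G(A)$, and Sylow-by-Sylow refine the basis via a parallel Smith-normal-form-style pivot so that an initial segment spans $Z(A)$; the span of the remaining basis elements is the candidate $B$.

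In the non-Abelian case, first compute the derived subgroup $B' = [C_G(A), C_G(A)]$ via one $\textsf{AC}^{1}$ commutator-closure. Because $Z(A)$ is central in $C_G(A)$, $B'$ equals $[B,B]$, and so $C_G(A)/B' = (Z(A)B'/B') \times (B/B')$ is Abelian. Apply Proposition~\ref{prop:DecomposeAbelianGroup} to $C_G(A)/B'$ to extract $B/B'$ as a direct complement of $Z(A)B'/B'$, then lift to $B \subseteq C_G(A)$. In both cases, finish with an $\textsf{AC}^{0}$ verification that the candidate $B$ is normal in $G$, intersects $A$ trivially, and satisfies $|AB| = |G|$; otherwise report that no direct complement exists.

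The main obstacle is keeping depth within $\textsf{AC}^{1}$: a naive recursion on the derived series in the non-Abelian case would push the depth to $\textsf{AC}^{2}$. The remedy is that a single $\textsf{AC}^{1}$ commutator-closure producing $B'$ already reduces the non-Abelian problem to an Abelian one on $C_G(A)/B'$, which is dispatched by one further invocation of Proposition~\ref{prop:DecomposeAbelianGroup}. A secondary subtlety is that a direct complement of $Z(A)$ in the Abelian (quotient) group need not be unique when $|Z(A)|$ shares prime divisors with $|B/B'|$; a final $\textsf{AC}^{1}$ pass selects, from the basis-derived candidates, one that is invariant under $G$-conjugation, with Theorem~\ref{thm:UniqueDecomposition} guaranteeing that such a normal choice suffices to witness $G = A \times B$ whenever any decomposition exists.
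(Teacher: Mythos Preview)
Your high-level plan—work inside $C_G(A)$ and split on whether $G/A$ is Abelian—is close to the paper's, but two steps are not actually justified.

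First, the sentence ``refine the basis via a parallel Smith-normal-form-style pivot so that an initial segment spans $Z(A)$'' is doing all of the work in both cases, and you do not show it is $\textsf{AC}^{1}$. Proposition~\ref{prop:DecomposeAbelianGroup} gives \emph{some} basis of an Abelian group; it does not give a basis adapted to a \emph{prescribed} subgroup, and Smith normal form is only known to be in $\textsf{NC}^{2}$. The paper handles Case~1 differently and more concretely: it applies Proposition~\ref{prop:DecomposeAbelianGroup} to the quotient $G/A$ (not to $C_G(A)$), obtaining $G/A=\prod_i\langle g_iA\rangle$, and then for each $i$ scans the coset $g_iA$ for a representative $b_i\in Z(G)$ with $|b_i|=|g_iA|$. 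This lift-with-matching-order search is $\textsf{L}\cap\textsf{FOLL}$ by Lemma~\ref{lem:Order} and directly yields a central, hence normal, complement—no basis pivoting required. (You also tacitly assume $C_G(A)$ is Abelian in Case~1; this holds only when a complement exists—take $G=D_8$, $A=Z(G)$, where $C_G(A)=G$—so a check is needed.)

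Second, the appeal to Theorem~\ref{thm:UniqueDecomposition} for the ``final $\textsf{AC}^{1}$ pass'' is misplaced: that theorem characterizes when an indecomposable factor is set-wise unique across decompositions, not how to select a $G$-normal complement of $Z(A)$. As written you have neither a polynomial-size candidate set nor an argument that a $G$-invariant member lies in it. (In fact every complement of $Z(A)$ in $C_G(A)$ containing $[C_G(A),C_G(A)]$ is automatically normal in $G$, so the pass is unnecessary—but you would have to prove this, and you do not.) The paper sidesteps the issue entirely: in the non-Abelian case it sets $T=[C_G(A),G]$ rather than $[C_G(A),C_G(A)]$, checks $T\trianglelefteq G$ and $A\cap T=1$, and then invokes Case~1 on the full quotient $G/T$; normality of the resulting $B$ follows because the Case-1 lifts are taken in $Z(G/T)$.
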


\begin{proof}
We consider two cases.
\begin{itemize}
\item \textbf{Case 1:} Suppose that $G/A$ is Abelian. Given $g, h \in G$, we can decide in $\textsf{AC}^{0}$ whether $g$ and $h$ belong to the same coset of $G/A$, by checking whether $gh^{-1} \in A$. Thus, we may write down the multiplication table for $G/A$ in $\textsf{AC}^{0}$. As $G/A$ is Abelian, we have by Proposition~\ref{prop:DecomposeAbelianGroup} that we have that we can compute a fully-refined direct product decomposition for $G/A = \langle g_1A \rangle \times...\times\langle g_kA \rangle$ in $\textsf{AC}^{1}$. Now, for each coset $g_iA$ ($i \in [k]$), it remains to select a representative $b_i \in g_iA$ such that $b_i \in Z(G)$ and $|b_i| = |g_i A|$. By Lemma~\ref{lem:Order}, this step is computable in $\textsf{FL} \cap \FOLL$. The total work in this case is $\textsf{AC}^{1}$-computable, as desired.

\item \textbf{Case 2:} Suppose instead that $G/A$ is not Abelian. 
In $\ACz$, we may decide whether two elements of $G$ commute. Thus, in $\ACz$, we may write down the elements of $C_{G}(A)$. We now turn to computing $T = [C_{G}(A), G]$. In $\ACz$, we may write down $S := \{ [a,g] : a \in C_{G}(A), g \in G \}$. Now using a membership test, we may in $\textsf{FL}$, write down $T = \langle S \rangle$ \cite{BarringtonMcKenzie, Reingold}.

We now determine whether $T \trianglelefteq G$ in $\ACz$ by checking whether $gTg^{-1} = T$ for all $g \in G$. If $T$ is not normal in $G$, then we output that no such decomposition exists. We now check in $\ACz$ whether $A \cap T = \{1\}$. If $A \cap T \neq \{1\}$, then we output that no such decomposition exists. It remains to determine whether there exists some Abelian group $B/T$ such that $G/T = A/T \times B/T$; and if so, to construct such a $B/T = \langle b_1 T \rangle \times \cdots \times \langle b_k T \rangle$. By Case 1, we may accomplish this in $\textsf{AC}^{1}$. 

For each $i \in [k]$, we select a coset representative $c_i \in b_i T$; from Kayal and Nezhmetdinov \cite{KayalNezhmetdinov}, we have that any such coset representative suffices. Now in $\textsf{FL}$, we may write down $C := \langle T, c_1, \ldots, c_k \rangle$ using a membership test \cite{BarringtonMcKenzie, Reingold}. We may now check in $\ACz$ whether $G = A \times C$. If $G = A \times C$, then we return $C$. Otherwise, we return that no such decomposition exists. The total work in this case is $\textsf{AC}^{1}$-computable, as desired. 
\end{itemize}

The correctness of this algorithm follows from  \cite[Section~4]{KayalNezhmetdinov}. The result now follows.
\end{proof}

We now turn to the  \algprobm{SemiAbelianGroupDecomposition} problem, which takes as input a group $G$ and asks for a decomposition of the form:
\[
G = A \times \langle b_1 \rangle \times \cdots \times \langle b_k \rangle,
\]
where $A$ has no Abelian direct factors and $b_1, \ldots, b_k \in G$. Note that the direct product $B := \prod_{i=1}^{k} \langle b_i \rangle$ is Abelian.

In this section, we will establish the following.

\begin{proposition}[cf. {\cite[Section~5]{KayalNezhmetdinov}}] \label{prop:SemiAbelian}
We can solve \algprobm{SemiAbelianGroupDecomposition} in $\textsf{AC}^{2}$.   
\end{proposition}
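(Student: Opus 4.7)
The plan is to iteratively split off cyclic direct factors from the center of $G$, using \algprobm{Group Division} (\Prop{prop:GroupDivision}) as the key subroutine. The driving observation is that a group has an Abelian direct factor if and only if it has a \emph{cyclic} direct factor $\langle z\rangle$ generated by some $z \in Z(G)$: any Abelian direct factor $B$ of $G$ decomposes as $B = \langle b_1\rangle\times\cdots\times\langle b_k\rangle$, and each $\langle b_j\rangle$ is itself a direct factor of $G$ (with complement $A \times \prod_{l\ne j}\langle b_l\rangle$), while conversely $B\subseteq Z(G)$ forces $b_j \in Z(G)$.

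Concretely, the algorithm maintains a current group $G_i$ (with $G_0 := G$) together with the cyclic-factor generators $b_1,\ldots,b_i$ collected so far. At iteration $i$, it computes $Z(G_i)$ in $\ACz$ via pairwise commutativity checks and writes down $\langle z\rangle$ as a set of elements for every $z \in Z(G_i)\setminus\{1\}$ (by listing powers of $z$). Then, in parallel over all such $z$, it invokes \algprobm{Group Division}$(G_i, \langle z\rangle)$; each call runs in $\textsf{AC}^{1}$ by \Prop{prop:GroupDivision}. If some call returns a complement $B$, the algorithm picks an arbitrary successful $z$, sets $b_{i+1} := z$ and $G_{i+1} := B$, and continues; otherwise, by the observation above, $G_i$ has no Abelian direct factor, so it halts and outputs $A := G_i$ together with the list $b_1, \ldots, b_i$. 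Since $G_{i+1}\subseteq G$ inherits its multiplication from that of $G$, the state passed to the next layer is simply the list of elements returned by \Prop{prop:GroupDivision}.

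Each successful split at least halves the order of the current group (since $|\langle z\rangle|\ge 2$), so the loop terminates after at most $\log_2 n$ iterations. Sequentially composing $O(\log n)$ rounds of $\textsf{AC}^{1}$ circuits yields a $\textsf{DLOGTIME}$-uniform circuit of depth $O(\log^{2} n)$ and polynomial size, i.e., $\textsf{AC}^{2}$. The main point requiring care is the clean composition of the rounds under uniformity: each layer's input (the current $G_i$ as a list of elements of the original $G$) must be readable in $\ACz$ from the previous layer's output, which holds because \Prop{prop:GroupDivision} produces its complement explicitly as a list. Correctness follows from the termination criterion together with the telescoping identity $G = \langle b_1\rangle\times\cdots\times\langle b_k\rangle\times A$, which holds because at each step we genuinely split off a direct factor.
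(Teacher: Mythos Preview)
Your proposal is correct and follows essentially the same approach as the paper: iteratively peel off a cyclic direct factor $\langle b\rangle$ by invoking \algprobm{Group Division} in parallel over candidate generators, recurse on the complement, and observe that $O(\log n)$ many $\textsf{AC}^{1}$ rounds compose to $\textsf{AC}^{2}$. Your restriction to candidates $z\in Z(G_i)$ is a valid (and slightly tidier) refinement, since any cyclic direct factor is automatically central; the only imprecision is that writing down $\langle z\rangle$ ``by listing powers'' is not literally $\ACz$ but rather $\textsf{L}\cap\FOLL$ (cf.\ Lemma~\ref{lem:Order}), which is harmless inside an $\textsf{AC}^{1}$ iteration.
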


\begin{proof}
If $G$ is Abelian, then $A = \{1\}$. By Proposition~\ref{prop:DecomposeAbelianGroup}, we can compute a fully-refined direct product decomposition for $G$ in $\textsf{AC}^{1}$. So suppose $G$ is non-Abelian.

For each $b \in G$, we apply \algprobm{GroupDivision} to determine if $\langle b \rangle$ admits a direct complement in $G$. We consider all such $b$ in parallel. By Proposition~\ref{prop:GroupDivision}, we can solve \algprobm{GroupDivision} in $\textsf{AC}^1$. If no such $b$ exists, we return $\{G\}$. Otherwise, we select one such $b$, and label it $b_1$. Let $C$ be the direct complement returned from \algprobm{GroupDivision} applied to $b_1$. So $G = C \times \langle b_1 \rangle$. We now recurse on C.  We perform at most $\lceil \log |G| \rceil$ recursive calls, each of which are $\textsf{AC}^{1}$-computable. Thus, the total work is $\textsf{AC}^2$-computable, as desired.
\end{proof}

\subsection{Conjugacy Class Graph of a Group}

In this section, we will leverage the conjugacy class graph of a group (to be defined shortly) to compute a fully-refined direct product decomposition of a non-Abelian group. We first recall key preliminaries  from \cite[Section~6]{KayalNezhmetdinov}. Let $G$ be a group. For $g \in G$, let $\mathcal{C}_{g}$ be the conjugacy class of $g$. We say that two conjugacy classes $\mathcal{C}_{g}, \mathcal{C}_{h}$ of $G$ \emph{commute} if for all $\alpha \in \mathcal{C}_{g}$ and all $\beta \in \mathcal{C}_{h}$, $\alpha$ and $\beta$ commute.

Suppose that $G = G_1 \times \cdots \times G_k$, and let $g = g_1 \cdots g_k$, with each $g_i \in G_i$ ($i \in [k]$). Then we have that:
\[
\mathcal{C}_{g} = \mathcal{C}_{g_{1}} \cdots \mathcal{C}_{g_{k}}.
\]
Furthermore, for each $i \neq j$, we have that $\mathcal{C}_{g_{i}}$ and $\mathcal{C}_{g_{j}}$ commute.

For the remainder of this section, we will assume that $G$ is non-Abelian (as we can use Proposition~\ref{prop:DecomposeAbelianGroup} to decompose an Abelian group). 

\begin{definition}[{\cite[Definition~2]{KayalNezhmetdinov}}]
Let $G$ be a group, and let $g \in G$. We say that the conjugacy class $\mathcal{C}_{g}$ is \emph{reducible} if either $g \in Z(G)$, or there exist two conjugacy classes $\mathcal{C}_{a}, \mathcal{C}_{b}$ such that the following hold:
\begin{itemize}
    \item $a, b \not \in Z(G)$,
    \item $\mathcal{C}_{a}, \mathcal{C}_{b}$ commute,
    \item $\mathcal{C}_{g} = \mathcal{C}_{a} \cdot \mathcal{C}_{b}$, and
    \item $|\mathcal{C}_{g}| = |\mathcal{C}_{a}| \cdot |\mathcal{C}_{b}|$.
\end{itemize}
\noindent If a conjugacy class is not reducible, then it is \emph{irreducible}.
\end{definition}

\begin{proposition}[{\cite[Proposition~2]{KayalNezhmetdinov}}] \label{prop:KNProp2}
Let $G = G_1 \times \cdots \times G_k$. If a conjugacy class $\mathcal{C}_{g}$ is irreducible, then there exists a unique $i \in [k]$ such that $\pi_{i}(g) \not \in Z(G_i)$, where $\pi_{i} : G \to G_{i}$ is the canonical projection map.
\end{proposition}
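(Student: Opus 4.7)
The plan is to prove both existence and uniqueness of the index $i$ by contrapositive: if the set $I = \{ i \in [k] : \pi_i(g) \notin Z(G_i)\}$ has size $0$ or size $\geq 2$, then $\mathcal{C}_g$ is reducible. Writing $g = g_1 \cdots g_k$ with $g_i \in G_i$, and using $Z(G) = Z(G_1) \times \cdots \times Z(G_k)$, I would first observe that $I = \emptyset$ is equivalent to $g \in Z(G)$, which by the definition of reducibility already forces $\mathcal{C}_g$ to be reducible. Hence irreducibility of $\mathcal{C}_g$ immediately yields $|I| \geq 1$, handling existence.

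For uniqueness, assume $|I| \geq 2$ and fix distinct $i_0, j_0 \in I$. The natural witness is to split $g$ along coordinate $i_0$: let $a$ be the element of $G$ whose $i_0$-th component equals $g_{i_0}$ and whose other components are $1$, and let $b = a^{-1} g$, so $b$ agrees with $g$ outside coordinate $i_0$ and is trivial there. I would then verify the four reducibility conditions in turn. Non-centrality of $a$ and $b$ follows from $i_0, j_0 \in I$ together with $Z(G) = \prod_i Z(G_i)$. The classes $\mathcal{C}_a$ and $\mathcal{C}_b$ commute elementwise because $\mathcal{C}_a$ lies in the $i_0$-th direct factor while $\mathcal{C}_b$ lies in the complementary product $\prod_{j \neq i_0} G_j$, and elements of complementary direct factors commute. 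The equality $\mathcal{C}_g = \mathcal{C}_a \cdot \mathcal{C}_b$ is immediate from the identity $\mathcal{C}_g = \mathcal{C}_{g_1} \cdots \mathcal{C}_{g_k}$ recalled just before the statement, after identifying $\mathcal{C}_a$ with $\mathcal{C}_{g_{i_0}}$ and $\mathcal{C}_b$ with $\prod_{j \neq i_0} \mathcal{C}_{g_j}$. Finally, the size identity $|\mathcal{C}_g| = |\mathcal{C}_a| \cdot |\mathcal{C}_b|$ reduces, via orbit-stabilizer, to the centralizer factorization $C_G(a) = C_{G_{i_0}}(g_{i_0}) \times \prod_{j \neq i_0} G_j$ (and its analogue for $b$), from which the product telescopes to $\prod_i |\mathcal{C}_{g_i}|$.

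Each step is essentially bookkeeping inside a direct product, so there is no substantial obstacle. The only point requiring a moment's care is confirming that $\mathcal{C}_a$ and $\mathcal{C}_b$, computed inside the ambient group $G$ rather than inside the factor $G_{i_0}$ or the complementary product $\prod_{j \neq i_0} G_j$, still have the expected sizes; this is precisely the content of the centralizer factorization above, and it also justifies identifying $\mathcal{C}_a$ and $\mathcal{C}_b$ with the corresponding classes in the individual direct factors as sets.
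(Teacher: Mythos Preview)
Your argument is correct and is the natural contrapositive proof: reducibility is immediate when $g\in Z(G)$, and when two or more coordinates are non-central you split $g$ along one such coordinate to exhibit the decomposition $\mathcal{C}_g = \mathcal{C}_a\cdot\mathcal{C}_b$ with the required properties. The paper does not supply its own proof of this proposition; it merely cites \cite[Proposition~2]{KayalNezhmetdinov}, so there is nothing to compare against in the present paper, and your write-up matches the standard argument one would find in the cited source.
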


The converse of Proposition~\ref{prop:KNProp2} does not in general hold.

\begin{definition}
Let $G$ be a group. The \emph{conjugacy class graph} of $G$, denoted $\Gamma_{G}$, has precisely one vertex per irreducible conjugacy of $G$. For vertices $u, v$, we have that $\{u, v\} \in E(\Gamma_{G})$ if and only if the conjugacy classes corresponding to $u$ and $v$ do not commute.
\end{definition}

We will first show that we can compute $\Gamma_{G}$ and its connected components in logspace. 

\begin{lemma} \label{lem:ComputeConjugacyGraph}
Let $G$ be a group. We can compute $\Gamma_{G}$ in \textsf{FL}. Furthermore, we can compute the connected components of $\Gamma_{G}$ in $\textsf{FL}$.    
\end{lemma}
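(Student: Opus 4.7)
The plan is to compute the conjugacy classes of $G$, identify the irreducible ones, determine which pairs commute to write down the edges of $\Gamma_G$, and then invoke Reingold's theorem ($\textsf{SL} = \textsf{L}$) to extract the connected components, with each step fitting in logspace.

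First I would compute the conjugacy classes of $G$. For any $g, h \in G$, the predicate ``$h$ is conjugate to $g$'' is expressible as $\exists x \in G : xgx^{-1} = h$, a single bounded existential quantifier over group elements and hence decidable in $\textsf{AC}^0 \subseteq \textsf{L}$. To obtain a canonical list of classes I designate each class by its lex-minimum representative, which is identified by iterating over $G$ and checking, for each candidate $g$, that no $h <_{\text{lex}} g$ is conjugate to $g$. From these representatives we can, in logspace, write down each class $\mathcal{C}_g$ as an explicit set of elements.

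Next I would test each representative $g$ for irreducibility. Membership in $Z(G)$ is decided by checking $gh = hg$ for all $h \in G$, which is $\textsf{AC}^0$. For $g \notin Z(G)$, I need to verify (universally over pairs) that no two non-central representatives $a, b$ give $\mathcal{C}_g = \mathcal{C}_a \cdot \mathcal{C}_b$ with $\mathcal{C}_a, \mathcal{C}_b$ commuting and $|\mathcal{C}_g| = |\mathcal{C}_a| \cdot |\mathcal{C}_b|$. For a fixed $(a,b)$, commutation of classes is $\textsf{AC}^0$, the cardinality test is $\textsf{TC}^0 \subseteq \textsf{L}$, and the set equality $\mathcal{C}_g = \mathcal{C}_a \cdot \mathcal{C}_b$ reduces to the mutual inclusions $\forall \alpha \in \mathcal{C}_a\ \forall \beta \in \mathcal{C}_b:\alpha\beta\in\mathcal{C}_g$ and $\forall \gamma \in \mathcal{C}_g\ \exists \alpha\in\mathcal{C}_a\ \exists\beta\in\mathcal{C}_b:\alpha\beta=\gamma$, both $\textsf{AC}^0$. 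Aggregating over the polynomially many pairs keeps the irreducibility check in $\textsf{L}$.

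With the set of irreducible classes in hand, I would place an edge between representatives $u, v$ of $\Gamma_G$ precisely when $\mathcal{C}_u, \mathcal{C}_v$ fail to commute, again an $\textsf{AC}^0$ test per pair. This outputs an adjacency description of $\Gamma_G$ in logspace. Finally, by Reingold's theorem \cite{Reingold}, undirected $s$-$t$-connectivity is in $\textsf{L}$, so the connected components can be computed in $\textsf{L}$ by labeling each vertex with the lex-smallest vertex reachable from it in $\Gamma_G$. No step presents a genuine obstacle; the only point that requires care is keeping the irreducibility check logspace-bounded, which is handled by the observation that each of its sub-predicates is already $\textsf{AC}^0$ or $\textsf{TC}^0$ and the outer quantification is a simple conjunction over polynomially many pairs.
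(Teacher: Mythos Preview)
Your proposal is correct and follows essentially the same approach as the paper: compute conjugacy classes in $\textsf{AC}^0$, test irreducibility by iterating over all candidate pairs $(a,b)$ with the cardinality condition being the step that requires counting (you say $\textsf{TC}^0\subseteq\textsf{L}$, the paper just says $\textsf{L}$), determine edges via the $\textsf{AC}^0$ commutation test, and then apply Reingold for connected components. Your version is in fact slightly more explicit than the paper's, e.g., in spelling out the lex-minimum representatives and the two inclusions for $\mathcal{C}_g = \mathcal{C}_a\cdot\mathcal{C}_b$, but the argument is the same.
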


\begin{proof}
First, we observe that we can compute the conjugacy classes of $G$ in $\ACz$. We now claim that for a given conjugacy class $\mathcal{C}_{x}$, we can decide if $\mathcal{C}_{x}$ is irreducible in $\textsf{L}$. In $\ACz$, we can test whether $x \in Z(G)$. Furthermore, in $\ACz$, we may examine all pairs $a, b \in G$ and test whether $\mathcal{C}_{x} = \mathcal{C}_{a} \cdot \mathcal{C}_{b}$. Now in $\textsf{L}$, we may test whether $|\mathcal{C}_{x}| = |\mathcal{C}_{a}| \cdot |\mathcal{C}_{b}|$. Thus, in $\textsf{FL}$, we may identify the irreducible conjugacy classes of $G$.

We now turn to writing down $\Gamma_{G}$. The irreducible conjugacy classes of $G$ are precisely the vertices of $\Gamma_{G}$. Now let $\mathcal{C}_{a}, \mathcal{C}_{b}$ be two such irreducible conjugacy classes. The vertices corresponding to $\mathcal{C}_{a}, \mathcal{C}_{b}$ are adjacent if and only if for every $g \in \mathcal{C}_{a}$ and every $h \in \mathcal{C}_{b}$, $gh = hg$. Thus, we can decide in $\ACz$ whether $\mathcal{C}_{a}, \mathcal{C}_{b}$ are adjacent in $\Gamma_{G}$. Once we have $\Gamma_{G}$, we may compute the connected components in $\textsf{FL}$ \cite{Reingold}. The result now follows.
\end{proof}

\begin{proposition}[{\cite[Proposition~5]{KayalNezhmetdinov}}] \label{prop:NumComponents}
Let $G$ be a group, and let $\Gamma_{G}$ be the conjugacy graph of $G$. The number of connected components of $\Gamma_{G}$ is at most $\log |G|$.    
\end{proposition}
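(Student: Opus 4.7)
The plan is to bound the number of connected components of $\Gamma_G$ by the number of indecomposable direct factors of $G$ in a fully-refined decomposition, and then to observe that this number is at most $\log_2 |G|$.

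Fix a fully-refined direct product decomposition $G = G_1 \times \cdots \times G_k$, and let $\pi_i : G \to G_i$ denote the canonical projection. By Proposition~\ref{prop:KNProp2}, every irreducible conjugacy class $\mathcal{C}_g$ has a unique ``support index'' $i(g) \in [k]$ such that $\pi_{i(g)}(g) \notin Z(G_{i(g)})$, while $\pi_j(g) \in Z(G_j)$ for all $j \neq i(g)$. Note that the support index is conjugation-invariant since conjugation is performed componentwise in the direct product, so $i(g)$ depends only on the class $\mathcal{C}_g$, not on the representative $g$.

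The key claim is that any two irreducible conjugacy classes with different support indices commute. Indeed, suppose $\mathcal{C}_a$ has support index $i$ and $\mathcal{C}_b$ has support index $j$ with $i \neq j$. Let $\alpha \in \mathcal{C}_a$ and $\beta \in \mathcal{C}_b$, and write $\alpha = \alpha_1 \cdots \alpha_k$, $\beta = \beta_1 \cdots \beta_k$ with $\alpha_\ell, \beta_\ell \in G_\ell$. Since support indices are preserved under conjugation, $\alpha_\ell \in Z(G_\ell)$ for every $\ell \neq i$, and similarly $\beta_\ell \in Z(G_\ell)$ for every $\ell \neq j$. Hence in each coordinate $\ell$, at least one of $\alpha_\ell, \beta_\ell$ is central in $G_\ell$, so $\alpha_\ell \beta_\ell = \beta_\ell \alpha_\ell$, and therefore $\alpha\beta = \beta\alpha$. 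This means $\mathcal{C}_a$ and $\mathcal{C}_b$ are non-adjacent in $\Gamma_G$, so any edge of $\Gamma_G$ joins two irreducible classes with the same support index.

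Consequently, each connected component of $\Gamma_G$ is entirely contained in the set of irreducible classes with a single fixed support index $i \in [k]$, so the number of components is at most $k$. Since every direct factor $G_i$ in the decomposition has order at least $2$, we get $2^k \leq |G_1| \cdots |G_k| = |G|$, i.e., $k \leq \log_2 |G|$, which yields the desired bound. The only subtle point to verify carefully in a full write-up is the componentwise support argument above (ensuring $\alpha_\ell \in Z(G_\ell)$ for $\ell \neq i$ after conjugation), but this is immediate from the fact that conjugation in $G_1 \times \cdots \times G_k$ acts factorwise and centers are setwise invariant.
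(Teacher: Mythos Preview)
Your argument has a genuine gap in the final inference. You correctly show that no edge of $\Gamma_G$ joins irreducible classes with different support indices, so the vertex set of $\Gamma_G$ partitions into at most $k$ ``blocks'' with no edges between blocks. However, from this it does \emph{not} follow that $\Gamma_G$ has at most $k$ connected components: a single block (the irreducible classes with support index $i$) could itself split into several components. To reach your conclusion you would need to show that, for each $i$, the irreducible classes supported in $G_i$ induce a \emph{connected} subgraph of $\Gamma_G$. Specializing to $k=1$, this amounts to the assertion that $\Gamma_G$ is connected whenever $G$ is directly indecomposable and non-Abelian---which is not obvious and is essentially the substance of the proposition.

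The paper does not give its own proof here; it simply cites \cite{KayalNezhmetdinov}. The argument there runs in the opposite direction to yours: starting from the connected components $\Lambda_1,\ldots,\Lambda_t$, one lets $N_i$ be the normal subgroup generated by the elements of the classes in $\Lambda_i$. Since classes in distinct components commute (there being no edge between them), $[N_i,N_j]=1$ for $i\ne j$, and one shows that the $N_i$ together with $Z(G)$ produce a direct product decomposition with at least $t$ non-trivial factors, whence $t\le\log|G|$. Your support-index idea can be made to work, but only by importing exactly this structural fact about the components, so it does not give an independent proof.
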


\subsection{Algorithm}

We now turn to proving Theorem~\ref{thm:MainDecompose}.

\begin{proof}[Proof of Theorem~\ref{thm:MainDecompose}]
We now parallelize the main direct product decomposition algorithm of Kayal and Nezhmetdinov \cite[Algorithm~1]{KayalNezhmetdinov}. Let $G$ be our input group. If $G$ is Abelian, then we can decompose $G$ in $\textsf{AC}^{1}$ using Proposition~\ref{prop:DecomposeAbelianGroup}. So suppose $G$ is non-Abelian. We begin by computing the conjugacy class graph $\Gamma_{G}$, and the connected components $\Lambda_{1}, \ldots, \Lambda_{t}$ of $\Gamma_{G}$. By Lemma~\ref{lem:ComputeConjugacyGraph}, this step is $\textsf{FL}$-computable. Now if $G$ admits a non-trivial direct product decomposition, then there exists some $S_{1} \subseteq [t]$ inducing an indecomposable direct factor of $G$ \cite{KayalNezhmetdinov}. By Proposition~\ref{prop:NumComponents}, we have that $t \leq \log n$. Thus, we may try all such $S_{1}$ in parallel. 

For clarity, we will fix some $S_{1}$ and proceed as follows. For a connected component $\Lambda$ of $\Gamma_{G}$, let $\text{Elts}(\Lambda) \subseteq G$ be the set of group elements $g \in G$ such that $g$ belongs to some conjugacy class occurring in $\Lambda.$ Let
\[
Z_{1} := C_{G} \left( \left \langle \bigcup_{j \not \in S_{1}} \text{Elts}(\Lambda_{j}) \right \rangle \right).
\]

We may write down the elements of $\text{Elts}(\Lambda_{j})$ ($j \not \in S_{1}$) in $\ACz$. We may then use a membership test to, in $\textsf{FL}$, write down:
\[
X := \left \langle \bigcup_{j \not \in S_{1}} \text{Elts}(\Lambda_{j}) \right \rangle
\]
Given $X$, we may write down $Z_{1} = C_{G}(X)$ in $\ACz$. Thus, we may write down $Z_{1}$ in $\textsf{FL}$. Using Proposition~\ref{prop:SemiAbelian}, we can in $\textsf{AC}^{2}$, determine $H_{1}, Y \trianglelefteq G$ such that $Z_{1} = H_{1} \times Y$, where $H_{1}$ has no Abelian direct factors and $Y$ is Abelian. We now use Proposition~\ref{prop:GroupDivision} to decide if there exists $Y_{1} \trianglelefteq G$ such that $G = H_{1} \times Y_{1}$. If no such $Y_{1}$ exists, then $H_{1}$ is not an indecomposable direct factor of $G$. Otherwise, we recursively compute a direct product decomposition of $Y_{1}$.

If for all choices of $S_{1}$, no such $Y_{1}$ exists, then $G$ is directly indecomposable.

The correctness of the algorithm was established in \cite{KayalNezhmetdinov}. It remains to analyze the complexity. Note that we make at most $O(\log |G|)$ recursive calls (to compute a direct product decomposition of the subgroup $Y$ at the current recursive call). Each recursive call is $\textsf{AC}^{2}$-computable. Thus, our algorithm is $\textsf{AC}^{3}$-computable, as desired.
\end{proof}

\section{Canonizing Direct Products in Parallel}

In this section, we will prove Theorem~\ref{thm:MainCanonization}. We first recall the following lemma, which is essentially in  Levet's dissertation \cite{LevetThesis}. For completeness, we also include a proof.

\begin{lemma}[cf. {\cite{LevetThesis}}] \label{lem:CanonizeBoundedGenerator}
Let $d > 0$ be a constant, and let $G$ be a group that is generated by at most $d$ elements. Let $\chi_{d,0}$ be the coloring computed by $(d,0)$-WL Version II, applied to $G$. We can compute a canonical labeling for $G$ and return a generating sequence $(g_1, \ldots, g_d)$ of minimum color class from $G$ under $\chi_{d,0}$ that induces said labeling, in $\textsf{FL}$. 
\end{lemma}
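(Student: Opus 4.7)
The plan is to combine three logspace-computable ingredients: (i) enumerate all generating $d$-tuples, (ii) use the initial WL Version II coloring to isolate a canonical color class of such tuples, and (iii) read off a canonical labeling from any representative of that class.

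First I would enumerate $G^d$, which has only $n^d = \poly(n)$ elements since $d$ is constant, and filter to the set $T \subseteq G^d$ of tuples that actually generate $G$. Testing whether $\langle g_1,\dots,g_d\rangle = G$ reduces to $n$ membership queries in the subgroup $\langle g_1,\dots,g_d\rangle$, each of which is $\textsf{L}$-computable by Barrington--McKenzie/Reingold. So $T$ can be enumerated in $\textsf{FL}$.

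Next I would compute the initial WL Version II color $\chi_{d,0}(\bar g)$ for each $\bar g \in T$. By the discussion in Section~\ref{sec:WLparallel}, this amounts to one invocation per pair of tuples of Tang's marked isomorphism test, which is $\textsf{L}$-computable. Because the logspace implementation assigns color identifiers in an isomorphism-invariant way (e.g.\ by the lex-first tuple receiving that color), I can in $\textsf{FL}$ identify the minimum color class $C^* \subseteq T$ and extract its lex-first representative $\bar g^* = (g_1^*,\dots,g_d^*)$.

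Finally I would canonically label $G$ using $\bar g^*$: order the elements of $G$ by the lex-smallest shortest word in $\{g_1^*,\dots,g_d^*\}^{\pm 1}$ that evaluates to them. This is a standard procedure for $O(1)$-generated groups and is carried out in logspace in Tang's thesis and Levet's dissertation; the key point is that once $d$ is fixed, iterated multiplication on Cayley tables and enumeration of words by length are both logspace, so labels can be looked up in $\textsf{FL}$.

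The correctness of the canonical labeling hinges on a single observation, which I would expect to be the main obstacle to spell out carefully: any other $\bar g' \in C^*$ satisfies, by definition of $\chi_{d,0}$ for Version II, that $g_i^* \mapsto g_i'$ extends to an automorphism $\alpha$ of $G$. Since lex-shortest-word labels are defined purely in terms of the multiplication operation and the ordered generating sequence, the labeling induced by $\bar g'$ is obtained from the labeling induced by $\bar g^*$ by precomposing with $\alpha$, and hence produces the same relabeled multiplication table. Combined with the fact that an isomorphism $G \to H$ induces a color-preserving bijection on $G^d \to H^d$ (so that $C^*$ in $G$ corresponds to $C^*$ in $H$), this shows the procedure yields a canonical form, and the tuple $\bar g^*$ exhibited lies in the minimum color class, as required.
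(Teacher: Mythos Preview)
Your proposal is correct and follows the same three-step outline as the paper: enumerate generating $d$-tuples in $\textsf{L}$, select one from the minimum $\chi_{d,0}$ color class, and derive a labeling from it; the correctness argument you give (any two representatives of $C^*$ are automorphism-related, and isomorphisms respect $C^*$) is exactly the paper's. The one genuine difference is the labeling mechanism in the third step. You label each $g\in G$ by the lex-smallest shortest word over $\{g_1^*,\dots,g_d^*\}^{\pm 1}$ evaluating to it, which is the classical route from Tang's thesis. The paper instead individualizes the chosen generators $g_1,\dots,g_d$ with distinct colors and runs $(d,O(1))$-WL Version II until every element is a singleton color class, labeling $g$ by the color of $(g,\dots,g)$. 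Both are valid $\textsf{FL}$ procedures; your word-based labeling is more elementary and self-contained, while the paper's WL-based labeling keeps the argument uniform with how the lemma is consumed downstream in Theorem~\ref{thm:MainCanonization}, where the labeling of each direct factor is recovered by running WL on the whole group after individualization.
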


\begin{proof}
For each $d$-tuple $(g_1, \ldots, g_d) \in G^d$, we check in $\textsf{L}$ whether $(g_1, \ldots, g_d)$ generates $G$ \cite{TangThesis}. We take such a $d$-tuple that belongs to the minimum such color class from $\chi_{d,0}$ (under the natural ordering of the labels of the color classes). 

In order to obtain a canonical labeling, we individualize each non-identity element from $g_1, \ldots, g_d$ such that for all distinct $i, j \in [d]$, $g_{i}$ and $g_{j}$ receive different colors. We then run $(d, O(1))$-WL, which we again note is $\textsf{FL}$-computable. Each color class will now have size $1$. We label an element $g \in G$ according to the color class of the $d$-tuple $(g, \ldots, g)$. 

Now suppose that $H \cong G$, and let $(h_1, \ldots, h_d) \in H^d$ be the $d$-tuple obtained by the above procedure. As $(g_1, \ldots, g_d)$ was obtained from the minimum color class and the WL-coloring is an isomorphism invariant, we have necessarily that the map $(g_1, \ldots, g_d) \mapsto (h_1, \ldots, h_d)$ extends to an isomorphism of $G \cong H$. The result now follows.
\end{proof}

\begin{proof}[Proof of Theorem~\ref{thm:MainCanonization}]
We first use Theorem~\ref{thm:MainDecompose} to, in $\textsf{AC}^{3}$, decompose $G = G_{1} \times \cdots \times G_{k}$ such that for each $i \in [k]$, $G_{i}$ is directly indecomposable. Now for each $i \in [k]$, we may verify in $\textsf{L}$ whether $G_{i}$ is generated by at most $d$ elements \cite{TangThesis}. If some $G_{i}$ requires more than $d$ generators, we reject.

Now we run $(d,0)$-WL Version II on $G$. For each $i \in [k]$, we may in $\ACz$, identify the colored $d$-tuples of $G_{i}^{d}$. As only the initial coloring of WL has occurred, observe that the partition on $G_{i}^{d}$ obtained by these colored $d$-tuples agrees exactly with that of applying $(d,0)$-WL Version II to $G_{i}$. Thus, we may apply Lemma~\ref{lem:CanonizeBoundedGenerator} to obtain a canonical labeling $\lambda_{i}$ of each $G_{i}$ and a canonical generating sequence $(g_{i,1}, \ldots, g_{i,d})$ in $\textsf{FL}$. In particular, Lemma~\ref{lem:CanonizeBoundedGenerator} guarantees that $\chi_{d,0}((g_{i,1}, \ldots, g_{i,d}))$ is the minimum color class amongst all such generating sequences in $G_{i}^{d}$.

We now sort $G_{1}, \ldots, G_{k}$ such that $G_{i}$ appears before $G_{j}$ whenever $\chi_{d,0}((g_{i,1}, \ldots, g_{i,d})) < \chi_{d,0}((g_{j,1}, \ldots, g_{j,d}))$. By re-indexing $G_{1}, \ldots, G_{k}$, we may assume (without loss of generality) that the direct factors appear in order $(G_{1}, \ldots, G_{k})$. This step is $\textsf{FL}$-computable. In order to obtain a canonical labeling, we now individualize the non-identity elements of our canonical tuples (without loss of generality, we may assume that all such elements are not the identity):
\[
(g_{1,1}, \ldots, g_{1,d}, g_{2,1}, \ldots, g_{2,d}, \ldots, g_{k,1}, \ldots, g_{k,d}),
\]
such that for all distinct pairs $(i, j), (i', j') \in [k] \times [d]$, $g_{i,j}$ and $g_{i', j'}$ are individualized to receive a different color. We then run $(O(1), O(1))$-WL Version II on the original uncolored group $G$, which is $\textsf{FL}$-computable. Observe that for each $i \in [k]$, the resulting coloring induces the labeling $\lambda_{i}$ on $G_{i}$, and hence subsequently assigns each element in $G$ a unique color. Precisely, we assign to an element $g \in G$ the label corresponding to the color class $\chi_{O(1), O(1)}((g, \ldots, g))$.

Thus, in total, our algorithm is $\textsf{AC}^{3}$-computable. It remains to argue that our labeling is canonical. Let $H \cong G$. Thus, we may write $H = \prod_{i=1}^{k} H_{i}$, where for each $i \in [k]$, $H_{i}$ is directly indecomposable. As $H \cong G$, there exists a permutation $\pi \in \text{Sym}(k)$ such that $G_{i} \cong H_{\pi(i)}$ for all $i \in [k]$. As the Weisfeiler--Leman colorings are invariant under isomorphism and as $H \cong G$, our algorithm will sort the direct factors of $H$ such that $G_{i} \cong H_{\pi(i)}$. 

Now by Lemma~\ref{lem:CanonizeBoundedGenerator}, the canonical generating sequence $(g_{i,1}, \ldots, g_{i,d})$ for $G_{i}$ is taken to belong to the minimum color class under the WL coloring. As the WL coloring is isomorphism invariant, the canonical generating sequence $(h_{\pi(i), 1}, \ldots, h_{\pi(i), d})$ for $H_{\pi(i)}$ will belong to the same color class as $(g_{i,1}, \ldots, g_{i,d})$. By the definition of the initial coloring for WL Version II, $(g_{i,1}, \ldots, g_{i,d})$ and $(h_{\pi(i), 1}, \ldots, h_{\pi(i), d})$ receive the same initial color if and only if the map $(g_{i,1}, \ldots, g_{i,d}) \mapsto (h_{\pi(i), 1}, \ldots, h_{\pi(i), d})$ extends to an isomorphism of the generated subgroups. Hence, our algorithm will induce the same labeling on $G_{i}$ and $H_{\pi(i)}$. 

Now we have that our algorithm sorts the direct factors so that, without loss of generality, for all $i \in [k]$, $G_{i}$ and $H_{\pi(i)}$ appear in the same position. Furthermore, we have that our algorithm induces the same labeling on $G_{i}$ and $H_{\pi(i)}$ for all $i \in [k]$. Thus, our algorithm induces the same labeling on $G$ and $H$. The result now follows.
\end{proof}

We also show that when $G \in \mathcal{C}$ (each indecomposable direct factor is $O(1)$-generated, and either perfect or centerless), we can compute a canonical labeling in logspace. 

\begin{theorem}
Let $G \in \mathcal{C}$. We can compute a canonical labeling for $G$ in $\textsf{FL}$.    
\end{theorem}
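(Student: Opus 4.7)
The plan is to adapt the proof of Theorem~\ref{thm:MainCanonization}, replacing the $\textsf{AC}^{3}$ direct product decomposition (Theorem~\ref{thm:MainDecompose}) with an $\textsf{FL}$-computable identification of the indecomposable factors that exploits the additional structure of $\mathcal{C}$. The key observation is that by Corollary~\ref{cor:PerfectCenterless}, for $G \in \mathcal{C}$ the indecomposable direct factors $S_{1}, \ldots, S_{k}$ are unique as sets, hence characteristic subgroups of $G$.

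First, I would run the counting $(O(1), O(1))$-WL Version II algorithm on $G$. By Section~\ref{sec:WLparallel}, the initial coloring is $\textsf{L}$-computable via Tang's marked isomorphism test \cite{TangThesis}, and each of the $O(1)$ refinement steps is $\textsf{TC}^{0}$-computable, so the stable coloring is $\textsf{FL}$-computable. The key claim is that this coloring distinguishes elements according to their weight (Definition~\ref{def:weight}), so that the weight-$1$ elements form an $\textsf{FL}$-identifiable subset of $G$. The justification is game-theoretic in the spirit of the proof of Proposition~\ref{prop:MainB}: given $g \in S_{i}$ of weight $1$ and $g' \in G$ of weight $\geq 2$, Spoiler first pebbles a $d$-tuple $T$ generating the $O(1)$-generated factor $S_{i} \ni g$, forcing Duplicator's response $T'$ to generate some subgroup $H \leq G$ with $H \cong S_{i}$ and $g' \in H$. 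By Corollary~\ref{cor:PerfectCenterless}, the only direct factors of $G$ isomorphic to $S_{i}$ are the $S_{j}$'s themselves, none of which contains the weight-$\geq 2$ element $g'$; hence $H$ is not a direct factor of $G$. Spoiler then pebbles a witness to a structural asymmetry between $S_{i}$ and $H$, such as an element of $C_{G}(S_{i}) \setminus C_{G}(H)$, noting that a non-direct-factor copy of $S_{i}$ (for instance, a diagonal-type embedding) typically has strictly smaller centralizer than $S_{i}$.

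Second, I would partition the weight-$1$ elements into the direct factors using the $\textsf{FL}$-computable equivalence relation $g \sim h \iff \wt(gh) \leq 1$, whose equivalence classes are precisely $S_{i} \setminus \{1\}$ for $i \in [k]$; the direct factors are then recovered as connected components of the resulting graph via Reingold's theorem \cite{Reingold}. Third, I would follow the proof of Theorem~\ref{thm:MainCanonization} essentially verbatim: apply Lemma~\ref{lem:CanonizeBoundedGenerator} to each $O(1)$-generated $S_{i}$ to obtain a canonical labeling and canonical generating tuple; sort the factors by the initial WL Version II color of these tuples; then individualize the canonical generators and re-run $(O(1), O(1))$-WL Version II to assign unique labels to all of $G$. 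Each substep is $\textsf{FL}$-computable, so the overall complexity is $\textsf{FL}$.

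The main obstacle is the game-theoretic weight-detection argument in the first step. The centralizer-based strategy is clean when every $S_{i}$ is centerless, since then $C_{G}(S_{i}) = \prod_{j \neq i} S_{j}$ has a clean product structure, and any embedding of $S_{i}$ across multiple factors strictly shrinks the centralizer. The perfect (but not centerless) case requires more care: for example, if $S_{i}$ has nontrivial center then some candidate non-factor subgroups $H$ may share centralizer order with $S_{i}$, and one may need to invoke higher structural invariants such as $[G, H]$, the normalizer $N_{G}(H)$, or the quotient $G / \ncl_{G}(H)$ to force Spoiler's win in $O(1)$ additional pebbles and rounds.
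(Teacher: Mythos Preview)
Your overall architecture---find the indecomposable factors in $\textsf{FL}$, then invoke the machinery of Theorem~\ref{thm:MainCanonization}---matches the paper. The divergence is entirely in the first step. The paper does not route through WL at all to locate the factors: it simply enumerates all $d$-tuples $(g_1,\ldots,g_d)\in G^d$, uses a membership test \cite{BarringtonMcKenzie,Reingold} to write down $H=\langle g_1,\ldots,g_d\rangle$ in $\textsf{FL}$, checks in $\ACz$ whether $H\trianglelefteq G$ and whether $Z(H)=\{1\}$, and if $Z(H)\neq\{1\}$ checks in $\textsf{FL}$ whether $H=[H,H]$. This directly yields the factors (which are unique as sets by Corollary~\ref{cor:PerfectCenterless}) and simultaneously decides membership in $\mathcal{C}$. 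No game argument, no weight detection, no centralizer comparison---just elementary subgroup computations.

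Your WL-based route has two gaps beyond the one you flag. First, even granting that the $(O(1),O(1))$-WL Version II coloring separates weight-$1$ elements from weight-$\geq 2$ elements, you have not explained how to read off \emph{which} color classes are the weight-$1$ ones. The coloring is just a partition into unlabeled classes; the semantic interpretation ``this class has weight $1$'' is not part of the output. You would need an $\textsf{FL}$-computable predicate that picks out the weight-$1$ colors, and the natural candidates (e.g., ``$g$ lies in some $d$-generated normal perfect-or-centerless subgroup'') are exactly the direct checks the paper performs---at which point the WL detour is doing no work. Second, your step~3 relation $g\sim h\iff \wt(gh)\leq 1$ presupposes you can compute $\wt(gh)\leq 1$, which is the same problem you are trying to solve in step~2; you only escape circularity if step~2 has already succeeded. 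The acknowledged difficulty in the perfect-but-not-centerless case compounds this: your centralizer heuristic is not obviously complete there, whereas the paper's direct test ``is $[H,H]=H$?'' handles perfect factors uniformly with no case analysis.
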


\begin{proof}
Let $d \in \mathbb{N}$ such that each indecomposable direct factor of $G$ is generated by at most $d$ elements (note that such a $d$ exists as $G \in \mathcal{C}$). We first show how to compute a fully-refined direct product decomposition of $G$ in $\textsf{FL}$. For each $d$-tuple $(g_1, \ldots, g_d) \in G^d$ of distinct elements, we use a membership test to write down the elements of $H := \langle g_1, \ldots, g_d \rangle$ in $\textsf{FL}$ \cite{BarringtonMcKenzie, Reingold}. We may now check in $\ACz$ whether $H$ is normal in $G$. If $H$ is not normal in $G$, then $H$ is not a direct factor of $G$. We may also check in $\ACz$ whether $Z(H) = \{1\}$. 

If $Z(H) \neq \{1\}$, then we check in $\textsf{FL}$ whether $H$ is perfect, in the following manner. We first write down, in $\ACz$,  $[x,y]$, for all $x, y \in H$. Now using a membership test, we can write down $[H, H]$ in $\textsf{FL}$ \cite{BarringtonMcKenzie, Reingold} and then check if $[H,H] = H$. 

It follow that, in $\textsf{FL}$, we can (i) decide if $G \in \mathcal{C}$, and if so (ii) compute a fully refined direct product decomposition of $G$. The remainder of the proof is now identical to that of Theorem~\ref{thm:MainCanonization}.    
\end{proof}

\section{Central Quasigroups in NC}

In this section, we will establish the following.

\begin{theorem} \label{thm:CentralQuasigroups}
Let $(G_1, *_{1})$ be a central quasigroup, and let $(G_2, *_{2})$ be an arbitrary quasigroup. Suppose that $G_1, G_2$ are given by their multiplication tables. We can decide isomorphism between $G_1$ and $G_2$ in $\textsf{NC}$.    
\end{theorem}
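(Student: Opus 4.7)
The plan is to exploit the affine decomposition of $G_1$, check whether $G_2$ admits a similar decomposition, and reduce the quasigroup isomorphism question to a conjugacy-plus-coset problem on a single abelian group. To recover the affine data of $G_1$: because $(G_1,*_1)$ is central with $x *_1 y = \phi_1(x)+\psi_1(y)+c_1$, a direct calculation gives $L_xL_y^{-1}(z) = z+\phi_1(x-y)$, so $\dis{G_1,*_1}$ consists of exactly the $|G_1|$ translations of the underlying abelian group and acts regularly on $G_1$. Fixing any $e\in G_1$, the set $D_1 := \{L_xL_e^{-1}:x\in G_1\}$ is the whole displacement group, and its regular action defines an abelian operation $+_1$ on $G_1$ with identity $e$ by $x+_1 y := \tau_x(y)$, where $\tau_x\in D_1$ is the unique element with $\tau_x(e)=x$; concretely, $\tau_x = L_{x\rdiv(e\ldiv e)}L_e^{-1}$, so $x+_1 y$ is $\ACz$-computable from the multiplication table. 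Setting $c_1:=e*_1 e$, $\phi_1(x):=(x*_1 e)-_1 c_1$, and $\psi_1(y):=(e*_1 y)-_1 c_1$ recovers $\phi_1,\psi_1\in\Aut(G_1,+_1)$ with $x*_1 y = \phi_1(x)+_1\psi_1(y)+_1 c_1$. Running the analogous construction on $G_2$ at some $e'\in G_2$ produces a candidate affine representation; I then verify in $\ACz$ that the equation holds for all pairs. If verification fails, $G_2$ is not central and we reject (centrality is invariant under isomorphism); otherwise we have $(+_2,\phi_2,\psi_2,c_2)$.

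Using the $\NC$ abelian group isomorphism algorithm based on Smith normal form, decide whether $(G_1,+_1)\cong(G_2,+_2)$ and, if so, compute an explicit isomorphism $\beta$; transport $(\phi_2,\psi_2,c_2)$ through $\beta$ to a triple $(\phi_2',\psi_2',c_2')$ on $A:=(G_1,+_1)$. A short calculation---any isomorphism $\sigma:(G_1,*_1)\to(G_2,*_2)$ must conjugate $\dis{G_1,*_1}$ to $\dis{G_2,*_2}$, and since both are the translation groups of $A$, this forces $\sigma$ to be affine, $\sigma(x)=\alpha(x)+d$ for some $\alpha\in\Aut(A)$ and $d=\sigma(e)$---reduces quasigroup isomorphism to deciding whether there exist $\alpha\in\Aut(A)$ and $d\in A$ with
\[
\phi_2'=\alpha\phi_1\alpha^{-1},\quad \psi_2'=\alpha\psi_1\alpha^{-1},\quad c_2'-\alpha(c_1)\in(I-\phi_2'-\psi_2')(A).
\]

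The main obstacle is to decide this joint simultaneous-conjugacy-plus-coset problem in $\NC$. Because every automorphism of $A$ respects the Sylow decomposition $A=\bigoplus_p A_p$ and each $\phi_i,\psi_i$ preserves every $A_p$, the problem splits prime-by-prime, reducing to the case that $A$ is a finite abelian $p$-group and the $\phi_i,\psi_i$ are invertible matrices over $\Z/p^k\Z$ for bounded $k$. I would first use a parallel rational-canonical-form computation to produce an $\alpha_0\in\Aut(A)$ conjugating $\phi_1$ to $\phi_2'$; the remaining freedom is multiplication by an element of the centralizer algebra $C:=C_{\mathrm{End}(A)}(\phi_1)$, whose block-diagonal structure is exposed by the rational form, and the surviving condition is that an invertible $\alpha_1\in C$ conjugate $\psi_1$ to $\alpha_0^{-1}\psi_2'\alpha_0$ inside $C$. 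This in turn reduces to searching for an invertible element of an explicit linear subspace of $C$, which in the medial subcase $\phi\psi=\psi\phi$ collapses to a Smith-normal-form computation on a commutative subalgebra, while the general central case requires more delicate parallel machinery for modules over finite-dimensional algebras. Once a valid $\alpha$ has been produced, the final coset condition on $c_2'-\alpha(c_1)$ is a single $\NC$ linear-system feasibility test over $\Z/n\Z$.
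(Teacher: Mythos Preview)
Your setup matches the paper's almost exactly: recognize centrality, recover the affine data $(\phi_i,\psi_i,c_i)$, check that the underlying abelian groups are isomorphic, and reduce to the Stanovsk\'y--Vojt\v{e}chovsk\'y criterion, i.e., to finding $\gamma\in\Aut(A)$ simultaneously conjugating $(\phi_1,\psi_1)$ to $(\phi_2',\psi_2')$ together with a coset condition on the constants. Your derivation that any isomorphism must be affine (by conjugating displacement groups) is a pleasant self-contained argument where the paper simply cites the criterion.

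The gap is entirely in your final step. You propose to attack the simultaneous-conjugacy problem by rational canonical forms over $\Z/p^k\Z$, then search for an invertible element inside an explicit linear subspace of a centralizer algebra. Neither of these is under control in $\NC$: for $k>1$ the ring $\Z/p^k\Z$ is not a field and there is no rational canonical form in the usual sense; and ``find an invertible element of a given linear subspace of matrices'' is a hard problem in general (it is closely related to Edmonds' problem / symbolic determinant identity testing) with no known $\NC$ algorithm. You yourself flag that the general central case ``requires more delicate parallel machinery for modules over finite-dimensional algebras,'' and that machinery is never supplied. So as written the proposal does not reach $\NC$.

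The paper sidesteps all of this with a single observation you missed: $\Aut(A)$ is a permutation group on the $n$ elements of $A$ (a polynomial-size generating set is given by Birkhoff), and an automorphism is determined by its action on a basis of size $k\le\log n$. The conditions $\gamma\phi_1\gamma^{-1}=\phi_2'$, $\gamma\psi_1\gamma^{-1}=\psi_2'$, $\gamma(c_1+u)=c_2'$ are then encoded as a \emph{Pointwise Transporter} instance on $2k+1\in O(\log n)$ points inside $\Aut(A)\le\mathrm{Sym}(A)$, and Pointwise Transporter on polylogarithmically many points is in $\NC$ by iterating the standard orbit/pointwise-stabilizer algorithms of Babai--Luks--Seress and McKenzie. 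One then loops over the at most $n$ candidates $u\in\mathrm{Im}(1-\phi_1-\psi_1)$ in parallel. This reduction to permutation-group machinery is the missing idea; your linear-algebraic route, as stated, does not close the argument.
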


We will begin by recalling some additional preliminaries.

\subsection{Additional Preliminaries: Central Quasigroups}

\begin{lemma} \label{lem:CommutingAutomorphisms}
Let $(Q,*)$ be a central quasigroup over an Abelian group $(Q,+)$. Then there are uniquely determined $\phi,\psi\in\mathrm{Aut}(Q,+)$ and $c\in Q$ such that $(Q,*)=\mathcal Q(Q,+,\phi,\psi,c)$. Namely, if $0$ is the identity element of $(Q,+)$, then $c=0*0$, $\phi(x)=(x*0)-(0*0)$ and $\psi(x) = (0*x)-(0*0)$. Furthermore, if $Q$ is given by its multiplication table, we can recover $\phi, \psi$ in $\textsf{AC}^{0}$.
\end{lemma}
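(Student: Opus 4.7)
The plan is to substitute three distinguished arguments into the defining identity $x*y = \phi(x)+\psi(y)+c$ and read off $c$, $\phi$, $\psi$ in turn, exploiting only the fact that any element of $\Aut(Q,+)$ fixes the identity $0$. This simultaneously yields the explicit formulas and proves uniqueness, since the formulas make no free choices.

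First, I would set $x=y=0$ in \eqref{Eq:Central}. Because $\phi,\psi\in\Aut(Q,+)$ satisfy $\phi(0)=\psi(0)=0$, the equation collapses to $0*0 = c$, forcing $c=0*0$. With $c$ identified, setting $y=0$ gives $x*0 = \phi(x)+c$, whence $\phi(x) = (x*0)-(0*0)$; symmetrically, setting $x=0$ gives $0*y = \psi(y)+c$, whence $\psi(y) = (0*y)-(0*0)$. These derivations show that any triple $(\phi,\psi,c)$ representing $(Q,*)$ centrally over $(Q,+)$ must be given by the stated formulas, establishing uniqueness. Conversely, since centrality of $(Q,*)$ is assumed, some such triple exists, and the derived formulas must reproduce it; in particular, the maps produced by the formulas are automatically automorphisms of $(Q,+)$, so no separate verification is required.

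For the complexity claim, once the addition (equivalently, subtraction) table of $(Q,+)$ is available alongside the multiplication table of $(Q,*)$, each of the quantities $0*0$, $x*0$, and $0*y$ is a single lookup in the $*$-table, and the subtraction $(x*0)-(0*0)$ is a single lookup in the $+$-table. Hence for every $x\in Q$ the values $\phi(x)$ and $\psi(x)$ are computed by a constant-depth, polynomial-size Boolean circuit with lookup gates, and the full tables of $\phi$ and $\psi$ are obtained by running these computations in parallel over all $x$. This puts the recovery in $\textsf{AC}^{0}$.

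There is essentially no technical obstacle here; the result is a direct consequence of substituting into the definition. The only point worth flagging is that uniqueness is relative to the choice of Abelian group $(Q,+)$, not merely to the underlying set $Q$, since different Abelian structures on $Q$ could in principle yield different representing triples $(\phi,\psi,c)$ for the same quasigroup $(Q,*)$.
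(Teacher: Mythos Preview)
Your argument is correct and follows essentially the same approach as the paper: substitute $x=y=0$, then $y=0$, then $x=0$ into the defining identity to read off $c$, $\phi$, and $\psi$, and observe that each value is a constant number of table lookups. Your write-up is in fact slightly more thorough in separating existence from uniqueness and in spelling out the $\textsf{AC}^0$ argument.
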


\begin{proof}
Let $(Q,*)=\mathcal Q(Q,+,\phi,\psi,c)$. Then $0*0 = \phi(0)+\psi(0)+c = c$, $x*0 = \phi(x)+\psi(0)+c = \phi(x)+(0*0)$, and $0*y = \phi(0)+\psi(y)+c = \psi(y)+(0*0)$. 

As $Q$ is given by its multiplication table, we can evaluate the expressions for $\phi(x), \psi(x)$ in $\textsf{AC}^{0}$. 
\end{proof}

In \cite{StanovskyVojtechovsky}, the authors provided a test of central quasigroups up to isomorphism. We will crucially leverage this result to construct an isomorphism test. 

\begin{theorem}[{\cite[Theorem~2.4]{StanovskyVojtechovsky}}] \label{thm:SV}
Let $(G,+)$ be an Abelian group, let $\phi_{1}, \psi_{1}, \phi_{2}, \psi_{2} \in \Aut(G,+)$, and let $c_{1}, c_{2} \in G$. Then the following statements are equivalent:
\begin{enumerate}[label=(\alph*)]
\item the central quasigroups $\mathcal{Q}(G, +, \phi_{1}, \psi_{1}, c_{1})$ and $\mathcal{Q}(G, +, \phi_{2}, \psi_{2}, c_{2})$ are isomorphic, 
\item there is an automorphism $\gamma \in \Aut(G,+)$ and an element $u \in \text{Im}(1-\phi_{1}-\psi_{1})$ such that
\[
\phi_{2} = \gamma \phi_{1} \gamma^{-1}, \, \psi_{2} = \gamma \psi_{1} \gamma^{-1}, \, c_{2} = \gamma(c_{1} + u).
\]
\end{enumerate}
\end{theorem}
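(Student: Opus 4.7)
For the easy direction (b)$\Rightarrow$(a), the plan is to write down an explicit isomorphism of the form $f(x) = \gamma(x) + w$ for a suitable $w \in G$. Substituting into the identity $f(x *_1 y) = f(x) *_2 f(y)$ and using $\phi_2 = \gamma \phi_1 \gamma^{-1}$, $\psi_2 = \gamma \psi_1 \gamma^{-1}$, the equation collapses to the single scalar condition $\gamma(c_1) + w = \phi_2(w) + \psi_2(w) + c_2$, equivalently $c_2 - \gamma(c_1) = (1 - \phi_2 - \psi_2)(w)$. Writing the hypothesized $u$ as $u = (1 - \phi_1 - \psi_1)(v)$ and taking $w := \gamma(v)$, a short computation using the two conjugation relations gives $(1 - \phi_2 - \psi_2)(\gamma(v)) = \gamma((1 - \phi_1 - \psi_1)(v)) = \gamma(u)$, so $c_2 = \gamma(c_1) + \gamma(u) = \gamma(c_1 + u)$, as required.

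For the harder direction (a)$\Rightarrow$(b), I would start from an arbitrary quasigroup isomorphism $f$ between the two central quasigroups and set $d := f(0)$ and $\gamma(x) := f(x) - d$, so that $\gamma(0) = 0$ and $\gamma$ is a bijection (not yet known to be additive). The first step is to read off two ``boundary'' identities by specializing the equation $f(\phi_1(x) + \psi_1(y) + c_1) = \phi_2(f(x)) + \psi_2(f(y)) + c_2$ to $y = 0$ and to $x = 0$, and subtracting the $x = y = 0$ instance in each case. After reparameterizing $x \mapsto \phi_1^{-1}(z)$ and $y \mapsto \psi_1^{-1}(z)$, both specializations give the same closed form
\[
f(z + c_1) - f(c_1) = \phi_2 \gamma \phi_1^{-1}(z) = \psi_2 \gamma \psi_1^{-1}(z).
\]

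The main obstacle will be showing $\gamma \in \Aut(G,+)$; the key idea is to plug this closed form back into the full equation. After absorbing the constant $f(c_1)$, one arrives at $\phi_2\gamma\phi_1^{-1}(\phi_1(x)+\psi_1(y)) = \phi_2\gamma(x) + \psi_2\gamma(y)$. Applying $\phi_2^{-1}$, rewriting $\phi_2^{-1}\psi_2\gamma\psi_1^{-1}$ via the identity just derived, and substituting $a = \phi_1(x)$, $b = \psi_1(y)$ yields $\gamma\phi_1^{-1}(a + b) = \gamma\phi_1^{-1}(a) + \gamma\phi_1^{-1}(b)$. Hence $\gamma\phi_1^{-1}$ is additive, so $\gamma$ is additive, and being a bijection it lies in $\Aut(G,+)$.

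With additivity in hand, the remaining conclusions fall out cleanly. Writing $f(z) = d + \gamma(z)$ and comparing with $f(z+c_1) = f(c_1) + \phi_2\gamma\phi_1^{-1}(z)$ forces $\gamma(z) = \phi_2\gamma\phi_1^{-1}(z)$, i.e., $\phi_2 = \gamma\phi_1\gamma^{-1}$; the symmetric boundary identity gives $\psi_2 = \gamma\psi_1\gamma^{-1}$. Finally, the $x = y = 0$ specialization reads $f(c_1) = \phi_2(d) + \psi_2(d) + c_2$, and combining with $f(c_1) = d + \gamma(c_1)$ and the two conjugation relations lets me factor
\[
c_2 - \gamma(c_1) = (1 - \phi_2 - \psi_2)(d) = \gamma\bigl((1 - \phi_1 - \psi_1)(\gamma^{-1}(d))\bigr),
\]
so the element $u := (1 - \phi_1 - \psi_1)(\gamma^{-1}(d))$ lies in $\text{Im}(1 - \phi_1 - \psi_1)$ and satisfies $c_2 = \gamma(c_1 + u)$, completing the proof.
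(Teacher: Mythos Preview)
Your proof is correct. Note, however, that the paper does not supply its own proof of this statement: it is quoted verbatim from \cite[Theorem~2.4]{StanovskyVojtechovsky} and used as a black box in the isomorphism test for central quasigroups. So there is no in-paper argument to compare against.

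That said, your argument is clean and self-contained. The (b)$\Rightarrow$(a) direction via the explicit affine map $f(x)=\gamma(x)+\gamma(v)$ is exactly the right construction, and the key computation $(1-\phi_2-\psi_2)\circ\gamma = \gamma\circ(1-\phi_1-\psi_1)$ is what makes it work. For (a)$\Rightarrow$(b), the delicate point is establishing additivity of $\gamma$ from an \emph{a priori} arbitrary quasigroup isomorphism $f$; your route---deriving the boundary identity $f(z+c_1)-f(c_1)=\phi_2\gamma\phi_1^{-1}(z)=\psi_2\gamma\psi_1^{-1}(z)$, feeding it back into the full equation, and using $\phi_2^{-1}\psi_2\gamma = \gamma\phi_1^{-1}\psi_1$ to reduce to $\gamma\phi_1^{-1}(a+b)=\gamma\phi_1^{-1}(a)+\gamma\phi_1^{-1}(b)$---is correct and efficient. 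Once $\gamma$ is additive, the remaining identifications (including the extraction of $u=(1-\phi_1-\psi_1)(\gamma^{-1}(d))$) drop out as you describe.
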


Recall that a group $G$ acts on a set $X$ \emph{regularly} if for every $x,y\in X$ there is a unique $g\in G$ such that $g\cdot x = y$.

\begin{proposition} \label{prop:IsotopicGroup}
Let $(Q,*)$ be a quasigroup. Then $(Q,*)$ is isotopic to a group if and only if $\dis{Q,*}$ acts regularly on $Q$, in which case $(Q,*)$ is isotopic to $\dis{Q,*}$.
\end{proposition}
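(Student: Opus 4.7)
The plan is to prove the two directions separately; the ``in which case'' clause falls out of the backward construction.

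For the forward direction, suppose $(\alpha,\beta,\gamma)$ is an isotopism from $(Q,*)$ to a group $(G,\cdot)$. Solving the isotopy identity $\alpha(x)\cdot\beta(y) = \gamma(x*y)$ for the left translations of $(Q,*)$ yields $L_x = \gamma^{-1} L^G_{\alpha(x)} \beta$, where $L^G_g$ is left translation in $G$. Composing then gives
\[
L_xL_y^{-1} \;=\; \gamma^{-1} L^G_{\alpha(x)\alpha(y)^{-1}} \gamma.
\]
Since $\alpha$ is a bijection, the elements $\alpha(x)\alpha(y)^{-1}$ exhaust $G$ as $x,y$ range over $Q$, so $\dis{Q,*}$ is the $\gamma$-conjugate of the left regular representation of $G$. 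Regularity is preserved under conjugation, giving both that $\dis{Q,*}$ acts regularly on $Q$ and that $\dis{Q,*}\cong G$, which handles the ``in which case'' statement for this half.

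For the backward direction, set $D := \dis{Q,*}$ and assume the action on $Q$ is regular. Fix any $e\in Q$ and define $\gamma:Q\to D$ by sending $x$ to the unique $d\in D$ with $d(e)=x$, together with $\alpha(x) := L_xL_e^{-1}$ and $\beta(y) := \gamma(e * y)$. I plan to verify that each of these maps is a bijection ($\gamma$ directly from regularity, $\alpha$ from left cancellation in $(Q,*)$ combined with $|Q|=|D|$, and $\beta$ as a composition of two bijections) and that $(\alpha,\beta,\gamma)$ is an isotopism from $(Q,*)$ onto $D$. Regularity of the $D$-action is the decisive tool: two elements of $D$ coincide iff they agree at $e$, so it suffices to check $\alpha(x)\beta(y)(e) = \gamma(x*y)(e) = x*y$. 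The left-hand side unwinds as
\[
L_xL_e^{-1}\bigl(\gamma(e*y)(e)\bigr) \;=\; L_xL_e^{-1}(e*y) \;=\; L_x(y) \;=\; x*y,
\]
using $L_e^{-1}(e*y) = y$ in the middle step.

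I do not expect a serious obstacle. The one place requiring foresight is choosing $\alpha$ and $\beta$ so that the built-in generators $L_xL_e^{-1}$ of $D$ appear in the product $\alpha(x)\beta(y)$ in a manner that cancels exactly against $e*y$; once this choice is made, the isotopy identity collapses to a one-line calculation in $(Q,*)$ at the basepoint $e$, and the rest of the argument is formal.
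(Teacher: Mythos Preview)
Your proof is correct. The forward direction matches the paper's argument essentially verbatim: both express $L_xL_y^{-1}$ (the paper uses $L_xL_e^{-1}$ for a fixed $e$, which is equivalent) as a $\gamma$-conjugate of a left translation in the target group, so $\dis{Q,*}$ is conjugate to the left regular representation and hence acts regularly.

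In the backward direction you take a genuinely different, and somewhat more direct, route. The paper first argues that regularity forces $\dis{Q,*}=\{\lambda_x\lambda_e^{-1}:x\in Q\}$, then transports the composition law along the bijection $x\mapsto\lambda_x\lambda_e^{-1}$ to obtain an explicit group operation $x\cdot y=(x*(e\ldiv(y*e)))\rdiv e$ on $Q$, and finally writes down an isotopism $(\mathrm{id},\,y\mapsto e\ldiv(y*e),\,z\mapsto z*e)$ from $(Q,\cdot)$ to $(Q,*)$. You instead go straight to an isotopism $(Q,*)\to\dis{Q,*}$ using the orbit map $\gamma$ at $e$, with $\alpha(x)=L_xL_e^{-1}$ and $\beta=\gamma\circ L_e$; the verification that $\alpha(x)\beta(y)=\gamma(x*y)$ reduces, via freeness of the regular action, to a one-line evaluation at $e$. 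Your argument avoids the preliminary identification of $\dis{Q,*}$ with $\{\lambda_x\lambda_e^{-1}\}$ as an explicit step (though it is implicit in the bijectivity of $\alpha$). What the paper's version buys is the explicit formula $x\cdot y=(x*(e\ldiv(y*e)))\rdiv e$ for the induced group operation on $Q$, which it later exploits in Lemma~\ref{lem:DisQ} to write down the multiplication table of $\dis{Q,*}$ in $\ACz$; your construction gives the same information packaged inside $\gamma^{-1}$, but the paper's formulation makes the $\ACz$ bound immediate.
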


\begin{proof}
Suppose that $(Q,\cdot)$ is a group and $(\alpha,\beta,\gamma)$ is an isotopism from $(Q,*)$ to $(Q,\cdot)$. Then $x*y = \gamma^{-1}(\alpha(x)\cdot\beta(y))$. Denote by $L_x$ the left translation by $x$ in $(Q,\cdot)$, and by $\lambda_x$ the left translation by $x$ in $(Q,*)$. We have $\lambda_x = \gamma^{-1}L_{\alpha(x)}\beta$. Let $e\in Q$. Then $\lambda_x\lambda_e^{-1} = (\gamma^{-1}L_{\alpha(x)}\beta)(\gamma^{-1}L_{\alpha(e)}\beta)^{-1} = \gamma^{-1}L_{\alpha(x)}L_{\alpha(e)}^{-1}\gamma = \gamma^{-1}L_{\alpha(x)\alpha(e)^{-1}}\gamma$ since $(Q,\cdot)$ is a group. Hence $\dis{Q,*} = \langle\lambda_x\lambda_e^{1}:x\in Q\rangle = \langle \gamma^{-1}L_{\alpha(x)\alpha(e)^{-1}}\gamma:x\in Q\rangle$. Consequently, $\gamma\dis{Q,*}\gamma^{-1} = \langle L_{\alpha(x)\alpha(e)^{-1}}:x\in Q\rangle = \langle L_x:x\in Q\rangle = \{L_x:x\in Q\}$, where in the last equality we again used the fact that $(Q,\cdot)$ is a group. As the group $\{L_x:x\in Q\}$ acts regularly on $Q$, it follows that the conjugate subgroup $\dis{Q,*}$ also acts regularly.

Conversely, suppose that $\dis{Q,*}$ acts regularly on $Q$. We claim that $\dis{Q,*}=\{\lambda_x\lambda_e^{-1}:x\in Q\}$. By regularity, it suffices to show that for every $y,z\in Q$ there is $x\in Q$ such that $\lambda_x\lambda_e^{-1}(y)=z$. Now, $\lambda_x\lambda_e^{-1}(y)=z$ if and only if $x*(e\ldiv y) = z$ if and only if $x = z\rdiv (e\ldiv y)$.

Since $\dis{Q,*}=\{\lambda_x\lambda_e^{-1}:x\in Q\}$ is a group under the composition of mappings, for every $x,y\in Q$ there is a unique $z\in Q$ such that $\lambda_x\lambda_e^{-1}\lambda_y\lambda_e^{-1}=\lambda_z\lambda_e^{-1}$. This is equivalent to $\lambda_z = \lambda_x\lambda_e^{-1}\lambda_y$, and applying these mappings to $e$ yields $z*e = x*(e\ldiv (y*e))$, that is, $z = (x*(e\ldiv (y*e)))\rdiv e$. Letting $x\cdot y = (x*(e\ldiv (y*e)))\rdiv e$, it follows that $\dis{Q,*}$ is isomorphic to $(Q,\cdot)$ and, in particular, $(Q,\cdot)$ is a group. Let $\alpha,\beta,\gamma:Q\to Q$ be the permutations defined by $\alpha(x)=x$, $\beta(y)=e\ldiv(y*e)$ and $\gamma(z)=z*e$. Then $\gamma(x\cdot y) = (x\cdot y)*e = x*(e\ldiv (y*e)) = \alpha(x)*\beta(y)$, and hence $(Q,*)$ is isotopic to $(Q,\cdot)$.
\end{proof}

\begin{lemma} \label{lem:CentralQuasigroupIsotopy}
Let $(G, +)$ be an Abelian group, and let $\phi, \psi \in \Aut(G, +)$. Let $c \in G$. Let $\mathcal{Q}(G, +, \phi, \psi, c)$ be the corresponding central quasigroup, with multiplication $x * y = \phi(x) + \psi(y) + c$. We have that $\mathcal{Q}(G, +, \phi, \psi, c)$ is isotopic to $(G, +)$.
\end{lemma}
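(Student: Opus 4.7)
The plan is to exhibit an explicit isotopism $(\alpha,\beta,\gamma)$ from $\mathcal{Q}(G,+,\phi,\psi,c)$ to $(G,+)$ by reading it off the defining formula $x*y = \phi(x)+\psi(y)+c$. The defining equation of an isotopism between $(G,*)$ and $(G,+)$ is
\[
\alpha(x) + \beta(y) = \gamma(x*y) \quad \text{for all } x,y\in G,
\]
so it is natural to set $\alpha := \phi$, $\beta := \psi$, and $\gamma(z) := z - c$. Each of these is a bijection on $G$: $\phi$ and $\psi$ are automorphisms of $(G,+)$ by hypothesis, and translation by $-c$ is plainly a bijection.

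The verification is then a one-line computation:
\[
\gamma(x*y) = (x*y) - c = \bigl(\phi(x) + \psi(y) + c\bigr) - c = \phi(x) + \psi(y) = \alpha(x) + \beta(y).
\]
Hence $(\phi,\psi,\,z\mapsto z-c)$ is an isotopism from $\mathcal{Q}(G,+,\phi,\psi,c)$ to $(G,+)$, which proves the lemma.

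There is no real obstacle here; the lemma is essentially a restatement of the definition of central quasigroup in isotopism-theoretic language, and the only thing to check is that the three maps we wrote down are well-defined bijections on $G$. In fact, this is the converse direction of Proposition~\ref{prop:IsotopicGroup} in the special case of a central quasigroup: the affine form~\eqref{Eq:Central} directly exhibits the principal isotopism to the underlying Abelian group.
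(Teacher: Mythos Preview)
Your proof is correct and essentially identical to the paper's: both exhibit the explicit isotopism $(\phi,\psi,\,z\mapsto z-c)$ and verify it by a one-line computation. (The paper phrases the direction of the isotopism the other way but checks the same identity $\phi(x)+\psi(y)=(x*y)-c$.)
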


\begin{proof}
We claim that $(u,v,w) = (\phi, \psi, x\mapsto x-c)$ is an isotopism from $(G, +)$ to $\mathcal{Q}(G, +, \phi, \psi, c)$. Indeed, $u(x)+v(y) = w(x*y)$ is equivalent to $\phi(x)+\psi(y) = (x*y)-c$, which holds by definition of $*$. 
\end{proof}

\subsection{Pointwise Transporter on Small Domains in Parallel}

In the permutation group model, a group $G$ is specified succinctly by a generating sequence of permutations from $\text{Sym}(m)$. The computational complexity for this model will be measured in terms of $m$. We will recall the following standard suite of $\textsf{NC}$ algorithms in the setting of permutation groups.

\begin{lemma} \label{lem:PermutationGroupsNC}
Let $G \leq \text{Sym}(m)$ be given by a sequence $S$ of generators. The following problems are in $\textsf{NC}$:
\begin{enumerate}[label=(\alph*)]
\item Find the pointwise stabilizer of $B \subseteq [m]$ \cite{BabaiLuksSeress}.
\item For any $x \in [m]$, find the orbit $x^{G}$. Furthermore, for each $y \in x^{G}$, find a $g \in G$ such that $x^{g} = y$ \cite{McKenzieThesis}.
\end{enumerate}
\end{lemma}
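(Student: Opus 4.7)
The plan for this lemma is essentially to invoke two well-known results from the permutation-group complexity literature, both of which are already given citations in the statement. So my proof proposal amounts to pointing to the correct references and sketching the conceptual reasons for NC membership. I would structure the argument as two independent sub-proofs, one per item, each concluding by citing the corresponding paper.

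For part (a), I would cite the pointwise stabilizer algorithm of Babai--Luks--Seress directly. The conceptual picture I would convey is the following: one processes the points of $B$ one at a time, maintaining generators for the stabilizer of the prefix $B_i = \{b_1, \dots, b_i\}$. The inductive step uses a Schreier-style construction, computing the orbit of $b_{i+1}$ under the current stabilizer, picking a transversal, and extracting generators of the next stabilizer from Schreier generators. The key observation enabling NC is that orbit computation and transversal selection parallelize well, and that the divide-and-conquer over the $|B| \le m$ many points contributes only a polylog depth blowup. Since the cited algorithm already achieves NC in the permutation-group input size, invoking it directly suffices.

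For part (b), I would cite McKenzie's thesis and sketch the underlying idea. Computing the orbit $x^G$ reduces to reachability in the Schreier graph whose vertex set is $[m]$ and whose directed edges are $(y, y^s)$ for each generator $s \in S$ and each $y \in [m]$. Reachability in directed graphs lies in NL, which is contained in NC, so the orbit is NC-computable. To recover a witness $g \in G$ with $x^g = y$ for each $y \in x^G$, one builds a BFS tree rooted at $x$ (in NC) and reads off the sequence of generators labeling the path from $x$ to $y$; their product is the desired $g$. Again, the cited reference already furnishes the NC bound, so the proof is simply an appeal to that result.

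The only mild obstacle is being precise about what "in NC" means across the two cited works, since one wants both subroutines to be composable in later applications. Here both cited results give honest NC bounds with polynomially-bounded witness lengths (for part (b), the length of the returned $g$ as a word in $S$ is polynomial in $m$), so composition is not an issue. Consequently the proof of the lemma reduces to two one-line invocations of the cited theorems, and I would write it in that minimal form.
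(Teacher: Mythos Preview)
Your proposal is correct and matches the paper's treatment: the paper provides no proof at all for this lemma, treating it purely as a recollection of known results with the citations embedded directly in the statement. Your added conceptual sketches are fine but go beyond what the paper does; a one-line appeal to each cited reference is all that is expected.
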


The \algprobm{Pointwise Transporter} problem takes as input $G \leq \text{Sym}(m)$ given by a sequence of generators, and two sequences $(x_1, \ldots, x_k)$, $(y_1, \ldots, y_k)$ of points from $[m]$, and tests whether there exists some $g \in G$ such that for all $i \in [k]$, $x_{i}^{g} = y_{i}$. We will establish the following:

\begin{lemma} \label{lem:PointwiseTransporter}
Fix $c > 0$, and let $k \in O(\log^{c} n)$. We can solve \algprobm{Pointwise Transporter} on $k$ points in $\textsf{NC}$.
\end{lemma}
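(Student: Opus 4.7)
\textbf{Proof plan for Lemma~\ref{lem:PointwiseTransporter}.} The plan is to iteratively reduce \algprobm{Pointwise Transporter} on $k$ points to $k$ applications of orbit computation and pointwise stabilizer in the permutation group model, and to exploit the fact that sequential composition of polylogarithmically many $\textsf{NC}$ procedures remains in $\textsf{NC}$. After processing the first $i$ points I will maintain a polynomial-size generating set for $H_{i} := \text{Stab}_{G}(x_{1}, \ldots, x_{i})$ together with a partial transporter $g_{i} \in G$ satisfying $x_{j}^{g_{i}} = y_{j}$ for all $j \leq i$; initially $H_{0} = G$ and $g_{0}$ is the identity. Using the convention $x^{gh} = (x^{g})^{h}$, the set of all $g \in G$ transporting $(x_{1}, \ldots, x_{i})$ to $(y_{1}, \ldots, y_{i})$ is either empty (in which case we reject) or equals the left coset $H_{i} g_{i}$.

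At step $i + 1$, set $z_{i+1} := y_{i+1}^{g_{i}^{-1}}$. A transporter exists through step $i+1$ if and only if $z_{i+1}$ lies in the orbit $x_{i+1}^{H_{i}}$; by Lemma~\ref{lem:PermutationGroupsNC}(b), this orbit can be computed in $\textsf{NC}$ together with, for each orbit point $w$, an explicit element $h_{w} \in H_{i}$ with $x_{i+1}^{h_{w}} = w$. If $z_{i+1}$ fails to lie in the orbit, reject. Otherwise set $g_{i+1} := h_{z_{i+1}} \cdot g_{i}$; since $h_{z_{i+1}} \in H_{i}$ fixes $x_{1}, \ldots, x_{i}$ pointwise and $x_{i+1}^{h_{z_{i+1}}} = z_{i+1} = y_{i+1}^{g_{i}^{-1}}$, the identity $x_{j}^{h g_{i}} = (x_{j}^{h})^{g_{i}}$ immediately verifies the transporter property for $j \leq i+1$. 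Finally, compute $H_{i+1} = \text{Stab}_{G}(x_{1}, \ldots, x_{i+1})$ by a single call to Lemma~\ref{lem:PermutationGroupsNC}(a) on the set $\{x_{1}, \ldots, x_{i+1}\}$. After $k$ iterations, either some step rejected (in which case no transporter exists) or $g_{k}$ is the desired witness, which can be verified in $\textsf{AC}^{0}$.

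Each iteration performs $O(1)$ operations, each of which is an $\textsf{NC}$ computation on a permutation group of degree $m$ presented by a polynomial-size generating set; hence each iteration has depth $\log^{O(1)} n$ and polynomial size. The main technical concern is that $g_{i+1}$ genuinely depends on $g_{i}$, so the loop is inherently sequential and cannot simply be parallelized across $i$. However, with $k = O(\log^{c} n)$ iterations, the depths multiply out to $O(\log^{c + O(1)} n)$, which is still polylogarithmic, and the total size is bounded by $k$ times a polynomial, hence polynomial. Uniformity of the resulting $\textsf{NC}$ circuit family is inherited from the uniformity of the underlying permutation-group subroutines cited in Lemma~\ref{lem:PermutationGroupsNC} and the fact that the outer loop is driven by a simple counter ranging over $k \leq \log^{c} n$ values.
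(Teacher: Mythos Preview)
Your proposal is correct and follows essentially the same approach as the paper: both reduce \algprobm{Pointwise Transporter} on $k$ points to $k$ sequential applications of orbit computation (Lemma~\ref{lem:PermutationGroupsNC}(b)) and pointwise stabilizer (Lemma~\ref{lem:PermutationGroupsNC}(a)), adjusting the targets by the partial transporter found so far, and then observe that $k \in O(\log^{c} n)$ many $\textsf{NC}$ stages compose to an $\textsf{NC}$ algorithm. Your write-up is in fact more explicit than the paper's about maintaining the invariant $x_j^{g_i}=y_j$ and about why the sequential dependence across iterations is harmless for $\textsf{NC}$ membership.
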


\begin{proof}
We follow the strategy of \cite[(3.9)]{LuksReduction}. By Lemma~\ref{lem:PermutationGroupsNC}(b), we can decide in $\textsf{NC}$ if there exists some $g \in G$ such that $x_{1}^{g} = y_{1}$. If so, we compute $\text{Stab}(x_1)$ in $\textsf{NC}$ (Lemma~\ref{lem:PermutationGroupsNC}(a)). We now recursively find the subcoset $Hz$ of $\text{Stab}(x_1)$ sending $x_{i} \mapsto y_{i}^{g^{-1}}$ for all $2 \leq i \leq k$. 

We make $k \in O(\log^{c} n)$ recursive calls, each of which is $\textsf{NC}$-computable. Thus, our entire algorithm is $\textsf{NC}$-computable, as desired.
\end{proof}

\subsection{Isomorphism Testing of Central Quasigroups}

\begin{lemma} \label{lem:DisQ}
Let $(Q, *)$ be a quasigroup, and fix $e \in Q$ arbitrarily. We can write down the set $D = \{ \lambda_{x}\lambda_{e}^{-1} : x \in Q\} \subseteq \dis{Q,*}$ and check if $(Q,*)$ is isotopic to a group; and if so, then we can also write down the multiplication table for $\dis{Q,*}$ in $\ACz$.
\end{lemma}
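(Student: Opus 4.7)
The plan is to produce all three items directly as explicit $\ACz$ computations from the Cayley table of $(Q,*)$, by reducing the ``isotopic to a group'' condition to a local closure property of $D$ (interpreting the statement via Proposition~\ref{prop:IsotopicGroup}).

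First, compute $D$ in $\ACz$. For each $x\in Q$, the permutation $\lambda_x\lambda_e^{-1}$ is the map $y\mapsto x*(e\ldiv y)$, where $e\ldiv y$ is the unique $z$ with $e*z=y$. The value $e\ldiv y$ is determined in $\ACz$ by scanning the $e$-row of the Cayley table, and $x*(e\ldiv y)$ is then a direct lookup, so all $|Q|$ permutations $\lambda_x\lambda_e^{-1}$ can be written down in parallel in $\ACz$. Distinct $x$ yield distinct permutations (by right cancellation), so $|D|=|Q|$; moreover, for every pair $(y,z)\in Q^{2}$ there is a unique $d\in D$ with $d(y)=z$, namely $d=\lambda_x\lambda_e^{-1}$ for $x=z\rdiv(e\ldiv y)$. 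In particular, $D$ acts regularly on $Q$ as a set of permutations.

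Second, reduce the isotopy question to testing whether $D$ is closed under composition. By Proposition~\ref{prop:IsotopicGroup}, $(Q,*)$ is isotopic to a group if and only if $\dis{Q,*}$ acts regularly on $Q$. If $D$ is closed under composition, then since $\mathrm{id}=\lambda_e\lambda_e^{-1}\in D$, $D$ is a finite subsemigroup of $\mathrm{Sym}(Q)$ containing the identity and hence a subgroup; it contains all generators $\lambda_x\lambda_e^{-1}$ of $\dis{Q,*}$, so $D=\dis{Q,*}$, and by the previous paragraph this group acts regularly. Conversely, if $\dis{Q,*}$ acts regularly, then $|\dis{Q,*}|=|Q|=|D|$ together with $D\subseteq\dis{Q,*}$ forces $D=\dis{Q,*}$, which is closed under composition.

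Third, perform the closure check and, when it succeeds, output the multiplication table of $\dis{Q,*}$. For each ordered pair $(d_1,d_2)\in D\times D$, compute the permutation $d_1\circ d_2$ pointwise from the recorded values of $d_1,d_2$, and compare it against every element of $D$ in parallel. If some matching $d_3\in D$ exists for every pair, then by uniqueness of $d_3$ the assignment $(d_1,d_2)\mapsto d_3$ is exactly the Cayley table of $\dis{Q,*}=D$; otherwise $(Q,*)$ is not isotopic to a group. Each of these operations is a bounded-depth, polynomial fan-in computation on the Cayley table, placing the whole procedure in $\ACz$. The main obstacle is conceptual rather than computational: computing $\dis{Q,*}$ from its generating set is in general a nontrivial closure problem, but the observation that $D$ already has size $|Q|$ and acts regularly as a set reduces generation to the local check that $D$ is closed under composition, after which every step is a routine parallel table lookup.
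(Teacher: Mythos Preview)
Your proof is correct and follows essentially the same approach as the paper: compute $D$ via $\lambda_x\lambda_e^{-1}(y)=x*(e\ldiv y)$, reduce the isotopy test to checking that $D$ is closed under composition (using Proposition~\ref{prop:IsotopicGroup} and $|D|=|Q|$), and then read off the Cayley table of $\dis{Q,*}=D$. The only cosmetic difference is that the paper uses the explicit formula $z=(x*(e\ldiv(y*e)))\rdiv e$ for the product, whereas you compute $d_1\circ d_2$ pointwise and search $D$ for a match; both are evidently $\ACz$.
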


\begin{proof}
We have that $\lambda_{x}\lambda_{e}^{-1}(y) = x * (e\backslash y)$, which we may evaluate in $\textsf{AC}^{0}$. Thus, in $\textsf{AC}^{0}$, we may evaluate in parallel $\lambda_{x}\lambda_{e}^{-1}(y)$, for all $x, y \in Q$. Thus, we may write down $D$ in $\ACz$. 

By Proposition~\ref{prop:IsotopicGroup}, $(Q,*)$ is isotopic to a group if and only if $\dis{Q,*}$ acts regularly on $Q$. Now, $\dis{Q,*}$ acts regularly on $Q$ if and only if $|\dis{Q,*}|=|Q|$. Since $|D|=|Q|$, $\dis{Q,*}$ acts regularly on $Q$ if and only if $D=\dis{Q,*}$. This last condition holds precisely when $D$ is closed under composition, which can be checked in $\ACz$.

By Theorem~\ref{thm:SV}, it remains to write down the multiplication table for $\dis{Q,*}$. Now for $z = (x*(e\ldiv (y*e)))\rdiv e$, we have that $\lambda_{x}\lambda_{e}^{-1} \cdot \lambda_{y}\lambda_{e}^{-1} = \lambda_{z}\lambda_{e}^{-1}$. Define $x\cdot y = (x*(e\ldiv (y*e)))\rdiv e$. We may thus, in $\textsf{AC}^{0}$, write down the group $(Q,\cdot)$, which is isomorphic to $\dis{Q, *}$ (Proposition~\ref{prop:IsotopicGroup}). The result now follows.
\end{proof}

\begin{proposition} \label{prop:RecognizeCentral}
Let $(Q,*)$ be a quasigroup. We can decide in $\textsf{AC}^{0}$ whether $(Q,*)$ is central and construct its representation $\mathcal Q(Q,+,\phi,\psi,c)$.
\end{proposition}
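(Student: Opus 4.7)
The plan is to reduce the recognition of centrality to a short sequence of $\textsf{AC}^0$-verifiable conditions, using Lemma~\ref{lem:DisQ}, Lemma~\ref{lem:CommutingAutomorphisms}, Albert's theorem and Lemma~\ref{lem:CentralQuasigroupIsotopy} as black boxes. The key observation is that a central quasigroup is isotopic to its underlying Abelian group (Lemma~\ref{lem:CentralQuasigroupIsotopy}), and by Proposition~\ref{prop:IsotopicGroup} together with Albert's theorem, this Abelian group must be isomorphic to $\dis{Q,*}$; so we can first lay our hands on a candidate Abelian structure on $Q$ and then verify the central form against it.

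Concretely, first I would run the procedure of Lemma~\ref{lem:DisQ} in $\textsf{AC}^0$: either it certifies that $\dis{Q,*}$ does not act regularly on $Q$---in which case $(Q,*)$ is isotopic to no group, hence not central---or it writes down the multiplication table of a group $(Q,\cdot)\cong\dis{Q,*}$ on the set $Q$, whose identity is an arbitrary fixed $e\in Q$. Next I would check in $\textsf{AC}^0$ whether $(Q,\cdot)$ is Abelian; by Albert's theorem, if it is not, then no Abelian group is isotopic to $(Q,*)$ and again $(Q,*)$ is not central. Assuming $(Q,\cdot)$ is Abelian, set $+\,=\,\cdot$ and $0=e$, then use the formulas of Lemma~\ref{lem:CommutingAutomorphisms} to define $c := 0*0$, $\phi(x) := (x*0)-(0*0)$ and $\psi(y) := (0*y)-(0*0)$, all in $\textsf{AC}^0$. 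Finally, verify in $\textsf{AC}^0$ that $\phi$ and $\psi$ are bijections of $Q$ respecting $+$ and that $x*y = \phi(x)+\psi(y)+c$ for every $x,y\in Q$; output $\mathcal Q(Q,+,\phi,\psi,c)$ if the verification passes, and ``not central'' otherwise. All these checks are constant-depth polynomial-size circuit work on top of the Cayley table of $(Q,*)$ and the lookup table for $+$ produced by Lemma~\ref{lem:DisQ}.

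The only delicate point, and thus the main (though routine) obstacle, is correctness in the nontrivial direction: showing that if $(Q,*) = \mathcal Q(Q,+_0,\phi_0,\psi_0,c_0)$ for \emph{some} Abelian $+_0$ and automorphisms $\phi_0,\psi_0$, then our algorithm does not get confused by having constructed a \emph{different} Abelian structure $(Q,\cdot)$ on $Q$. A short computation of $\lambda_x\lambda_y^{-1}$ inside the assumed representation shows that $\dis{Q,*}$ acts regularly and is Abelian, so the first two checks pass; moreover the group $(Q,\cdot)$ recovered from Proposition~\ref{prop:IsotopicGroup} turns out to be the shifted copy $x\cdot y = x+_0 y -_0 e$ of $(Q,+_0)$, and a direct substitution shows $(Q,*)$ is also a central quasigroup over $(Q,\cdot)$ for appropriately shifted $(\phi',\psi',c')$. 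Lemma~\ref{lem:CommutingAutomorphisms} then forces our $(\phi,\psi,c)$ to coincide with $(\phi',\psi',c')$, so the last verification step succeeds, completing the proof.
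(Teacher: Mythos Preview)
Your proposal is correct and follows essentially the same route as the paper: use Lemma~\ref{lem:DisQ} to test regularity of $\dis{Q,*}$ and build the candidate group $(Q,\cdot)$, check commutativity, extract $(\phi,\psi,c)$ via the formulas of Lemma~\ref{lem:CommutingAutomorphisms}, and verify the central identity pointwise. The one place you go beyond the paper is in your final paragraph, where you justify that if $(Q,*)$ is central over some $(Q,+_0)$ then it is also central over the recovered $(Q,\cdot)$; the paper's proof asserts this step implicitly (via Lemma~\ref{lem:CentralQuasigroupIsotopy}) without spelling out the shift $x\cdot y = x+_0 y-_0 e$ and the corresponding adjustment of $(\phi,\psi,c)$, so your treatment of that point is in fact more careful.
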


\begin{proof}
We first use Lemma~\ref{lem:DisQ} to check, in $\ACz$, if $(Q,*)$ is isotopic to a group. If so, Lemma~\ref{lem:DisQ} allows us to, in $\ACz$, construct the multiplication table for the underlying group, which we denote $(Q, \cdot)$. We note that $(Q,\cdot)$ is isomorphic to $\mathrm{Dis}(Q,*)$. Now we may, in $\ACz$, verify that $(Q, \cdot)$ is Abelian. By Lemma~\ref{lem:CentralQuasigroupIsotopy}, every central quasigroup $(Q,*)$ is isotopic to $\dis{Q,*}$. Thus, if $(Q, \cdot)$ is not Abelian, then $(Q, *)$ is not central, and so we reject.

Thus, at this stage, suppose that $(Q, \cdot)$ is Abelian, and we refer to the group operation as addition ($+$) rather than multiplication ($\cdot$). By Lemma~\ref{lem:CommutingAutomorphisms}, we test in $\textsf{AC}^{0}$ whether the corresponding $c$ and automorphisms $\phi, \psi \in \Aut(Q, +)$, as prescribed by Theorem~\ref{thm:SV}, exist. We may then test, in $\ACz$, whether for all $x, y \in Q$,
$x * y = \phi(x) + \psi(y) + c$, as prescribed by Definition~\ref{def:CentralQuasigroup}. The result now follows. 
\end{proof}

We now turn to proving Theorem~\ref{thm:CentralQuasigroups}.

\begin{proof}[Proof of Theorem~\ref{thm:CentralQuasigroups}]
By Proposition~\ref{prop:RecognizeCentral}, we can decide in $\textsf{AC}^{0}$ whether both $G_1, G_2$ are central; and if so, construct the Abelian groups $(G_i,+)\cong\dis{G_{i}, *_i}$ and their representations $\mathcal{Q}_{i}(G_i, +, \phi_i, \psi_i, c_i)\cong G_i$. Now using Proposition~\ref{prop:DecomposeAbelianGroup}, we may in $\textsf{AC}^{1}$ compute a basis $b_{i1}, \ldots, b_{ik}$ for $\dis{G_i, *_i}$, and sort the elements so that $|b_{ij}| \leq |b_{i,j+1}|$ for all $i, j$. Note that $k \leq \log |G_i|$. In particular, this allows us to decide if $\dis{G_1, *_i} \cong \dis{G_2, *_2}$ are isomorphic; and if so, construct an isomorphism $\tau : \dis{G_1, *_1} \cong \dis{G_2, *_2}$. Let $(G, +)$ be the Abelian group such that $(G, +) \cong \dis{G_i, *_i}$.

Now let:
\begin{align*}
&\bar{x} := (c_1, \phi_{1}(b_{11}), \ldots, \phi_{1}(b_{1k}), \psi_{1}(b_{11}), \ldots, \psi_{1}(b_{1k})), \\
&\bar{y} := (c_2, \phi_{2}(b_{21}), \ldots, \phi_{2}(b_{2k}), \psi_{2}(b_{21}), \ldots, \psi_{2}(b_{2k})).
\end{align*}

For $u \in (G, +)$ (precisely, we take $u \in \dis{G_1, *_1}$), let:
\[
\bar{x}_{u} = (c_1 + u, \phi_{1}(b_{11}), \ldots, \phi_{1}(b_{1k}), \psi_{1}(b_{11}), \ldots, \psi_{1}(b_{1k})).
\]

By Theorem~\ref{thm:SV}, it remains to decide whether there exists some $u \in \text{Im}(1-\phi_{1} - \psi_{1})$ and an automorphism $\gamma \in \Aut(G, +)$ such that $\bar{x}_{u}^{\gamma} = \bar{y}$. Thus, we have reduced to the \algprobm{Pointwise Transporter} problem. We will in fact, show how to obtain all such $\gamma \in \Aut(G, +)$. Let $\text{Trans}(\bar{x}_{u}, \bar{y})$ be the set of $\gamma \in \Aut(G,+)$ such that $\bar{x}_{u}^{\gamma} = \bar{y}$. 

First, as we are given $\phi_{1}, \psi_{1}$, we may in $\ACz$ compute $\text{Im}(1-\phi_{1} - \psi_{1})$ by evaluating $(1-\phi_{1}-\psi_{1})(u)$ for all $u \in (G,+)$. Now in $\textsf{NC}$, we may write down a set of $\poly(n)$ generators for $\Aut(G, +)$ \cite{Birkhoff}. Note that while Birkhoff \cite{Birkhoff} prescribes matrix representations for the generators, we may easily obtain permutations by considering the actions of the matrices on the basis $b_{11}, \ldots, b_{1k}$ for $(G, +)$.

As we are considering the \algprobm{Pointwise Transporter} problem on $2k+1 \leq 2\lceil \log(n) \rceil + 1$ points, we have by Lemma~\ref{lem:PointwiseTransporter} the requisite instances of \algprobm{Pointwise Transporter} are $\textsf{NC}$-computable. If there exists some $u \in \text{Im}(1-\phi_{1}-\psi_{1})$ such that $\text{Trans}(\bar{x}_{u}, \bar{y}) \neq \emptyset$, then our two quasigroups $G_{1}$ and $G_{2}$ are isomorphic. Otherwise, we have that $G_{1} \not \cong G_{2}$.

In order to recover the set of automorphisms $\gamma \in \Aut(G,+)$ as in Theorem~\ref{thm:SV}, we return:
\[
\bigcup_{u \in \text{Im}(1-\phi_{1}-\psi_{1})} \text{Trans}(\bar{x}_{u}, \bar{y}).
\]
The result now follows.
\end{proof}

\section{Conclusion}
In this paper, we investigated the Weisfeiler--Leman dimension of groups that admit (unique) direct product decompositions, where each indecomposable direct factor is $O(1)$-generated, and either perfect or centerless. We showed that the count-free $(O(1), O(\log \log n))$-WL Version II and counting $(O(1), O(1))$-WL Version II algorithms identify every such group. Consequently, we obtained upper bounds of $\textsf{L}$ for isomorphism testing between a group in this family, and an arbitrary group.

We also exhibited an $\textsf{AC}^{3}$ canonization procedure for the class $\mathcal{D}$ of groups where each indecomposable direct factor is $O(1)$-generated.

Finally, we obtained bounds of $\textsf{NC}$ for isomorphism testing of central quasigroups. Previously, only the trivial bound of $n^{\log(n) +O(1)}$-time was known.

Our work leaves several open questions.

\begin{question} \label{q:Q1}
Consider the subclass $\mathcal{D}'$ of $\mathcal{D}$ where each group in $\mathcal{D}'$ has a unique direct product decomposition. Does $\mathcal{D}'$ have bounded count-free Weisfeiler--Leman dimension?
\end{question}

Note that $\mathcal{D}$ includes the class of all Abelian groups. Grochow and Levet \cite{GrochowLevetWL} showed that the class of all Abelian groups has count-free WL dimension $\Theta(\log n)$.

\begin{question} \label{q:Q2}
Is $\mathcal{D}$ identified by the counting $(O(1), O(1))$-WL Version II?
\end{question}

Grochow and Levet \cite{GrochowLevetWL} showed that $(O(1), O(\log n))$-WL Version II identifies $\mathcal{D}$. It is unclear how to approach this, even in the restricted case when each group admits a unique fully-refined direct product decomposition. 

\begin{question} \label{q:Q3}
In the setting of groups specified by their multiplication tables, can we compute fully-refined direct product decompositions in $\textsf{FL}$?    
\end{question}

We are unable to resolve Question~\ref{q:Q3} even for Abelian groups-- see Proposition~\ref{prop:DecomposeAbelianGroup}, which yields bounds of $\textsf{AC}^{1}$.

\begin{question}
Can we compute a fully-refined direct product decomposition for quasigroups, specified by their multiplication tables, in polynomial-time?    
\end{question}

\begin{question}
Does isomorphism testing of central quasigroups belong to $\LogSpace$?    
\end{question}

Note that isomorphism testing of Abelian groups belongs to $\LogSpace$; this bound has been improved to \\ $\forall^{\log \log n}\textsf{MAC}^{0}(\DTISPpll)$, which is a proper subclass of $\textsf{L}$ \cite{CGLWISSAC}. As central quasigroups are precisely the Abelian objects (in the sense of universal algebra) in the variety of quasigroups \cite{Szendrei}, $\textsf{L}$ is a natural target.

\section*{Acknowledgements}

ML thanks Joshua A. Grochow for helpful discussions. DJ was partially supported by the Department of Computer Science at the College of Charleston, SURF
Grant SU2024-06 from the College of Charleston, and ML startup funds. ML was partially supported by SURF
Grant SU2024-06 from the College of Charleston. PV supported by the Simons Foundation Mathematics and Physical Sciences Collaboration Grant for Mathematicians no.~855097. BW was partially supported by a research assistantship from the Department of Computer Science at the College of Charleston, as well as ML startup funds.

\bibliographystyle{alphaurl}
\bibliography{references}

\end{document}